\documentstyle[twoside,epsf]{article}

\catcode`\@=11
\long\def\@makefntext#1{
\protect\noindent \hbox to 3.2pt {\hskip-.9pt
$^{{\eightrm\@thefnmark}}$\hfil}#1\hfill}		

\def\@makefnmark{\hbox to 0pt{$^{\@thefnmark}$\hss}}	
	
\def\ps@myheadings{\let\@mkboth\@gobbletwo
\def\@oddhead{\hbox{}
\rightmark\hfil\eightrm\thepage}
\def\@oddfoot{}\def\@evenhead{\eightrm\thepage\hfil
\leftmark\hbox{}}\def\@evenfoot{}
\def\sectionmark##1{}\def\subsectionmark##1{}}



\oddsidemargin=\evensidemargin
\addtolength{\oddsidemargin}{-30pt}
\addtolength{\evensidemargin}{-30pt}


\newcounter{sectionc}\newcounter{subsectionc}\newcounter{subsubsectionc}
\renewcommand{\section}[1] {\vspace{12pt}\addtocounter{sectionc}{1}
\setcounter{subsectionc}{0}\setcounter{subsubsectionc}{0}\noindent
	{\tenbf\thesectionc. #1}\par\vspace{5pt}}
\renewcommand{\subsection}[1] {\vspace{12pt}\addtocounter{subsectionc}{1}
\setcounter{subsubsectionc}{0}\noindent
{\bf\thesectionc.\thesubsectionc. {\kern1pt \bfit #1}}\par\vspace{5pt}}
\renewcommand{\subsubsection}[1] {\vspace{12pt}\addtocounter{subsubsectionc}{1}
	\noindent{\tenrm\thesectionc.\thesubsectionc.\thesubsubsectionc.
	{\kern1pt \tenit #1}}\par\vspace{5pt}}
\newcommand{\nonumsection}[1] {\vspace{12pt}\noindent{\tenbf #1}
	\par\vspace{5pt}}

\newcounter{appendixc}
\newcounter{subappendixc}[appendixc]
\newcounter{subsubappendixc}[subappendixc]
\renewcommand{\thesubappendixc}{\Alph{appendixc}.\arabic{subappendixc}}
\renewcommand{\thesubsubappendixc}
	{\Alph{appendixc}.\arabic{subappendixc}.\arabic{subsubappendixc}}

\renewcommand{\appendix}[1] {\vspace{12pt}
        \refstepcounter{appendixc}
        \setcounter{figure}{0}
        \setcounter{table}{0}
        \setcounter{lemma}{0}
        \setcounter{theorem}{0}
        \setcounter{corollary}{0}
        \setcounter{definition}{0}
        \setcounter{equation}{0}
        \renewcommand{\thefigure}{\Alph{appendixc}.\arabic{figure}}
        \renewcommand{\thetable}{\Alph{appendixc}.\arabic{table}}
        \renewcommand{\theappendixc}{\Alph{appendixc}}
        \renewcommand{\thelemma}{\Alph{appendixc}.\arabic{lemma}}
        \renewcommand{\thetheorem}{\Alph{appendixc}.\arabic{theorem}}
        \renewcommand{\thedefinition}{\Alph{appendixc}.\arabic{definition}}
        \renewcommand{\thecorollary}{\Alph{appendixc}.\arabic{corollary}}
        \renewcommand{\theequation}{\Alph{appendixc}.\arabic{equation}}
        \noindent{\tenbf Appendix \theappendixc #1}\par\vspace{5pt}}
\newcommand{\subappendix}[1] {\vspace{12pt}
        \refstepcounter{subappendixc}
        \noindent{\bf Appendix \thesubappendixc. {\kern1pt \bfit #1}}
	\par\vspace{5pt}}
\newcommand{\subsubappendix}[1] {\vspace{12pt}
        \refstepcounter{subsubappendixc}
        \noindent{\rm Appendix \thesubsubappendixc. {\kern1pt \tenit #1}}
	\par\vspace{5pt}}

\topsep=0in\parsep=0in\itemsep=0in
\parindent=15pt

\newcommand{\textlineskip}{\baselineskip=13pt}
\newcommand{\smalllineskip}{\baselineskip=10pt}





\def\abstracts#1#2#3{{
	\centering{\begin{minipage}{4.5in}\footnotesize\baselineskip=10pt
	\parindent=0pt #1\par
	\parindent=15pt #2\par
	\parindent=15pt #3
	\end{minipage}}\par}}

\def\keywords#1{{
	\centering{\begin{minipage}{4.5in}\footnotesize\baselineskip=10pt
	{\footnotesize\it Keywords}\/: #1
	 \end{minipage}}\par}}


\renewenvironment{thebibliography}[1]
        {\frenchspacing
	 \ninerm\baselineskip=11pt
         \begin{list}{\arabic{enumi}.}
        {\usecounter{enumi}\setlength{\parsep}{0pt}
	 \setlength{\leftmargin 12.7pt}{\rightmargin 0pt}
         \setlength{\itemsep}{0pt} \settowidth
	{\labelwidth}{#1.}\sloppy}}{\end{list}}

\newcounter{itemlistc}
\newcounter{romanlistc}
\newcounter{alphlistc}
\newcounter{arabiclistc}

\newcommand{\fcaption}[1]{
        \refstepcounter{figure}
        \setbox\@tempboxa = \hbox{\footnotesize Fig.~\thefigure. #1}
        \ifdim \wd\@tempboxa > 5in
           {\begin{center}
        \parbox{5in}{\footnotesize\smalllineskip Fig.~\thefigure. #1}
            \end{center}}
        \else
             {\begin{center}
             {\footnotesize Fig.~\thefigure. #1}
              \end{center}}
        \fi}

\newcommand{\tcaption}[1]{
        \refstepcounter{table}
        \setbox\@tempboxa = \hbox{\footnotesize Table~\thetable. #1}
        \ifdim \wd\@tempboxa > 5in
           {\begin{center}
        \parbox{5in}{\footnotesize\smalllineskip Table~\thetable. #1}
            \end{center}}
        \else
             {\begin{center}
             {\footnotesize Table~\thetable. #1}
              \end{center}}
        \fi}

\def\pmb#1{\setbox0=\hbox{#1}
	\kern-.025em\copy0\kern-\wd0
	\kern.05em\copy0\kern-\wd0
	\kern-.025em\raise.0433em\box0}


\def\fnt#1#2{\footnotetext{\kern-.3em
	{$^{\mbox{\scriptsize #1}}$}{#2}}}

\def\fpage#1{\begingroup
\voffset=.3in
\thispagestyle{empty}\begin{table}[b]\centerline{\footnotesize #1}
	\end{table}\endgroup}

\def\runninghead#1#2{\pagestyle{myheadings}
\markboth{{\protect\footnotesize\it{\quad #1}}\hfill}
{\hfill{\protect\footnotesize\it{#2\quad}}}}
\headsep=15pt

\font\tenrm=cmr10
\font\tenit=cmti10
\font\tenbf=cmbx10
\font\bfit=cmbxti10 at 10pt
\font\ninerm=cmr9

\font\eightrm=cmr8

\newtheorem{theorem}{\indent Theorem}
\newtheorem{lemma}{Lemma}
\newtheorem{corollary}{Corollary}
\newtheorem{proposition}{Proposition}
\newtheorem{definition}{Definition}
\newtheorem{remark}{Remark}
\newtheorem{defi/prop}{Definition/Proposition}


\def\FigName{figure}%
\newbox\captionbox
\long\def\@makecaption#1#2{%
  \ifx\FigName\@captype
    \vskip\abovecaptionskip
    \setbox\tempbox\hbox{{\figurecaptionfont #1\hskip1em #2}}
	\ifdim\wd\tempbox< 28pc
	\centerline{\box\tempbox}
	\else
	{\figurecaptionfont #1\hskip1em #2\par}
\fi\else
  	\setbox\tempbox\hbox{{\tablecaptionfont #1\hskip1em #2}}
 	\ifdim\wd\tempbox< 28pc
	\centerline{\box\tempbox}
	\else
	{\tablecaptionfont #1\hskip1em #2\par}%
	\fi
 \vskip\belowcaptionskip
 \fi}
\InputIfFileExists{psfig.sty}
{\typeout{^^Jpsfig.sty inputed...ok}}{\typeout{^^JWarning: psfig.sty could be be found.^^J}}
\InputIfFileExists{epsfsafe.tex}
{\typeout{^^Jepsfsafe.tex inputed...ok}}
			{\typeout{^^JWarning: epsfsafe.tex could not be found.^^J}}
\InputIfFileExists{epsfig.sty}
{\typeout{^^Jepsfig.sty inputed...ok}}{\typeout{^^JWarning: epsfig.sty could not be found.^^J}}
\InputIfFileExists{epsf.sty}
{\typeout{^^Jepsf.sty inputed...ok}}{\typeout{^^JWarning: epsf.sty could not be found.^^J}}%
%
\def\fps@figure{tbp}
\def\ftype@figure{1}
\def\ext@figure{lof}
\def\fnum@figure{Fig.\ \thefigure}
%
%
%
\textwidth=5.6truein
\textheight=8.0truein

\def\qed{\hbox{${\vcenter{\vbox{	          
   \hrule height 0.4pt\hbox{\vrule width 0.4pt height 6pt
   \kern5pt\vrule width 0.4pt}\hrule height 0.4pt}}}$}}




\newcommand{\e}{\varepsilon}

\newcommand{\cH}{{\mathcal{H}}}

\newcommand{\vol}{{\mathrm{vol}}}
\newcommand{\vrad}{{\mathrm{vrad}}}
\newcommand{\conv}{{\mathrm{conv}}}

\newcommand{\card}{{\mathrm{card}}}
\newcommand{\spec}{{\mathrm{spec}}}
\newcommand{\diag}{{\mathrm{diag}}}

\newcommand{\E}{{\mathbf{E}}}

\newcommand{\scalar}[2]{\langle #1 , #2\rangle}
\newcommand{\braket}[2]{\langle #1 | #2\rangle}
\newcommand{\ketbra}[2]{| #1 \rangle \langle #2 |}
\newcommand{\bra}[1]{\langle #1 |}
\newcommand{\ket}[1]{| #1 \rangle}

\begin{document}
\setlength{\textheight}{8.0truein}    

\runninghead{Locally restricted measurements on a multipartite quantum system: data hiding is generic}
            {Guillaume Aubrun and C\'{e}cilia Lancien}

\normalsize\textlineskip
\thispagestyle{empty}
\setcounter{page}{1}


\vspace*{0.88truein}

\fpage{1}
\centerline{\bf LOCALLY RESTRICTED MEASUREMENTS }
\vspace*{0.035truein}
\centerline{\bf ON A MULTIPARTITE QUANTUM SYSTEM: }
\vspace*{0.035truein}
\centerline{\bf  DATA HIDING IS GENERIC}
\vspace*{0.37truein}
\centerline{\footnotesize
GUILLAUME AUBRUN\footnote{aubrun@math.univ-lyon1.fr}}
\vspace*{0.015truein}
\centerline{\footnotesize\it Institut Camille Jordan, Universit\'{e} Claude Bernard Lyon 1}
\baselineskip=10pt
\centerline{\footnotesize\it 43 boulevard du 11 novembre 1918, 69622 Villeurbanne Cedex, France}
\vspace*{10pt}
\centerline{\footnotesize C\'{E}CILIA LANCIEN\footnote{lancien@math.univ-lyon1.fr}}
\vspace*{0.015truein}
\centerline{\footnotesize\it Institut Camille Jordan, Universit\'{e} Claude Bernard Lyon 1}
\baselineskip=10pt
\centerline{\footnotesize\it 43 boulevard du 11 novembre 1918, 69622 Villeurbanne Cedex, France}
\vspace*{10pt}
\vspace*{0.015truein}
\centerline{\footnotesize\it F\'{\i}sica Te\`{o}rica: Informaci\'{o} i Fenomens Qu\`{a}ntics, Universitat Aut\`{o}noma de Barcelona}
\baselineskip=10pt
\centerline{\footnotesize\it ES-08193 Bellaterra (Barcelona), Spain}

\vspace*{0.21truein}
\abstracts{ We study the distinguishability norms associated to families of locally restricted POVMs on multipartite systems.
These norms (introduced by Matthews, Wehner and Winter) quantify
how quantum measurements, subject to locality constraints, perform in the task of discriminating two multipartite quantum states.
We mainly address the following question regarding the behaviour of these
distinguishability norms in the high-dimensional regime: On a bipartite space, what are the relative strengths of standard classes of
locally restricted measurements? We show that the class of PPT measurements typically
performs almost as well as the class of all measurements whereas restricting to local measurements and classical communication,
or even just to separable measurements, implies a substantial loss. We also provide examples of state pairs which can be perfectly
distinguished by local measurements if (one-way) classical communication is allowed between the parties, but very poorly without it.
Finally, we study how many POVMs are needed to distinguish almost perfectly any pair of states on ${\mathbf{C}}^d$,
showing that the answer is $\exp(\Theta(d^2))$.}{}{}
\vspace*{10pt}

\keywords{Distinguishability norms, Locally restricted measurements, Data hiding}
\vspace*{3pt}



\section{Introduction}

How quantum measurements can help us make decisions? We consider a basic problem, the task of distinguishing two quantum states,
where this question has a neat answer. Given a POVM (Positive Operator-Valued Measure) $\mathrm{M}$ on ${\mathbf{C}}^d$, Matthews, Wehner and Winter \cite{MWW}
introduced its distinguishability norm $\|\cdot\|_{{\mathrm{M}}}$, which has the property that given a pair $(\rho,\sigma)$ of quantum states,
$\|\rho-\sigma\|_{{\mathrm{M}}}$ is the bias observed when the POVM ${\mathrm{M}}$ is used optimally to distinguish $\rho$ from $\sigma$ (the larger
is the norm, the more efficient is the POVM).
More generally, we can associate to a family of POVMs ${\mathbf{\underline{M}}}$ the norm $\|\cdot\|_{{\mathbf{\underline{M}}}} = \sup \{ \|\cdot\|_{{\mathrm{M}}} \ : \
{\mathrm{M}} \in {\mathbf{\underline{M}}} \}$ which corresponds to the bias achieved by the best POVM from the family.

In this paper, we study these norms from a functional-analytic point of view and
are mostly interested in the asymptotic regime, when the dimension of the underlying Hilbert space tends to infinity.

\subsection{How many essentially distinct POVMs are there?}

The (infinite) family ${\mathbf{\underline{ALL}}}$ of all POVMs on ${\mathbf{C}}^d$ achieves maximal efficiency in the distinguishability task, and in some sense gives us perfect information. It was indeed one of
the seminal observations by Holevo \cite{Holevo} and Helstrom \cite{Helstrom} that $\|\cdot\|_{{\mathbf{\underline{ALL}}}} = \|\cdot\|_1$, so that two orthogonal quantum states could be perfectly
distinguished (i.e. with a zero probability of error) by a suitable measurement.
But how ``complex'' is the class ${\mathbf{\underline{ALL}}}$? What about finite subfamilies? How many POVMs are needed to obtain near-to-optimal efficiency?
We show (Theorem \ref{theorem:approximation-of-ALL} in Section 2.2)
that $\exp(\Theta(d^2))$ different POVMs are necessary (and sufficient) to obtain approximation within a constant factor. The concept of mean width (from convex geometry) plays an important
role in our proof,
which is detailed in Section 3.

\subsection{Locally restricted POVMs on a multipartite quantum system}

On a multipartite quantum system, experimenters usually cannot implement any global observable. For instance, they may be only able to perform
quantum measurements on their own subsystem
(and then perhaps to communicate the results classically). A natural question in such situation is thus to quantify the relative strengths of
several classes of measurements, restricted by these locality constraints, such as LOCC,
separable or PPT measurements (precise definitions appear in Section 2.3).

Let us summarize the main result in this paper (restricting here to the bipartite case for the sake of clarity). We consider typical discrimination tasks, in the following sense.
Let $\rho$ and $\sigma$ be states chosen independently and uniformly at random within
the set of all states on ${\mathbf{C}}^d\otimes{\mathbf{C}}^d$. We show that our ability to distinguish $\rho$ from $\sigma$ depends in an essential way on the class
of the allowed measurements.
Indeed, with high probability, $\|\rho-\sigma\|_{{\mathbf{\underline{PPT}}}}$ is of order $1$ (as $\|\rho-\sigma\|_{{\mathbf{\underline{ALL}}}}$) while
$\|\rho-\sigma\|_{{\mathbf{\underline{SEP}}}}$, $\|\rho-\sigma\|_{{\mathbf{\underline{LOCC}}}}$ and $\|\rho-\sigma\|_{{\mathbf{\underline{LOCC}^\rightarrow}}}$ are of order $1/\sqrt{d}$.
This shows that data hiding is generic: typically, high-dimensional quantum states cannot be distinguished locally even
though they look different globally.

These results appear as Theorem \ref{theorem:typical-states} in Section 2.4. The proofs
are detailed in Section 5. They rely, as a first essential step, on estimates
on the volume radius and the mean width of the (polar of) the unit balls associated to the norms
$\|\cdot\|_{{\mathbf{\underline{PPT}}}}$, $\|\cdot\|_{{\mathbf{\underline{SEP}}}}$ and $\|\cdot\|_{{\mathbf{\underline{LOCC}}}}$ (Theorem \ref{theorem:vrad-w-PPT-SEP}).
We gathered tools and results from convex geometry in an Appendix. The use of concentration of measure and random matrix theory
(Proposition \ref{proposition:Delta-vs-difference-of-states}) then allows to pass from these global estimates to the estimates in a typical
direction quoted above. In Section 6 corollaries on quantum data hiding are derived and detailed, both in the bipartite
and in the generalized multipartite case.

We also provide examples of random bipartite states $\rho,\sigma$ on ${\mathbf{C}}^d\otimes{\mathbf{C}}^d$ which are such that $\|\rho-\sigma\|_{{\mathbf{\underline{LOCC}^\rightarrow}}}=2$ while, with high probability, $\|\rho-\sigma\|_{{\mathbf{\underline{LO}}}}$ is of order $1/\sqrt{d}$. The precise result appears as Theorem \ref{theorem:LO-vs-LOCC} in Section 2.4 and is proved in Section 4.

\subsection{Notation}

We denote by $\cH({\mathbf{C}}^d)$ the set of Hermitian operators on ${\mathbf{C}}^d$, and by $\cH_+({\mathbf{C}}^d)$ the subset of positive operators. We denote by $\|\cdot\|_1$ the trace class norm,
by $\|\cdot\|_{\infty}$ the operator norm
and by $\|\cdot\|_{2}$ the Hilbert--Schmidt norm. When $A,B$ are self-adjoint matrices, we denote by $[A,B]$ the order interval, i.e. the set of self-adjoint
matrices $C$ such that both $C-A$ and $B-C$ are nonnegative. In particular, $[-{\mathrm{Id}},{\mathrm{Id}}]$ is the self-adjoint part of the unit ball for $\|\cdot\|_{\infty}$.
We also denote by $\|\cdot\|_2$ the Euclidean norm on ${\mathbf{R}}^n$ or ${\mathbf{C}}^n$.

The letters $C,c,c_0,\dots$ denote numerical constants, independent from any other parameters such as the dimension. The value of these constants may change
from occurrence to occurrence. When $A$ and $B$ are quantities depending on the dimension, the notation $A \,\preceq\, B$ means that there is a constant $C$ such that
$A \leq CB$. The notation $A \simeq B$ means both $A \,\preceq\, B$ and $B \,\preceq\, A$, and $A \sim B$ means that the ratio $A/B$ tends to $1$ when the dimension tends
to infinity.

Extra notation, concepts and results from convex geometry are introduced in Appendix \ref{ap:convex-geometry}.

\section{Distinguishing quantum states: survey of our results}
\label{sec:POVM-geometry}

\subsection{General setting}
\label{sec:measurement-norm}

In this section, we gather some basic information about norms associated to POVMs, and refer to \cite{MWW} for more details and proofs.
A POVM (Positive Operator-Valued Measure) on ${\mathbf{C}}^d$ is a finite family ${\mathrm{M}}=(M_i)_{i\in I}$ of positive operators on ${\mathbf{C}}^d$ such that
\[ \sum_{i\in I} M_i = {\mathrm{Id}} . \]
One could consider also continuous POVMs, where the finite sum is replaced by an integral. However this is not necessary, since continuous POVMs appear as limit
cases of discrete POVMs which we consider here (see e.g. \cite{AL}).

Given a POVM ${\mathrm{M}}=(M_i)_{i\in I}$ on ${\mathbf{C}}^d$, and denoting by $\{|i\rangle,\ i\in I\}$ an orthonormal basis of ${\mathbf{C}}^{\card(I)}$,
we may associate to
${\mathrm{M}}$ the CPTP (Completely Positive and Trace-Preserving) map
\[ {\mathcal{M}} : \Delta \in{\mathcal{H}}({\mathbf{C}}^d)\mapsto\sum_{i\in I}\big({\mathrm{Tr}} M_i\Delta\big)|i\rangle\langle i|\in{\mathcal{H}}\big({\mathbf{C}}^{\card(I)}\big). \]
The measurement (semi-)norm associated to ${\mathrm{M}}$ is then defined for $\Delta\in{\mathcal{H}}({\mathbf{C}}^d)$ as
\[ \|\Delta\|_{{\mathrm{M}}}:=\|{{\mathcal{M}}}(\Delta)\|_1 = \sum_{i\in I} |{\mathrm{Tr}} M_i\Delta| . \]
Note that for any $\Delta\in\cH({\mathbf{C}}^d)$, $\|\Delta\|_{{\mathrm{M}}}\leq\|\Delta\|_1$, with equality if $\Delta\in\cH_+({\mathbf{C}}^d)$.

In general, $\|\cdot\|_{{\mathrm{M}}}$ is a semi-norm, and may vanish on non-zero Hermitians. A necessary and sufficient condition for $\|\cdot\|_{{\mathrm{M}}}$ to be
a norm is that the POVM ${\mathrm{M}}=(M_i)_{i \in I}$ is informationally complete, i.e. that the family of operators $(M_i)_{i\in I}$ spans ${\mathcal{H}}({\mathbf{C}}^d)$ as a linear space.
This especially implies that ${\mathrm{M}}$ has a total number of outcomes satisfying $\card (I) \geq d^2=\mathrm{dim}\ {\mathcal{H}}({\mathbf{C}}^d)$.

We denote by $B_{\|\cdot\|_{{\mathrm{M}}}}$ the unit ball associated to $\|\cdot\|_{{\mathrm{M}}}$, and by $K_{{\mathrm{M}}}$ the polar of $B_{\|\cdot\|_{{\mathrm{M}}}}$ (i.e. the
unit ball associated to the norm dual to $\|\cdot\|_{{\mathrm{M}}}$). In
other words, the support function of $K_{{\mathrm{M}}}$ is defined for $\Delta\in{\mathcal{H}}({\mathbf{C}}^d)$ as
\begin{equation} \label{eq:definition-K} h_{K_{{\mathrm{M}}}}(\Delta) = \|\Delta\|_{{\mathrm{M}}}. \end{equation}
Precise definitions of these concepts are given in Appendix \ref{ap:convex-geometry}.

More generally, one can define the ``measurement'' or ``distinguishability'' norm associated to a whole set ${\mathbf{\underline{M}}}$ of POVMs on ${\mathbf{C}}^d$ as
\[ \|\cdot\|_{{\mathbf{\underline{M}}}}:=\sup_{{\mathrm{M}}\in{\mathbf{\underline{M}}}}\|\cdot\|_{{\mathrm{M}}}. \]
The corresponding unit ball, and its polar, are
\[ B_{\|\cdot\|_{{\mathbf{\underline{M}}}}} = \bigcap_{{\mathrm{M}} \in {\mathbf{\underline{M}}}} B_{\|\cdot\|_{{\mathrm{M}}}} ,\]
\[ K_{{\mathbf{\underline{M}}}} = \conv \left( \bigcup_{{\mathrm{M}} \in {\mathbf{\underline{M}}}} K_{{\mathrm{M}}} \right) .\]

As mentioned earlier on in the Introduction, these measurement norms are related to the task of distinguishing quantum states.
Let us consider the situation where a system (with associated Hilbert space ${\mathbf{C}}^d$) can be either in state $\rho$ or in state $\sigma$, with equal prior
probabilities $\frac{1}{2}$. It is known \cite{Holevo, Helstrom} that a decision process based on the maximum likelihood rule after performing the POVM ${\mathrm{M}}$
on the system yields a probability of error
\[ {\mathbf{P}}_e  =\frac{1}{2}\left(1-\left\|\frac{1}{2} \rho-\frac{1}{2}\sigma\right\|_{{\mathrm{M}}}\right). \]
In this context, the operational interpretation of the quantity $\|\rho-\sigma\|_{{\mathrm{M}}}$ is thus clear (and actually justifies the terminology of ``distinguishability norm''): up to a factor $1/2$, it is nothing else than the bias of the POVM ${\mathrm{M}}$ on the state pair $(\rho,\sigma)$.

Something that is worth pointing out is that, for any set ${\mathbf{\underline{M}}}$ of POVMs on ${\mathbf{C}}^d$, there exists a set ${\mathbf{\widetilde{\underline{M}}}}$ of $2$-outcome POVMs on
${\mathbf{C}}^d$ which is such that $\|\cdot\|_{{\mathbf{\underline{M}}}}=\|\cdot\|_{{\mathbf{\widetilde{\underline{M}}}}}$. It may be explicitly defined as
\[ {\mathbf{\widetilde{\underline{M}}}}:=\left\{\big(M,{\mathrm{Id}}-M\big) \ : \ \exists\ (M_i)_{i\in I}\in{\mathbf{\underline{M}}},\ \exists\ \widetilde{I}\subset I:\
M=\sum_{i\in\widetilde{I}} M_i\right\} .\]
Note then that
\[ K_{{\mathbf{\underline{M}}}}=\conv \big\{2M-{\mathrm{Id}},\ (M,{\mathrm{Id}}-M)\in\mathbf{\widetilde{\underline{M}}}\big\}. \]

\subsection{On the complexity of the class of all POVMs}

Denote by ${\mathbf{\underline{ALL}}}$ the family of all POVMs on ${\mathbf{C}}^d$. As we already noticed, $\|\cdot\|_{{\mathbf{\underline{ALL}}}} = \|\cdot\|_1$ and therefore
$K_{{\mathbf{\underline{ALL}}}}$ equals $[-{\mathrm{Id}},{\mathrm{Id}}]$, which is the unit ball in $\cH({\mathbf{C}}^d)$ for the operator norm.

The family ${\mathbf{\underline{ALL}}}$ is obviously infinite. Since real-life situations can involve only finitely many apparatuses,
it makes sense to ask what must be the cardinality of
a finite family of POVMs ${\mathbf{\underline{M}}}$ which achieves close to perfect discrimination, i.e.
such that the inequality $\|\cdot\|_{{\mathbf{\underline{M}}}} \geq \lambda \|\cdot\|_{{\mathbf{\underline{ALL}}}}$ holds for some $0<\lambda<1$. We show that the answer
is exponential in $d^2$. More precisely, we have the theorem below.

\begin{theorem} \label{theorem:approximation-of-ALL}
There are positive constants $c,C$ such that the following holds
\begin{enumerate}
 \item[(i)] For any dimension $d$ and any $0<\e<1$, there is a family ${\mathbf{\underline{M}}}$ consisting of at most  $\exp(C |\log\e| d^2)$ POVMs on
 ${\mathbf{C}}^d$ such that
 $\|\cdot\|_{{\mathbf{\underline{M}}}} \geq (1-\e) \|\cdot\|_{{\mathbf{\underline{ALL}}}}$.
 \item[(ii)] For any $\e>C/\sqrt{d}$, any family ${\mathbf{\underline{M}}}$ of POVMs on ${\mathbf{C}}^d$ such that $\|\cdot\|_{{\mathbf{\underline{M}}}} \geq \e \|\cdot\|_{{\mathbf{\underline{ALL}}}}$
 contains at least $\exp(c \e^2 d^2)$ POVMs.
 \end{enumerate}
\end{theorem}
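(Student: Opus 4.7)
The plan is to prove (i) by an explicit $\e$-net construction and (ii) by comparing Gaussian mean widths. Both rely on the identifications $K_{\mathbf{ALL}} = [-\Id,\Id]$ and $\|\Delta\|_1 = \sup_{A \in [-\Id,\Id]} \tr(A\Delta)$, the latter being realized by the sign operator $A^\star = \mathrm{sign}(\Delta)$.

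For (i), fix an $\e$-net $\mathcal{N}$ of $[-\Id,\Id]$ in the operator norm. Since $\cH(\C^d)$ has real dimension $d^2$, the standard volumetric argument yields $\card(\mathcal{N}) \leq (C/\e)^{d^2} = \exp(Cd^2 |\log \e|)$. Associate to each $A \in \mathcal{N}$ the two-outcome POVM $\mathrm{M}_A = ((\Id+A)/2,(\Id-A)/2)$; a direct computation gives $\|\Delta\|_{\mathrm{M}_A} = \max(|\tr\Delta|, |\tr(A\Delta)|)$. Given any $\Delta$, choosing $A \in \mathcal{N}$ with $\|A - A^\star\|_\infty \leq \e$ gives $|\tr(A\Delta)| \geq |\tr(A^\star\Delta)| - \e\|\Delta\|_1 = (1-\e)\|\Delta\|_1$, so the family $\mathbf{M} = \{\mathrm{M}_A : A \in \mathcal{N}\}$ satisfies the required estimate.

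For (ii), the hypothesis is equivalent by duality to $K_{\mathbf{M}} \supseteq \e[-\Id,\Id]$, and integrating support functions against a standard Gaussian $G \in \cH(\C^d)$ yields $w_G(K_{\mathbf{M}}) \geq \e \, w_G([-\Id,\Id])$, where $w_G(K) = \E\, h_K(G)$ is the Gaussian mean width. The Wigner semicircle estimate gives $w_G([-\Id,\Id]) = \E \|G\|_1 \simeq d^{3/2}$. The key step is then a uniform upper bound for a single POVM $\mathrm{M} = (M_i)_{i \in I}$:
\[ w_G(K_{\mathrm{M}}) \,=\, \sum_i \E |\tr(M_i G)| \,=\, \sqrt{2/\pi} \sum_i \|M_i\|_2 \,\leq\, \sqrt{2/\pi} \sum_i \tr M_i \,=\, \sqrt{2/\pi}\,d, \]
where $\|M_i\|_2 \leq \tr M_i$ follows from $M_i \geq 0$ (so $\|M_i\|_2 \leq \|M_i\|_1$). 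This bound is independent of the number of outcomes $|I|$, which is a priori unconstrained.

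Combining the two estimates via Gaussian concentration, each $G \mapsto \|G\|_{\mathrm{M}}$ is $\sqrt{d}$-Lipschitz, since $K_{\mathrm{M}} \subseteq [-\Id,\Id]$ has Hilbert--Schmidt radius at most $\sqrt{d}$; a union bound over the $N = \card(\mathbf{M})$ POVMs yields
\[ w_G(K_{\mathbf{M}}) \,=\, \E \sup_{\mathrm{M} \in \mathbf{M}} \|G\|_{\mathrm{M}} \,\leq\, \sup_{\mathrm{M}} w_G(K_{\mathrm{M}}) + C\sqrt{d\log N} \,\lesssim\, d + \sqrt{d\log N}. \]
Matching this against $w_G(K_{\mathbf{M}}) \gtrsim \e d^{3/2}$, the assumption $\e > C/\sqrt{d}$ lets $\e d^{3/2}/2$ absorb the linear term $d$, forcing $\log N \gtrsim \e^2 d^2$. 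The main obstacle I foresee is the uniform mean-width bound $w_G(K_{\mathrm{M}}) \lesssim d$ itself: a naive Cauchy--Schwarz on $\sum_i \|M_i\|_2$ using only $\sum_i \|M_i\|_2^2 \leq d$ introduces a spurious factor $\sqrt{|I|}$, and only the positivity inequality $\|M_i\|_2 \leq \tr M_i$ removes this dependence and makes the concentration/union-bound step effective for POVMs with arbitrarily many outcomes.
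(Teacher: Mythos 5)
Your proof is correct and follows essentially the same route as the paper's: an $\e$-net of $[-\Id,\Id]$ realized by two-outcome POVMs for (i), and for (ii) a comparison of the mean width of $K_{\mathbf{M}}$ (bounded above via a uniform single-POVM estimate plus a Gaussian union bound, and below by $\e\,w([-\Id,\Id])$). The only variation is cosmetic: you obtain the key bound $w_G(K_{\mathrm{M}})\lesssim d$ directly from $\|M_i\|_2\leq\tr M_i$, whereas the paper reduces to rank-$1$ POVMs and integrates over the sphere -- both give the same $\sqrt{2/\pi}\,d$.
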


Theorem \ref{theorem:approximation-of-ALL} is proved in Section 3.
It is clear that the conclusion of (ii) fails for $\e \,\preceq\, 1/\sqrt{d}$, since a single POVM
${\mathrm{M}}$ (e.g. the uniform POVM, see \cite{MWW}) may satisfy $\|\cdot\|_{{\mathrm{M}}} \,\succeq\, \frac{1}{\sqrt{d}} \|\cdot\|_1$.

\subsection{Locally restricted measurements on a bipartite quantum system}
\label{sec:local-POVM}

We now study the class of locally restricted POVMs. We assume that the underlying global Hilbert space is the tensor product of several local
Hilbert spaces. However, for simplicity,
we focus on the case of a bipartite system
in which both parts play the same role and consider the Hilbert space ${\mathcal{H}} = {\mathbf{C}}^d \otimes {\mathbf{C}}^d$.
Several classes of POVMs can be defined on ${\mathcal{H}}$ due to various levels of locality restrictions (consult \cite{MWW} or \cite{LW} for further information).

The most restricted class of POVMs on ${\mathcal{H}}$ is the one of local measurements, whose elements are tensor products of measurements on each of the
sub-systems:
\[{\mathbf{\underline{LO}}}:=\left\{\left(M_{i} \otimes N_{j} \right)_{i\in I,j \in J} \ : \ \ M_i \geq 0, \ N_j \geq 0, \ \sum_{i \in I} M_i = {\mathrm{Id}}_{{\mathbf{C}}^d}, \ \sum_{j \in J} N_j = {\mathrm{Id}}_{{\mathbf{C}}^d} \right\}. \]
This corresponds to the situation where parties are not allowed to communicate.

Then, we consider the class of separable measurements, whose elements are the measurements on ${\mathcal{H}}$ made of tensor operators
\[ {\mathbf{\underline{SEP}}}:=\left\{\left(M_j \otimes N_j\right)_{j\in J} \ : \ \ M_j \geq 0, \ N_j \geq 0, \ \sum_{j\in J}M_j \otimes N_j ={\mathrm{Id}}_{{\mathbf{C}}^d \otimes {\mathbf{C}}^d} \right\}. \]

An important subclass of ${\mathbf{\underline{SEP}}}$ is the class ${\mathbf{\underline{LOCC}}}$ (Local Operations and Classical Communication)
of measurements that can be implemented by a finite sequence of local operations on the sub-systems
followed by classical communication between the parties. This class can be described recursively as the smallest subclass of ${\mathbf{\underline{SEP}}}$ which
contains ${\mathbf{\underline{LO}}}$ and is stable under the following operation: given a POVM ${\mathrm{M}}=(M_i)_{i \in I}$ on ${\mathbf{C}}^d$, and for each $i \in I$ a
${\mathbf{\underline{LOCC}}}$ POVM $\left(N^{(1)}_j \otimes N^{(2)}_j\right)_{j \in J_i}$, the POVMs
\[ \left( M_i^{1/2} N^{(1)}_j M_i^{1/2} \otimes N^{(2)}_j \right)_{i \in I, j \in J_i} \ \ \textnormal{and}
\ \ \left( N^{(1)}_j \otimes M_i^{1/2}  N^{(2)}_j M_i^{1/2} \right)_{i \in I, j \in J_i} \]
are in ${\mathbf{\underline{LOCC}}}$. A subclass of ${\mathbf{\underline{LOCC}}}$ is the class ${\mathbf{\underline{LOCC}^\rightarrow}}$ of one-way LOCC POVMs, which has a simpler description
\[{\mathbf{\underline{LOCC}^\rightarrow}}:=\left\{\left(M_{i} \otimes N_{i,j} \right)_{i\in I,j \in J_i} \ : \ \ M_i \geq 0, \ N_{i,j} \geq 0,
\ \sum_{i \in I} M_i = {\mathrm{Id}}_{{\mathbf{C}}^d}, \ \sum_{j \in J_i} N_{i,j} = {\mathrm{Id}}_{{\mathbf{C}}^d} \right\}. \]

Finally, we consider the class of positive under partial transpose (PPT) measurements, whose elements are the measurements on ${\mathcal{H}}$
made of operators that
remain positive when partially transposed on one sub-system:
\[ {\mathbf{\underline{PPT}}}:=\left\{(M_j)_{j\in J} \ : \ M_j \geq 0,\ M_j^{\Gamma}\geq\mathrm{0},\ \sum_{j\in J}M_j={\mathrm{Id}}_{{\mathbf{C}}^d \otimes {\mathbf{C}}^d} \right\}. \]
The partial transposition $\Gamma$ is defined by its action on tensor
operators on ${\mathcal{H}}$: $(M\otimes N)^{\Gamma}:= M^T \otimes N$, $M^T$ denoting the usual transpose of $M$.
Let us point out that, even though the expression of a matrix transpose depends on the chosen basis, its eigenvalues on the contrary are intrinsic.
Therefore the PPT notion is basis-independent.

It is clear from the definitions that we have the chain of inclusions
\[ {\mathbf{\underline{LO}}}\subset{\mathbf{\underline{LOCC}^\rightarrow}}\subset {\mathbf{\underline{LOCC}}} \subset{\mathbf{\underline{SEP}}}\subset{\mathbf{\underline{PPT}}}\subset {\mathbf{\underline{ALL}}} \]
and consequently the chain of norm inequalities
\begin{equation}\label{eq:hierarchy} \|\cdot\|_{{\mathbf{\underline{LO}}}} \leq \|\cdot\|_{{\mathbf{\underline{LOCC}^\rightarrow}}} \leq \|\cdot\|_{{\mathbf{\underline{LOCC}}}}
\leq \|\cdot\|_{ {\mathbf{\underline{SEP}}}} \leq
\|\cdot\|_{{\mathbf{\underline{PPT}}}} \leq \|\cdot\|_{{\mathbf{\underline{ALL}}}}. \end{equation}

All the inequalities in \ref{eq:hierarchy} are known to be strict provided $d >2$. Note though that the difference between the norms
$\|\cdot\|_{{\mathbf{\underline{LOCC}^\rightarrow}}}$ and $\|\cdot\|_{{\mathbf{\underline{LOCC}}}}$, as well as between $\|\cdot\|_{{\mathbf{\underline{LOCC}}}}$ and $\|\cdot\|_{{\mathbf{\underline{SEP}}}}$,
has been established only very recently (see \cite{CH}).

Here, we are interested in the high-dimensional behaviour of these norms, and the general question we investigate is whether or not the various gaps in the
hierarchy are bounded (independently of the dimension of the subsystems). It is already known that the gap between ${\mathbf{\underline{PPT}}}$ and ${\mathbf{\underline{ALL}}}$ is unbounded,
an important example being provided by the symmetric state $\varsigma$ and the antisymmetric state $\alpha$ on ${\mathbf{C}}^d\otimes{\mathbf{C}}^d$ which satisfy (see e.g. \cite{DVLT2})
\[ \| \varsigma - \alpha \|_{{\mathbf{\underline{ALL}}}} =2  \ \ \textnormal{ while } \ \ \| \varsigma - \alpha \|_{{\mathbf{\underline{PPT}}}} =
\frac{4}{d+1}.\]
We show however (see Theorem \ref{theorem:typical-states}) that such feature is not generic. This is in contrast with the gap between ${\mathbf{\underline{SEP}}}$ and
${\mathbf{\underline{PPT}}}$ which we prove to be generically unbounded (see Theorem \ref{theorem:typical-states}). We also provide examples of unbounded gap between ${\mathbf{\underline{LO}}}$
and ${\mathbf{\underline{LOCC}^\rightarrow}}$ (see Theorem \ref{theorem:LO-vs-LOCC}) but we do not know if this situation is typical. Regarding the gaps between
${\mathbf{\underline{LOCC}^\rightarrow}}$, ${\mathbf{\underline{LOCC}}}$ and ${\mathbf{\underline{SEP}}}$, determining whether they are bounded remains an open problem.

Note also that for states of low rank, the gaps between these norms remain bounded. It follows from the results of \cite{LW} that, for $\Delta \in \cH({\mathbf{C}}^d \otimes {\mathbf{C}}^d)$
of rank $r$,
we have
\[ \|\Delta\|_{\mathbf{\underline{LO}}} \geq \frac{1}{18\sqrt{r}} \|\Delta\|_{{\mathbf{\underline{ALL}}}}. \]

\subsection{Discriminating power of the different classes of locally restricted measurements}
\label{sec:local-POVM-statement}

Our main result compares the efficiency of the classes ${\mathbf{\underline{LOCC}^\rightarrow}}$, ${\mathbf{\underline{LOCC}}}$, ${\mathbf{\underline{SEP}}}$, ${\mathbf{\underline{PPT}}}$ and ${\mathbf{\underline{ALL}}}$ to perform a typical discrimination task.
Here ``typical'' means the following: we consider the problem of distinguishing $\rho$ from $\sigma$, where $\rho$ and $\sigma$ are random states, chosen independently at random with respect to the
uniform measure (i.e. the
Lebesgue measure induced by the Hilbert--Schmidt distance) on the set of all states.
It turns out that the PPT constraint on the allowed measurements is not very restrictive, affecting typically the performance by only a constant
factor, while the separability one implies
a more substantial loss.
This shows that generic bipartite states are data hiding: separable measurements (and even more so local measurements followed by classical communication) can poorly distinguish them (see \cite{HLSW} for another
instance of this phenomenon and Section 6 for a more detailed discussion on that topic).

\begin{theorem} \label{theorem:typical-states}
There are universal constants $C,c$ such that the following holds. Given a dimension $d$,
let $\rho$ and $\sigma$ be random states, independent and uniformly distributed on the set of states on ${\mathbf{C}}^d \otimes {\mathbf{C}}^d$. Then, with high probability,
\[ c \leq \| \rho - \sigma \|_{{\mathbf{\underline{PPT}}}} \leq \| \rho - \sigma \|_{{\mathbf{\underline{ALL}}}} \leq C, \]
\[ \frac{c}{\sqrt{d}} \leq \| \rho - \sigma \|_{{\mathbf{\underline{LOCC}^\rightarrow}}} \leq \| \rho - \sigma \|_{{\mathbf{\underline{LOCC}}}} \leq \| \rho - \sigma \|_{{\mathbf{\underline{SEP}}}} \leq \frac{C}{\sqrt{d}}. \]
Here, ``with high probability'' means that the probability that one of the conclusions fails is less than $\exp(-c_0 d)$ for some constant $c_0>0$.
\end{theorem}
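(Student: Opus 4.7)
The plan is to reduce the statement, via concentration of measure, to the global mean-width estimates provided by Theorem \ref{theorem:vrad-w-PPT-SEP}. The key observation is the identity $\|\Delta\|_{\mathbf{M}} = h_{K_{\mathbf{M}}}(\Delta)$, which makes $\Delta \mapsto \|\Delta\|_{\mathbf{M}}$ a convex, positively $1$-homogeneous, and $d$-Lipschitz function on the Euclidean space $\cH(\C^d \otimes \C^d)$ (Lipschitz because $\|\Delta\|_{\mathbf{M}} \leq \|\Delta\|_1 \leq d \|\Delta\|_2$ for operators on $\C^{d^2}$).

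Applying L\'evy's inequality on the Hilbert--Schmidt unit sphere of this space (of real dimension $N = d^4$) gives
$$\P\left(\big|\|\Delta\|_{\mathbf{M}} - \E \|\Delta\|_{\mathbf{M}}\big| > t\right) \leq 2\exp(-c d^2 t^2)$$
for $\Delta$ uniformly distributed on the sphere, while the usual passage from sphere to Gaussian identifies $\E\|\Delta\|_{\mathbf{M}}$ with $w(K_{\mathbf{M}})/d^2$ up to a $1 + o(1)$ factor, where $w$ is the Gaussian mean width (the normalization $d^2$ comes from $\E\|G\|_2 \simeq d^2$ for a Hermitian Gaussian $G$ on $\C^{d^2}$). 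Taking $t$ of order $1/d$ turns this into concentration with probability at least $1 - \exp(-c_0 d)$.

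To pass from a uniformly random direction to the actual difference $\rho - \sigma$, I would appeal to Proposition \ref{proposition:Delta-vs-difference-of-states}, whose content is twofold: (i) independent uniformly random states satisfy $\|\rho-\sigma\|_2 \simeq 1/d$ with probability $1 - \exp(-cd)$, by a standard computation on $\tr(\rho^2)$ and $\tr(\rho\sigma)$; (ii) the normalized direction $(\rho-\sigma)/\|\rho-\sigma\|_2$ is distributed close enough to uniform on the unit sphere of traceless Hermitians for the previous concentration bound to survive. The combination gives $\|\rho-\sigma\|_{\mathbf{M}} \simeq w(K_{\mathbf{M}})/d^3$ with high probability, and plugging in $w(K_{\mathbf{PPT}}) \simeq w(K_{\mathbf{ALL}}) \simeq d^3$ together with $w(K_{\mathbf{SEP}}) \simeq w(K_{\mathbf{LOCC}}) \simeq w(K_{\mathbf{LOCC^\rightarrow}}) \simeq d^{5/2}$ from Theorem \ref{theorem:vrad-w-PPT-SEP} yields the stated conclusions. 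The upper bound $\|\rho-\sigma\|_{\mathbf{ALL}} \leq 2$ is free from the triangle inequality in trace norm.

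The heart of the plan is therefore Theorem \ref{theorem:vrad-w-PPT-SEP}: the upper estimate $w(K_{\mathbf{SEP}}) \lesssim d^{5/2}$ --- the true source of the data-hiding gap --- is the delicate input, typically obtained by covering the set of product POVM effects by a well-chosen $\varepsilon$-net, while the matching lower bound on $w(K_{\mathbf{LOCC^\rightarrow}})$ is needed to pin the three local-type norms to order $1/\sqrt{d}$ from below. In the reduction itself, the only subtle point is checking that the distributional discrepancy from uniformity in Proposition \ref{proposition:Delta-vs-difference-of-states} is small enough to preserve the $\sqrt{d}$-separation between the two regimes, which in turn follows from exponential-in-$d$ concentration of $\|\rho-\sigma\|_2$ around $1/d$.
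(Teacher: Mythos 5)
Your overall architecture---concentration of the Lipschitz function $\Delta \mapsto \|\Delta\|_{\mathbf{M}}$ combined with the mean-width estimates of Theorem \ref{theorem:vrad-w-PPT-SEP}---is the same as the paper's, and your numerology (Lipschitz constant $d$, deviation scale, $\exp(-c_0d)$ failure probability) is consistent. The genuine gap is in the step you attribute to item (ii) of Proposition \ref{proposition:Delta-vs-difference-of-states}: the claim that the direction $(\rho-\sigma)/\|\rho-\sigma\|_2$ is ``close enough to uniform'' on the sphere of traceless Hermitians, and that this follows from concentration of $\|\rho-\sigma\|_2$ around $1/d$. That is not what the proposition asserts, and as an argument it fails. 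The law of $\rho-\sigma$ is unitarily invariant, but its spectral distribution is not that of a uniform point on the Hilbert--Schmidt sphere (a normalized GUE, with semicircular spectrum); the normalized spectrum of $\rho-\sigma$ converges to a different limiting law, so the typical direction of $\rho-\sigma$ sits at Hilbert--Schmidt distance $\Theta(1)$ from the set of typical directions of a uniform $\Delta$. Since $\|\cdot\|_{\mathbf{M}}$ is only $d$-Lipschitz on $\cH(\C^d\otimes\C^d)$, transferring its value across a distance $\Theta(1)$ can lose a factor of order $d$, which is fatal when the quantities to be distinguished are of order $1$ versus $1/\sqrt{d}$; and concentration of the scalar $\|\rho-\sigma\|_2$ says nothing about the direction. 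The paper's actual mechanism is different and is the idea you are missing: writing both $\Delta$ and $\rho-\sigma$ as $U\diag(\mathrm{spec})U^\dagger$ with Haar $U$ independent of the spectrum, one notes that $\E_U\|U\diag(x)U^\dagger\|_{\mathbf{M}} =: |||x|||$ is a permutation-invariant norm on traceless vectors, and then compares $|||\spec(\rho-\sigma)|||$ with $|||\spec(\Delta)|||$ via the majorization inequality of Lemma \ref{lemma:comparison-of-norms}, $|||x|||\leq 2n\frac{\|x\|_{\infty}}{\|y\|_1}|||y|||$, fed with the spectral estimates of Lemma \ref{lemma:global-spectra} ($\|\rho-\sigma\|_{\infty}\simeq 1/d^2$ and $\|\Delta\|_1\simeq d$ in the bipartite setting). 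This yields the two-sided comparison $\E\|\rho-\sigma\|_{\mathbf{M}} \simeq \frac{1}{d}\,\E\|\Delta\|_{\mathbf{M}}$ without any claim of closeness of directions.

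A secondary omission: because $\rho-\sigma$ is traceless, the relevant quantity is $w(P_{H_0}K_{\mathbf{M}})$ rather than $w(K_{\mathbf{M}})$. The inequality $w(P_{H_0}K_{\mathbf{M}})\lesssim w(K_{\mathbf{M}})$ is free, and suffices for the upper bounds, but the lower bounds (for $\mathbf{PPT}$ and $\mathbf{LOCC^{\rightarrow}}$) require $w(P_{H_0}K_{\mathbf{M}})\gtrsim w(K_{\mathbf{M}})$, which the paper obtains via $w(P_{H_0}K_{\mathbf{M}})\geq \vrad(K_{\mathbf{M}}\cap H_0)\gtrsim \vrad(K_{\mathbf{M}})$ using Brunn--Minkowski and Urysohn; this is precisely why Theorem \ref{theorem:vrad-w-PPT-SEP} records the volume radii and not only the mean widths. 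You should make this reduction explicit rather than plugging $w(K_{\mathbf{M}})$ in directly.
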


An immediate consequence of the high probability estimates is that one can find in ${\mathbf{C}}^d \otimes {\mathbf{C}}^d$ exponentially many states which are pairwise data hiding.

\begin{corollary}
There are constants $C,c$ such that, if $\mathcal{A}$ denotes a set of $\exp(cd)$ independent random states uniformly distributed on the set
of states on ${\mathbf{C}}^d \otimes {\mathbf{C}}^d$, with high probability any pair of distinct states $\rho,\sigma \in \mathcal{A}$ satisfies the conclusions of
Theorem \ref{theorem:typical-states}.
\end{corollary}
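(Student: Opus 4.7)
The plan is to obtain this as a direct union bound consequence of Theorem \ref{theorem:typical-states}. The key observation is that the conclusions of that theorem concern a fixed pair of independent uniform states, and the bad event (failure of one of the inequalities) carries an exponentially small probability $\exp(-c_0 d)$. Since we have control at the level $\exp(-c_0 d)$ per pair, we can afford to union bound over a number of pairs growing like $\exp(c'd)$ for any $c' < c_0$, which is what is required to get an exponentially large set of mutually data-hiding states.

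More precisely, let $N = \exp(cd)$ with a constant $c>0$ to be chosen, and let $\rho_1,\dots,\rho_N$ be i.i.d.\ states uniformly distributed on the set of states on $\C^d\otimes\C^d$. For any fixed pair of distinct indices $i\neq j$, the pair $(\rho_i,\rho_j)$ is distributed exactly as the pair $(\rho,\sigma)$ of Theorem \ref{theorem:typical-states}, so the probability that the norm inequalities fail for this pair is at most $\exp(-c_0 d)$. The number of unordered pairs is $\binom{N}{2} \leq \tfrac{1}{2}\exp(2cd)$, so a union bound yields
\[
\mathbf{P}\bigl(\exists\, i\neq j \st (\rho_i,\rho_j) \text{ fails the conclusions of Theorem \ref{theorem:typical-states}}\bigr) \leq \tfrac{1}{2}\exp\bigl((2c - c_0)d\bigr).
\]
Choosing any $c < c_0/2$ (for instance $c = c_0/3$) makes this bound tend to $0$ exponentially fast as $d\to\infty$, which establishes the corollary.

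There is no real obstacle: the entire content of the statement lies in the exponential concentration estimate $\exp(-c_0 d)$ of Theorem \ref{theorem:typical-states}, and the corollary simply exploits the slack between that rate and the cardinality $\exp(cd)$. I would only be careful to state clearly that the constant $c$ in the corollary is strictly smaller than the constant $c_0$ appearing in the probability bound of Theorem \ref{theorem:typical-states}, and that the constants $C, c$ appearing in the norm inequalities are inherited unchanged from that theorem.
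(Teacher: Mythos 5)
Your proof is correct and is exactly the argument the paper has in mind: the corollary is stated as an ``immediate consequence'' of the high probability estimate of Theorem \ref{theorem:typical-states}, and the intended justification is precisely the union bound over the $\binom{N}{2}\leq\exp(2cd)$ pairs (each distributed as an independent uniform pair) with $c<c_0/2$. Nothing is missing.
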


We deduce Theorem \ref{theorem:typical-states} from estimates on the mean width and the volume of the unit balls $K_{{\mathbf{\underline{LOCC}^\rightarrow}}}$,
$K_{{\mathbf{\underline{SEP}}}}$ and $
K_{{\mathbf{\underline{PPT}}}}$. The use of concentration
of measure allows to pass from these global estimates to the estimates in a typical direction that appear in Theorem \ref{theorem:typical-states}.
We include all this material
in Section 5.

We also show that even the smallest amount of communication has a huge influence:
we give examples of states which are perfectly distinguishable under local measurements and one-way classical communication
but very poorly distinguishable under local measurements with no communication between the parties.

\begin{theorem} \label{theorem:LO-vs-LOCC}
There is a universal constant $C$ such that the following holds: for any dimension $d$, there exists states $\rho$ and $\sigma$ on ${\mathbf{C}}^d \otimes {\mathbf{C}}^d$
such that
\[ \|\rho-\sigma\|_{{\mathbf{\underline{LOCC}^\rightarrow}}} = 2, \]
and
\begin{equation} \label{equation:small-LO} \|\rho-\sigma\|_{{\mathbf{\underline{LO}}}} \leq \frac{C}{\sqrt{d}}. \end{equation}
\end{theorem}

These states are constructed as follows: assuming without loss of generality that $d$ is even, let
$E$ be a fixed $d/2$-dimensional subspace of ${\mathbf{C}}^d$, let $U_1,\ldots,U_d$ be random independent Haar-distributed unitaries on ${\mathbf{C}}^d$, and
define the random states $\rho_i=U_i\frac{P_E}{d/2}U_i^\dagger$ and $\sigma_i=U_i\frac{P_{E^{\perp}}}{d/2}U_i^\dagger$, $1\leq i\leq d$, on ${\mathbf{C}}^d$ (where $P_E$ and $P_{E^{\perp}}$ denote the orthogonal projections onto $E$ and $E^{\perp}$ respectively).
Then, denoting by $\{\ket{1},\ldots,\ket{d}\}$ an orthonormal basis of ${\mathbf{C}}^d$, define
\[ \rho=\frac{1}{d}\sum_{i=1}^d\ket{i}\bra{i}\otimes\rho_i\ \ \textnormal{and}\ \ \sigma=\frac{1}{d}\sum_{i=1}^d\ket{i}\bra{i}\otimes\sigma_i. \]
The pair $(\rho,\sigma)$ satisfies \ref{equation:small-LO} with high probability.

Theorem \ref{theorem:LO-vs-LOCC} is proved in Section 4. It is built on the idea that, typically, a single POVM cannot succeed
simultaneously in several ``sufficiently different'' discrimination tasks.

\section{On the complexity of the class of all POVMs}
\label{sec:all-POVMs}

In this section, we determine how many distinct POVMs a set ${\mathbf{\underline{M}}}$ of POVMs on ${\mathbf{C}}^d$ must contain in order
to approximate the set ${\mathbf{\underline{ALL}}}$ of all POVMs on ${\mathbf{C}}^d$ (in the sense that
$\lambda \|\cdot\|_{{\mathbf{\underline{ALL}}}}\leq\|\cdot\|_{{\mathbf{\underline{M}}}}\leq\|\cdot\|_{{\mathbf{\underline{ALL}}}}$ for some $0 < \lambda < 1$).

The reason for the $\exp(d^2)$ scaling in the first part of Theorem \ref{theorem:approximation-of-ALL} is that these POVMs should be able to discriminate any two states within the family of
states
$\{\frac{1}{\dim E} P_E\}$, where $E$ varies among all subspaces of ${\mathbf{C}}^d$, and $P_E$ denotes the orthogonal projection onto $E$.
The set of $k$-dimensional subspaces of ${\mathbf{C}}^d$
has dimension $k(d-k)$, which is of order $d^2$ when $k$ is proportional to $d$.

The second part of Theorem \ref{theorem:approximation-of-ALL} requires an extra ingredient, since
a single POVM may be able to discriminate exponentially many pairs of subspaces. The concept of mean width (see Appendix \ref{ap:convex-geometry}) provides a neat answer to this problem.

\medskip

To begin with, we prove the first part of Theorem \ref{theorem:approximation-of-ALL}.
Note that the condition $\|\cdot\|_{{\mathbf{\underline{M}}}} \geq (1-\e) \|\cdot\|_{{\mathbf{\underline{ALL}}}}$
is equivalent to $K_{{\mathbf{\underline{M}}}} \supset (1-\e) [-{\mathrm{Id}}, {\mathrm{Id}}]$, the set $K_{{\mathrm{M}}}$ being defined in \ref{eq:definition-K}.
We thus only have to make use of the well-known lemma below.

\begin{lemma}[Approximation of convex bodies by polytopes] \label{lemma:approximation-polytope}
Given a symmetric convex body $K \subset {\mathbf{R}}^n$ and $0<\e<1$, there is a finite family $(x_i)_{i \in I}$ such that $\card (I) \leq (3/\e)^n$ and
\[ (1-\e) K \subset \conv \{ \pm x_i \ : \ i \in I \} \subset K .\]
\end{lemma}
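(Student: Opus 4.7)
The plan is the standard volumetric net argument. I would construct the family $(x_i)_{i\in I}$ as a maximal $\e$-separated subset $\mathcal{N}$ of $K$ with respect to the Minkowski gauge of $K$ (that is, a maximal family such that $x-y \notin \e K$ for any distinct $x,y \in \mathcal{N}$), and then take $P := \conv\{\pm x \st x \in \mathcal{N}\}$. The symmetry of $K$ makes this automatic; that $P \subset K$ is obvious since $\mathcal{N} \subset K$ and $K$ is convex and symmetric.

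For the cardinality bound I would invoke the classical volume-packing trick: by maximality of $\mathcal{N}$ the translates $x + (\e/2) K$ for $x \in \mathcal{N}$ are pairwise disjoint and all contained in $(1+\e/2) K$, so comparing volumes yields
\[ \card(\mathcal{N}) \cdot (\e/2)^n \vol(K) \leq (1+\e/2)^n \vol(K), \]
hence $\card(\mathcal{N}) \leq (1+2/\e)^n \leq (3/\e)^n$, where the last inequality uses $\e \leq 1$. The inclusion $(1-\e) K \subset P$ is then obtained by the usual geometric-series iteration: maximality of $\mathcal{N}$ implies that the translates $x + \e K$ for $x \in \mathcal{N}$ cover $K$, so any $y \in K$ can be written as $y = x_0 + \e y_1$ with $x_0 \in \mathcal{N}$ and $y_1 \in K$. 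Iterating on $y_1$ produces a representation $y = \sum_{k=0}^{\infty} \e^k x_k$ with $x_k \in \mathcal{N}$ for every $k$, and therefore
\[ (1-\e) y \;=\; \sum_{k=0}^{\infty} (1-\e)\e^k x_k \]
is a convex combination of elements of $\mathcal{N} \subset P$, giving $(1-\e) y \in P$ as required.

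There is essentially no obstacle here: this is a textbook statement and the only delicate points are the choice of the gauge of $K$ (rather than, say, the Euclidean metric) as the measure of separation, and making sure the iteration converges in $P$ — both of which are handled cleanly above. I would simply record the argument in a few lines without further discussion, since the novelty in the section lies in applying this estimate to $K = [-\Id,\Id]$, not in the approximation lemma itself.
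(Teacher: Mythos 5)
Your proposal is correct and is essentially the paper's argument: both take a net in $K$ with respect to the gauge $\|\cdot\|_K$, bound its cardinality by the standard volumetric packing estimate (which the paper simply cites from Pisier, Lemma 4.10, and you spell out), and then upgrade the net to the inclusion $(1-\e)K\subset P$. Your geometric-series iteration and the paper's fixed-point bound $A\leq 1+\e A$ on $A=\sup\{\|y\|_P \st y\in K\}$ are two standard phrasings of the same successive-approximation step, so there is nothing substantive to distinguish the two proofs.
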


\begin{proof}{
Let ${\mathcal{N}}$ be $\e$-net in $K$, with respect to $\|\cdot\|_K$ (the gauge of $K$, as defined in Appendix \ref{ap:convex-geometry}).
A standard volumetric argument (see e.g. \cite{Pisier}, Lemma 4.10) shows that we may ensure that $\card ({\mathcal{N}})
\leq  (3/\e)^n$. Let $P := \conv ( \pm {\mathcal{N}}) \subset K$.
Given any $x \in K$, there exists $x' \in {\mathcal{N}}$ such that $\|x-x'\|_K \leq \e$. Therefore
\[ \|x\|_P \leq \|x'\|_P+\|x-x'\|_P \leq 1 + \e A, \]
where $A := \sup \{ \|y\|_P \ : \ y \in K\}$. Taking supremum over $x \in K$, we obtain $A \leq 1 +\e A$ and therefore ($A$ is easily seen to be finite)
$A \leq (1-\e)^{-1}$. We thus proved the inequality $\|\cdot\|_P \leq (1-\e)^{-1} \|\cdot\|_K$, which is equivalent to the inclusion $(1-\e)K \subset P$.
}\end{proof}

When applied to the $d^2$-dimensional convex body $K_{{\mathbf{\underline{ALL}}}} = [-{\mathrm{Id}},{\mathrm{Id}}]$, Lemma \ref{lemma:approximation-polytope} implies that there is a
finite family
$(A_i)_{i \in I} \subset [-{\mathrm{Id}},{\mathrm{Id}}]$ with $\card (I) \leq (3/\e)^{d^2}$ and $\conv \{ \pm A_i \ : \ i \in I \} \supset (1-\e) [-{\mathrm{Id}},{\mathrm{Id}}]$.
For every $i \in I$, we may consider the POVM
\[ {\mathrm{M}}_i := \left( \frac{{\mathrm{Id}}+A_i}{2} , \frac{{\mathrm{Id}}-A_i}{2} \right) .\]
If we denote ${\mathbf{\underline{M}}} := \{ {\mathrm{M}}_i \ : \ i \in I \}$, then for any $i \in I$, $\pm A_i \in K_{{\mathrm{M}}_i}$ and therefore
$(1-\e) [-{\mathrm{Id}},{\mathrm{Id}}] \subset K_{{\mathbf{\underline{M}}}}$, which is precisely what we wanted to prove.

\medskip

We now show the second part of Theorem \ref{theorem:approximation-of-ALL}. The key observation is the following lemma, where
we denote by $\alpha_n$ the mean width of a segment $[-x,x]$ for $x$ a unit vector in ${\mathbf{R}}^n$, so that $\alpha_n \sim \sqrt{2/\pi n}$ (see Appendix \ref{ap:convex-geometry}).

\begin{lemma} \label{lemma:mean-width-POVM}
Let ${\mathrm{M}}$ be a POVM on ${\mathbf{C}}^d$. Then the mean width of the set $K_{{\mathrm{M}}}$ defined in \ref{eq:definition-K} satisfies
$w(K_{{\mathrm{M}}}) \leq d \alpha_{d^2}$,
with equality if  ${\mathrm{M}}$ is a rank-$1$ POVM (note that $d \alpha_{d^2}$ is of order $1$).
\end{lemma}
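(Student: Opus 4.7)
The plan is to compute $w(K_{\mathrm{M}})$ essentially by direct integration of its support function. By construction the support function of $K_{\mathrm{M}}$ equals $\|\cdot\|_{\mathrm{M}}$, and for a POVM $\mathrm{M}=(M_i)_{i\in I}$ one has $\|\Delta\|_{\mathrm{M}}=\sum_{i\in I}|\tr M_i\Delta|=\sum_{i\in I}|\langle M_i,\Delta\rangle_{HS}|$. Writing the mean width as the average of $h_{K_{\mathrm{M}}}$ over the unit sphere $S$ of $\mathcal{H}(\C^d)\cong\R^{d^2}$ (the convention consistent with the definition of $\alpha_n$ given in the Appendix), linearity gives
\[ w(K_{\mathrm{M}})=\sum_{i\in I}\int_S|\langle M_i,\Delta\rangle_{HS}|\,d\sigma(\Delta). \]

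Each summand is the mean width of the segment $[-M_i,M_i]\subset\R^{d^2}$. By rotation invariance of $\sigma$ and $1$-homogeneity, $\int_S|\langle M_i,\Delta\rangle_{HS}|\,d\sigma=\|M_i\|_2\,\alpha_{d^2}$. Hence
\[ w(K_{\mathrm{M}})=\alpha_{d^2}\sum_{i\in I}\|M_i\|_2, \]
and the whole statement reduces to showing $\sum_{i\in I}\|M_i\|_2\leq d$, with equality when $\mathrm{M}$ is rank-one.

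For the inequality I would use only positivity of the $M_i$: letting $(\mu_{i,j})_j$ denote the (nonnegative) eigenvalues of $M_i$, one has $\|M_i\|_2=\bigl(\sum_j\mu_{i,j}^2\bigr)^{1/2}\leq\sum_j\mu_{i,j}=\tr M_i$. Summing over $i$ and using $\sum_iM_i=\Id$ yields $\sum_i\|M_i\|_2\leq\tr\Id=d$. The pointwise inequality is saturated exactly when each $M_i$ has at most one non-zero eigenvalue, i.e. when $\mathrm{M}$ is rank-one; in that case $M_i=\lambda_i|v_i\rangle\langle v_i|$ with $|v_i|=1$ gives $\|M_i\|_2=\lambda_i=\tr M_i$, so the final bound is attained.

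There is no real obstacle here: the argument just unwinds the definitions and exploits positivity. The one point deserving care is bookkeeping of the normalization — making sure that the convention $w(K)=\int_S h_K\,d\sigma$ with $\sigma$ the uniform probability measure on $S^{d^2-1}$ is the one used throughout, so that $w([-x,x])=\alpha_{d^2}\|x\|_2$ with $\alpha_{d^2}\sim\sqrt{2/\pi d^2}$ and the quoted $d\alpha_{d^2}\simeq 1$ asymptotic is correct.
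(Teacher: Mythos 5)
Your proof is correct. The core computation is the same as the paper's: write $h_{K_{\mathrm{M}}}(\Delta)=\sum_i|\langle M_i,\Delta\rangle_{HS}|$, integrate term by term over the Hilbert--Schmidt sphere of $\cH(\C^d)\cong\R^{d^2}$, and recognize each term as the mean width of a segment, giving $\|M_i\|_2\,\alpha_{d^2}$ by rotation invariance. Where you diverge is in handling general (non-rank-one) POVMs. The paper first reduces to the rank-one case via the inclusion $K_{\mathrm{M}}\subset K_{\mathrm{M}'}$, where $\mathrm{M}'$ is obtained by splitting each effect into rank-one pieces (refinement never decreases the distinguishability norm), and then carries out the exact computation only for rank-one POVMs, where $\|M_i\|_2=p_i=\tr M_i$. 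You instead keep the exact identity $w(K_{\mathrm{M}})=\alpha_{d^2}\sum_i\|M_i\|_2$ for an arbitrary POVM and finish with the elementary inequality $\|M_i\|_2\leq\tr M_i$ for positive operators, summing to $\tr\Id=d$. Your route buys a little more: it yields a closed formula for $w(K_{\mathrm{M}})$ for every POVM and hence an if-and-only-if characterization of equality (each effect has rank at most one), whereas the paper only states the ``if'' direction. Your attention to the normalization of $\sigma$ and to the ambient dimension $d^2$ is exactly the right point to be careful about, and it is handled consistently with the paper's conventions.
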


It may be pointed out that the assertion of Lemma \ref{lemma:mean-width-POVM} implies that, as far as the mean width is concerned, all rank-$1$
POVMs are comparable!

\begin{proof}{
Given any POVM ${\mathrm{M}}$, there is a rank-$1$ POVM ${\mathrm{M}}'$ such that $K_{{\mathrm{M}}} \subset K_{\mathrm{M'}}$
(this is easily seen by splitting the POVM
elements from ${\mathrm{M}}$ as a sum of rank-$1$ operators). Therefore, it suffices to show that $w(K_{{\mathrm{M}}})=d \alpha_{d^2}$ for any rank-$1$ POVM.
Let ${\mathrm{M}}=\left(p_i\ketbra{\psi_i}{\psi_i}\right)_{i\in I}$ be a rank-$1$ POVM, where $(p_i)_{i\in I}$ are positive numbers and
$(\psi_i)_{i\in I}$ are unit vectors such that
\[ \sum_{i\in I} p_i\ketbra{\psi_i}{\psi_i} ={\mathrm{Id}} . \]
By taking the trace, we check that the total mass of $\{ p_i \ : \ i\in I\}$ equals $d$. We then have, for any $\Delta \in \cH({\mathbf{C}}^d)$,
\[ h_{K_{{\mathrm{M}}}}(\Delta) = \sum_{i\in I} p_i |\langle \psi_i |\Delta| \psi_i \rangle|. \]
Hence, denoting by $S_{\cH({\mathbf{C}}^d)}$ the Hilbert--Schmidt unit sphere of $\cH({\mathbf{C}}^d)$ (which has dimension $d^2-1$) equipped with the uniform
measure $\sigma$, the mean width of
$K_{{\mathrm{M}}}$ can be computed as
\[ w(K_{{\mathrm{M}}}) = \int_{S_{\cH({\mathbf{C}}^d)}} h_{K_{{\mathrm{M}}}}(\Delta) \, {\mathrm{d}} \sigma(\Delta) =
\sum_{i\in I} p_i \left( \int_{S_{\cH({\mathbf{C}}^d)}} |\langle \psi_i |\Delta| \psi_i \rangle|  \, {\mathrm{d}} \sigma(\Delta)  \right)
 = \sum_{i\in I} p_i \alpha_{d^2} = d \alpha_{d^2} . \]
}\end{proof}

Assume that ${\mathbf{\underline{M}}}$ is a family of $N$ POVMs such that
$\|\Delta\|_{{\mathbf{\underline{M}}}} \geq \e \|\Delta\|_1$ for any $\Delta \in \cH({\mathbf{C}}^d)$. This implies that $K_{{\mathbf{\underline{M}}}} \supset \e [-{\mathrm{Id}},{\mathrm{Id}}]$ and therefore
that
\begin{equation} \label{wKM-lower-bound} w(K_{{\mathbf{\underline{M}}}}) \geq \e w([-{\mathrm{Id}},{\mathrm{Id}}]) \simeq \e \sqrt{d},
\end{equation}
where we used last the estimate on the mean width of $[-{\mathrm{Id}},{\mathrm{Id}}]$ (from Theorem \ref{theorem:operator-norm}).
On the other hand,  we have
\begin{equation} \label{eq:convex-hull} K_{{\mathbf{\underline{M}}}} = \conv \left( \bigcup_{{\mathrm{M}} \in {\mathbf{\underline{M}}}} K_{{\mathrm{M}}} \right) ,\end{equation}
so that $K_{{\mathbf{\underline{M}}}}$ is
the convex hull of $N$ sets, each of them of mean width bounded by an absolute constant (by Lemma \ref{lemma:mean-width-POVM}).
We may apply Lemma \ref{lemma:mean-width-union-bound} with $\lambda = \sqrt{d}$
since $[-{\mathrm{Id}},{\mathrm{Id}}]$ is contained in the Hilbert--Schmidt ball of radius $\sqrt{d}$. Recalling that the ambient dimension is $n=d^2$, we get
\begin{equation} \label{wKM-upper-bound} w(K_{{\mathbf{\underline{M}}}}) \leq C \left( 1 + \frac{\sqrt{\log N}}{\sqrt{d}} \right) .\end{equation}
A comparison of the bounds \ref{wKM-lower-bound} and \ref{wKM-upper-bound} immediately yields $\log N \,\succeq\, \e^2 d^2$, as required.

\section{Unbounded gap between ${\mathbf{\underline{LO}}}$ and ${\mathbf{\underline{LOCC}}}$}
\label{section:LO-vs-LOCC}

In this section we give a proof of Theorem \ref{theorem:LO-vs-LOCC}.
Let $\{\ket{1},\ldots,\ket{d}\}$ be an orthonormal basis of ${\mathbf{C}}^d$. For $d$ even, we consider a fixed $d/2$-dimensional subspace $E \subset {\mathbf{C}}^d$,
and denote $\Delta_0 = 2P_E - {\mathrm{Id}}$. We then pick $U_1,\ldots,U_d$ random independent Haar-distributed unitaries on ${\mathbf{C}}^d$, and for $1 \leq i \leq d$ we
consider the random operators $\Delta_i=U_i\Delta_0U_i^\dagger$. We finally introduce
\begin{equation} \label{eq:def-Delta} \Delta = \sum_{i=1}^d \ket{i}\bra{i}\otimes\Delta_i. \end{equation}

For each $1\leq i\leq d$, let ${\mathrm{M}}_i=(M_i,{\mathrm{Id}}-M_i)$ be a POVM on ${\mathbf{C}}^d$ such that $\|\Delta_i\|_{{\mathrm{M}}_i} = \|\Delta_i\|_1$. Then,
\[ {\mathrm{M}} = \left( \ketbra{i}{i} \otimes M_i, \ketbra{i}{i} \otimes ({\mathrm{Id}}-M_i) \right)_{1 \leq i \leq d} \]
is a POVM on ${\mathbf{C}}^d\otimes{\mathbf{C}}^d$ which is in ${\mathbf{\underline{LOCC}^\rightarrow}}$, and therefore
\[ \|\Delta\|_{{\mathrm{M}}} = \|\Delta\|_{{\mathbf{\underline{LOCC}^\rightarrow}}} =  \|\Delta\|_1 = \sum_{i=1}^d  \|\Delta_i\|_1 = d^2 . \]
Theorem \ref{theorem:LO-vs-LOCC} will follow (with $\rho$ and $\sigma$ being the positive and negative parts of $\Delta$,
after renormalization) if we prove that $\|\Delta\|_{{\mathbf{\underline{LO}}}} \leq Cd^{3/2}$ with high probability.

\begin{proposition} \label{prop:discretization-LO}
For $\Delta \in \cH({\mathbf{C}}^d \otimes {\mathbf{C}}^d)$ defined as in \ref{eq:def-Delta}, we have
\begin{equation} \label{eq:LO-Delta} \|\Delta\|_{{\mathbf{\underline{LO}}}} =
\sup\left\{ \sum_{i=1}^d  \|\Delta_i\|_{{\mathrm{N}}} \ : \ {\mathrm{N}}\ \ \textnormal{POVM on}\ \ {\mathbf{C}}^d \right\}. \end{equation}
This quantity can be upper bounded as follows, where ${\mathcal{N}}$ denotes a $\frac{1}{16}$-net in $S_{{\mathbf{C}}^d}$,
\begin{eqnarray}
\label{eq:bound-LO}
\|\Delta\|_{{\mathbf{\underline{LO}}}} & \leq & d \sup_{x \in S_{{\mathbf{C}}^d}} \sum_{i=1}^d \left| \bra{x} \Delta_i \ket{x} \right| \\
\label{eq:bound-LO-2}
& \leq & 2d \sup_{x \in {\mathcal{N}}} \sum_{i=1}^d \left| \bra{x} \Delta_i \ket{x} \right|.
\end{eqnarray}
\end{proposition}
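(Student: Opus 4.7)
The plan is to establish the three formulas in turn, each step being a modular reduction.

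For identity \eqref{eq:LO-Delta}, any LO POVM takes the form $\mathrm{M}\otimes\mathrm{N}=(M_k\otimes N_l)_{k,l}$, and the block-diagonal form of $\Delta$ gives
$$\tr\bigl((M_k\otimes N_l)\Delta\bigr)=\sum_{i=1}^d\bra{i}M_k\ket{i}\,\tr(N_l\Delta_i).$$
Applying the triangle inequality and the relation $\sum_k\bra{i}M_k\ket{i}=1$ yields $\|\Delta\|_{\mathrm{M}\otimes\mathrm{N}}\leq\sum_i\|\Delta_i\|_{\mathrm{N}}$, giving ``$\leq$'' in \eqref{eq:LO-Delta} after taking the supremum. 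The reverse inequality follows by specializing $\mathrm{M}$ to the projective POVM $(\ketbra{i}{i})_{i=1}^d$, for which $\bra{j}M_k\ket{j}=\delta_{j,k}$ collapses the inner sum and produces exactly $\sum_i\|\Delta_i\|_{\mathrm{N}}$ for any chosen $\mathrm{N}$.

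For \eqref{eq:bound-LO}, refining each $N_l$ into rank-one pieces only increases the measurement norm, so one may assume $\mathrm{N}=(p_j\ketbra{\psi_j}{\psi_j})_j$ with unit vectors $\psi_j$. Tracing the completeness relation gives $\sum_jp_j=d$, hence
$$\sum_i\|\Delta_i\|_{\mathrm{N}}=\sum_jp_j\sum_i|\bra{\psi_j}\Delta_i\ket{\psi_j}|\leq d\sup_{x\in S_{\C^d}}\sum_i|\bra{x}\Delta_i\ket{x}|.$$

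The discretization \eqref{eq:bound-LO-2} requires linearizing the sum of absolute values. The key device is the sign polarization
$$\sum_i|\bra{x}\Delta_i\ket{x}|=\max_{\eta\in\{\pm1\}^d}\bra{x}A_\eta\ket{x},\qquad A_\eta:=\sum_i\eta_i\Delta_i,$$
so that, swapping the (finite) max over $\eta$ with the sup over $x$ and exploiting $A_{-\eta}=-A_\eta$ to fold the negative eigenvalues into positive ones, one obtains $\sup_{x\in S_{\C^d}}\sum_i|\bra{x}\Delta_i\ket{x}|=\max_\eta\|A_\eta\|_\infty$. A routine $\varepsilon$-net argument on $S_{\C^d}$, starting from the identity $\bra{x}A\ket{x}-\bra{y}A\ket{y}=\bra{x-y}A\ket{x}+\bra{y}A\ket{x-y}$, gives for any Hermitian $A$ and any $\varepsilon$-net $\mathcal{N}\subset S_{\C^d}$ the bound $\|A\|_\infty\leq(1-2\varepsilon)^{-1}\sup_{y\in\mathcal{N}}|\bra{y}A\ket{y}|$; with $\varepsilon=1/16$ the constant $8/7$ is comfortably below $2$. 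Applying this to each $A_\eta$ and using $\max_\eta|\bra{y}A_\eta\ket{y}|=\sum_i|\bra{y}\Delta_i\ket{y}|$ produces \eqref{eq:bound-LO-2}.

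The main subtlety is the sign polarization step in the third part: the sum of absolute values is not quadratic in $x$, and one has to rewrite it as a finite maximum of quadratic forms so that the swap of suprema converts $\max_\eta\lambda_{\max}(A_\eta)$ into $\max_\eta\|A_\eta\|_\infty$ before the standard net discretization for quadratic forms applies. The other two parts are direct bookkeeping.
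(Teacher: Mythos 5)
Your proof is correct. The first two parts (the reduction \eqref{eq:LO-Delta} via the block structure of $\Delta$ and the normalization $\sum_j \bra{i}M_j\ket{i}=1$, and the passage to rank-one POVMs with total mass $d$ for \eqref{eq:bound-LO}) coincide with the paper's argument. For the discretization step \eqref{eq:bound-LO-2} you take a genuinely different route: you linearize $\sum_i|\bra{x}\Delta_i\ket{x}|$ as $\max_{\eta\in\{\pm1\}^d}\bra{x}A_\eta\ket{x}$ with $A_\eta=\sum_i\eta_i\Delta_i$, swap the suprema, and apply the standard $\varepsilon$-net bound $\|A\|_\infty\leq(1-2\varepsilon)^{-1}\sup_{y\in\mathcal{N}}|\bra{y}A\ket{y}|$ to each of the (exponentially many, but harmless, since no union bound over $\eta$ is needed) Hermitian matrices $A_\eta$; the identity $\max_\eta|\bra{y}A_\eta\ket{y}|=\sum_i|\bra{y}\Delta_i\ket{y}|$ then reassembles the original quantity on the net. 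The paper instead works directly with the non-quadratic function $f(x)=\sum_i|\bra{x}\Delta_i\ket{x}|$, controlling its bilinear extension $g(x,y)=\sum_i|\bra{x}\Delta_i\ket{y}|$ via the polarisation identity ($G\leq 4F$) and then running the perturbation argument $F\leq 2\delta G+F'\leq 8\delta F+F'$. Both arguments are sound; yours yields the slightly better constant $8/7$ in place of $2$ at $\delta=1/16$, while the paper's avoids introducing the sign vectors and extends more readily to situations where the quantity being discretized is not a maximum of quadratic forms.
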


\begin{proof}{
{ The inequality $\geq$ in \ref{eq:LO-Delta} follows by considering the ${\mathbf{\underline{LO}}}$ POVM $(\ketbra{i}{i})_{1 \leq i \leq d} \otimes {\mathrm{N}}$.
Conversely, given
POVMs ${\mathrm{M}}=(M_j)_{j\in J}$ and ${\mathrm{N}}=(N_k)_{k\in K}$ on ${\mathbf{C}}^d$, we have
\begin{eqnarray*}
\|\Delta\|_{{\mathrm{M}}\otimes{\mathrm{N}}} & = & \sum_{j\in J,\ k\in K} \left|\sum_{i=1}^d {\mathrm{Tr}}\left(\left(\ket{i}\bra{i}\otimes\Delta_i\right)
\left(M_j\otimes N_k\right)\right)\right|\\
& \leq & \sum_{i=1}^d \left(\sum_{j\in J}\left|\bra{i}M_j\ket{i}\right|\right)
\left(\sum_{k\in K}\left|{\mathrm{Tr}}\left(\Delta_i N_k\right)\right|\right)\\
& \leq & \sum_{i=1}^d  \|\Delta_i\|_{{\mathrm{N}}},
\end{eqnarray*}
the last inequality being because, for each $1\leq i\leq d$, $\sum_{j\in J}\left|\bra{i}M_j\ket{i}\right| =\sum_{j\in J}\bra{i}M_j\ket{i}
=\braket{i}{i} =1$. Taking the supremum over ${\mathrm{M}}$ and ${\mathrm{N}}$ gives the inequality $\leq$ in  \ref{eq:LO-Delta}.

The supremum in \ref{eq:LO-Delta} is unchanged when restricting to the supremum on POVMs whose elements have rank $1$, since splitting the POVM elements
as sum of rank $1$ operators does not decrease the distinguishability norm. If ${\mathrm{N}}$ is such a POVM, its elements can be written as
$(\alpha_k \ketbra{x_k}{x_k})_{k \in K}$, where $(x_k)_{k \in K}$ are unit vectors and $(\alpha_k)_{k \in K}$ positive numbers satisfying $\sum_{k \in K} \alpha_k=d$.
We thus have in that case
\[ \sum_{i=1}^d \|\Delta_i\|_{{\mathrm{N}}} = \sum_{i=1}^d \sum_{k \in K} |{\mathrm{Tr}}(\Delta_i \cdot \alpha_k \ketbra{x_k}{x_k} )|
\leq d \sup_{x \in S_{{\mathbf{C}}^d}} \sum_{i=1}^d \left| \bra{x} \Delta_i \ket{x} \right|, \]
proving \ref{eq:bound-LO}.

To prove \ref{eq:bound-LO-2}, we introduce the function $g$ defined for $x,y \in {\mathbf{C}}^d$ by $g(x,y) = \sum_{i=1}^d \left|\bra{x}\Delta_i \ket{y}\right|$, and the function $f$ defined for $x \in {\mathbf{C}}^d$ by $f(x)=g(x,x)$. Denote by $G$ the supremum of $g$ over $S_{{\mathbf{C}}^d} \times S_{{\mathbf{C}}^d}$, by
$F$ the supremum of $f$ over $S_{{\mathbf{C}}^d}$ and by $F'$ the supremum of $f$ over a $\delta$-net ${\mathcal{N}}$.
For any $x,y\in{\mathbf{C}}^d$ and $\Delta\in{\mathcal{H}}({\mathbf{C}}^d)$, we have by the polarisation identity
\[\bra{x}\Delta\ket{y} = \frac{1}{4}\left(\bra{x+y}\Delta\ket{x+y}
+i\bra{x+iy}\Delta\ket{x+iy} -\bra{x-y}\Delta\ket{x-y} -i\bra{x-iy}\Delta\ket{x-iy}\right), \]
so that $g(x,y) \leq \frac{1}{4}\left(f(x+y)+f(x+iy)+f(x-y)+f(x-iy)\right)$ and therefore $G\leq 4F$.

Given $x\in S_{{\mathbf{C}}^d}$, there exists $x'\in{\mathcal{N}}$ such that $\|x-x'\|_2 \leq\delta$, and by the triangle inequality, for any $\Delta\in{\mathcal{H}}({\mathbf{C}}^d)$,
\[ \left|\bra{x}\Delta\ket{x}\right| \leq \left|\bra{x}\Delta\ket{x-x'}\right| + \left|\bra{x-x'}\Delta\ket{x'}\right| +
\left|\bra{x'}\Delta\ket{x'}\right| .\]
Summing over $i$ with $\Delta=\Delta_i$ and taking supremum over $x \in S_{{\mathbf{C}}^d}$ gives
\[ F \leq 2 \delta G + F' \leq 8 \delta F +F' .\]
For $\delta=1/16$, we obtain $F \leq 2F'$, and therefore \ref{eq:bound-LO-2} follows from \ref{eq:bound-LO}.}
}\end{proof}

To bound $\|\Delta\|_{{\mathbf{\underline{LO}}}}$, we combine Proposition \ref{prop:discretization-LO} with the following result.

\begin{proposition}
\label{prop:individual}
Let $x$ be a fixed unit vector in ${\mathbf{C}}^d$, $E$ be a fixed $d/2$-dimensional subspace of ${\mathbf{C}}^d$ and $\Delta_0 = 2P_E -{\mathrm{Id}}$, $(U_i)_{1 \leq i \leq n}$ be Haar-distributed independent random unitaries on ${\mathbf{C}}^d$, and for each $1\leq i\leq n$, set
$\Delta_i=U_i\Delta_0U_i^\dagger$. Then, for any $t>1$,
\[ {\mathbf{P}}\left(\sum_{i=1}^n \left|\bra{x}\Delta_i\ket{x}\right|\geq (1+t) n \E |\bra{x} \Delta_1 \ket{x}|\right)\leq e^{-c_0 nt},\]
$c_0$ being a universal constant.
\end{proposition}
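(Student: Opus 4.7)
The plan is to exploit Haar invariance to reduce the problem to concentration for a sum of i.i.d.\ bounded random variables with sub-Gaussian tails at scale $1/\sqrt{d}$, and then apply a Chernoff bound based on the sub-Gaussian MGF.

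First, I would set $X_i := |\bra{x}\Delta_i\ket{x}|$. By Haar invariance, $U_i^\dagger x$ is uniformly distributed on $S_{\C^d}$, so the $X_i$ are i.i.d., each equal in law to $|\bra{y}\Delta_0\ket{y}| = |2\|P_E y\|^2 - 1|$ with $y$ uniform on $S_{\C^d}$. Using $|\|P_E y\|^2 - 1/2| \leq 2|\|P_E y\| - 1/\sqrt{2}|$, the $1$-Lipschitz character of $y \mapsto \|P_E y\|$, and L\'evy's concentration inequality on the complex sphere (together with the elementary fact that $|\E\|P_E y\| - 1/\sqrt{2}| = O(1/\sqrt{d})$), I obtain
\[ \P(X_1 \geq a) \leq C \exp(-c d a^2), \quad a \geq 0. \]
Equivalently, setting $\mu := \E X_1$, the centered variables $Y_i := X_i - \mu$ obey the sub-Gaussian MGF bound $\E \exp(\lambda Y_i) \leq \exp(C\lambda^2/d)$ for every $\lambda \in \R$.

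Second, I would establish the lower bound $\mu \gtrsim 1/\sqrt{d}$. The second moment is $\E X_1^2 = 4\,\mathrm{Var}(\|P_E y\|^2) = 1/(d+1)$, by direct computation from the $\mathrm{Beta}(d/2,d/2)$ law of $\|P_E y\|^2$, while the tail estimate above gives $\E X_1^4 \lesssim 1/d^2$. Applying the Paley--Zygmund inequality to $X_1^2$ then produces constants $c, \theta > 0$ with $\P(X_1 \geq c/\sqrt{d}) \geq \theta$, hence $\mu \gtrsim 1/\sqrt{d}$, i.e., $d\mu^2 \gtrsim 1$. This is the step requiring the most care: the naive inequality $\mu \geq \E X_1^2 \simeq 1/d$ is off by a factor $\sqrt{d}$, so one really must combine the second moment with a higher-moment estimate (or with an explicit computation from the Beta density) to reach the right scale.

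Finally, combining the sub-Gaussian MGF bound on the $Y_i$ with a standard Chernoff/Cram\'er argument yields, for every $s \geq 0$,
\[ \P\!\left(\sum_{i=1}^n Y_i \geq s\right) \leq \exp\!\left(-\frac{c d s^2}{n}\right). \]
Setting $s = tn\mu$ and using $d\mu^2 \gtrsim 1$:
\[ \P\!\left(\sum_{i=1}^n X_i \geq (1+t) n \mu\right) \leq \exp(-c t^2 n d \mu^2) \leq \exp(-c_0 t^2 n) \leq \exp(-c_0 n t) \]
for $t \geq 1$, since $t^2 \geq t$ in that range, which is the claimed bound.
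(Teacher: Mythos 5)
Your proof is correct, but it takes a different (and more self-contained) route than the paper. The paper disposes of this proposition in two lines: it invokes Proposition 6.2 of the companion paper \cite{AL}, a ready-made Bernstein-type inequality for sums of the form $\sum_i |\bra{x}\Delta_i\ket{x}|$, and supplements it with the bound $\E|\bra{x}\Delta_1\ket{x}| \leq \frac{1}{d}\|\Delta_1\|_2 = 1/\sqrt{d}$ from \cite{LW} (that upper bound on the mean is what later converts the relative deviation $(1+t)n\E|\bra{x}\Delta_1\ket{x}|$ into the absolute threshold $(1+t)\sqrt{d}$ in the application to Theorem \ref{theorem:LO-vs-LOCC}). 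What you have done is essentially reconstruct the content of that cited black box: the sub-Gaussian tail $\P(X_1\geq a)\leq C\exp(-cda^2)$ via the Lipschitz function $y\mapsto \|P_Ey\|$ and L\'evy's lemma, together with the lower bound $\E X_1\gtrsim 1/\sqrt{d}$, show precisely that $X_1/\E X_1$ has universally bounded sub-Gaussian (hence sub-exponential) norm, which is the hypothesis any such relative Bernstein inequality must secretly rely on. You are also right to flag the mean lower bound as the delicate step: a naive first-or-second-moment bound only gives $\E X_1\gtrsim 1/d$, and your Paley--Zygmund argument based on $\E X_1^2 = 1/(d+1)$ (from the $\mathrm{Beta}(d/2,d/2)$ law of $\|P_Ey\|^2$) and $\E X_1^4\lesssim 1/d^2$ closes that gap correctly. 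Your Chernoff step even yields the slightly stronger bound $\exp(-c_0nt^2)$, which dominates $\exp(-c_0nt)$ for $t>1$. The trade-off is length versus transparency: the paper's citation is shorter but opaque, while your argument makes visible exactly which two facts about the distribution of $|\bra{x}\Delta_1\ket{x}|$ (fluctuation scale $1/\sqrt{d}$ and mean of the same order) drive the estimate.
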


\begin{proof}{
Proposition \ref{prop:individual} is a consequence of Proposition 6.2 from \cite{AL}
(which is itself a variation on Bernstein inequalities).
The quantity $\E |\bra{x} \Delta_1 \ket{x}|$ is equal to the so-called ``uniform norm'' of $\Delta_1$
(see \cite{MWW,AL}) and we use the bound from \cite{LW}
\[ \E |\bra{x} \Delta_1 \ket{x}| \leq \frac{1}{d} \|\Delta_1\|_{2} = \frac{1}{\sqrt{d}} . \]
}\end{proof}

We now complete the proof of Theorem \ref{theorem:LO-vs-LOCC}.
Let ${\mathcal{N}}$ be a minimal $1/16$-net in $S_{{\mathbf{C}}^d}$, so that $\card({\mathcal{N}})\leq 48^{2d}$ (see \cite{Pisier}, Lemma 4.10).
Using Propositions \ref{prop:discretization-LO} and \ref{prop:individual} (for $n=d$), and the union bound, we obtain that for any $t>1$
\[ {\mathbf{P}} \left( \|\Delta\|_{{\mathbf{\underline{LO}}}} \geq 2(1+t) d^{3/2} \right) \leq
{\mathbf{P}}\left( \exists \ x\in {\mathcal{N}} \ : \ \sum_{i=1}^d |\bra{x} \Delta_i \ket{x}| \geq (1+t)\sqrt{d}\right) \leq
48^{2d} e^{-c_0dt} . \]

This estimate is less than $1$ when $t$ is larger than some number $t_0$. This shows that
$\|\Delta\|_{{\mathbf{\underline{LO}}}} \leq 2(1+t_0)d^{3/2}$ with high probability while $\|\Delta\|_{{\mathbf{\underline{LOCC}}}^{\rightarrow}} = d^2$, and Theorem \ref{theorem:LO-vs-LOCC} follows.

\begin{remark}
The operator $\Delta$ defined by equation \ref{eq:def-Delta} can we rewritten as $\Delta = d^2(\rho' -
\mathrm{Id}/d^2)$, with
\[ \rho' = \frac{2}{d^2} \sum_{i=1}^d \ketbra{i}{i}
\otimes U_iP_{E}U_i^{\dagger}. \]
It thus follows from Theorem \ref{theorem:LO-vs-LOCC} that $\| \rho' - {\mathrm{Id}}/d^2
\|_{\mathbf{\underline{LO}}}\leq C/\sqrt{d}$ with high probability, while
$\| \rho' - {\mathrm{Id}}/d^2\|_{\mathbf{\underline{LOCC}}^{\rightarrow}}=1$. This property is
characteristic of data locking states. These are states whose accessible
mutual information (i.e. the maximum classical mutual information that can
be achieved by local measurements) drastically underestimates their
quantum mutual information (see \cite{DVHLST} for the original description
of this phenomenon). Now, following \cite{DFHL} and \cite{FHS}, data locking
may also be defined in terms of distinguishability from the maximally
mixed state by local measurements: informally, a state $\rho$ on
${\mathbf{C}}^d\otimes{\mathbf{C}}^d$ which is such that $\| \rho -
{\mathrm{Id}}/d^2\|_{\mathbf{\underline{LO}}} \ll \|\rho-
{\mathrm{Id}}/d^2\|_{\mathbf{\underline{LOCC}}^{\rightarrow}}$ may be used
for information locking.
\end{remark}

\section{Generic unbounded gap between ${\mathbf{\underline{SEP}}}$ and ${\mathbf{\underline{PPT}}}$}
\label{sec:local-POVM-proofs}

\subsection{Volume and mean width estimates}

The first step towards Theorem \ref{theorem:typical-states} is to estimate globally the size of the (dual) unit balls $K_{{\mathbf{\underline{PPT}}}}$, $K_{{\mathbf{\underline{SEP}}}}$ and $K_{{\mathbf{\underline{LOCC}^\rightarrow}}}$ associated to the
measurement norms $\|\cdot\|_{{\mathbf{\underline{PPT}}}}$, $\|\cdot\|_{{\mathbf{\underline{SEP}}}}$ and $\|\cdot\|_{{\mathbf{\underline{LOCC}^\rightarrow}}}$. Classical useful invariants used to quantify the size of convex
bodies include the volume radius and
the mean width, which are defined in Appendix \ref{ap:convex-geometry}.

Note that whenever we use tools from convex geometry in the space ${\mathcal{H}}({\mathbf{C}}^d \otimes {\mathbf{C}}^d)$ (which has dimension $d^4$) it is tacitly
understood that we use the Euclidean structure
induced by the
Hilbert--Schmidt inner product $\langle A,B \rangle = {\mathrm{Tr}}(AB)$.
The definitions of the volume radius and the mean width of $K_{{\mathbf{\underline{M}}}}$ thus become
\[ \vrad (K_{{\mathbf{\underline{M}}}}) = \left( \frac{\vol K_{{\mathbf{\underline{M}}}}}{\vol B_{HS}} \right)^{1/d^4} \]
and
\[ w (K_{{\mathbf{\underline{M}}}}) = \int_{S_{HS}} \| \Delta \|_{{\mathbf{\underline{M}}}} \, {\mathrm{d}} \sigma (\Delta) , \]
where $B_{HS}$ denotes the Hilbert--Schmidt unit ball of ${\mathcal{H}}({\mathbf{C}}^d \otimes {\mathbf{C}}^d)$ and $S_{HS}$ its Hilbert--Schmidt unit sphere equipped
with the uniform measure $\sigma$.
Here are the estimates on the volume radius and the mean width of $K_{{\mathbf{\underline{PPT}}}}$, $K_{{\mathbf{\underline{SEP}}}}$ and $K_{{\mathbf{\underline{LOCC}^\rightarrow}}}$. As a reference, recall that
(on ${\mathbf{C}}^d \otimes {\mathbf{C}}^d$)
\[ \vrad(K_{{\mathbf{\underline{ALL}}}}) \simeq w(K_{{\mathbf{\underline{ALL}}}}) \simeq d .\]
This follows from Theorem \ref{theorem:operator-norm} once we have in mind that $K_{{\mathbf{\underline{ALL}}}} = [-{\mathrm{Id}},{\mathrm{Id}}]$.

\begin{theorem}
\label{theorem:vrad-w-PPT-SEP}
In ${\mathbf{C}}^d \otimes {\mathbf{C}}^d$, one has
\[ {\mathrm{vrad}}\left(K_{{\mathbf{\underline{PPT}}}}\right)\simeq w\left(K_{{\mathbf{\underline{PPT}}}}\right)\simeq d,\]
and
\[ {\mathrm{vrad}}\left(K_{{\mathbf{\underline{LOCC}^\rightarrow}}}\right) \simeq w \left(K_{{\mathbf{\underline{LOCC}^\rightarrow}}}\right) \simeq \sqrt{d},\]
\[ {\mathrm{vrad}}\left(K_{{\mathbf{\underline{LOCC}}}}\right) \simeq w \left(K_{{\mathbf{\underline{LOCC}}}}\right)\simeq \sqrt{d},\]
\[ {\mathrm{vrad}}\left(K_{{\mathbf{\underline{SEP}}}}\right) \simeq w \left(K_{{\mathbf{\underline{SEP}}}}\right)\simeq \sqrt{d}. \]
\end{theorem}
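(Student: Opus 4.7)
My plan is to reduce all six estimates to matching upper bounds on mean widths of the bodies and their polars. Two ingredients from Appendix \ref{ap:convex-geometry} do the work: Urysohn's inequality $\vrad(K) \leq w(K)$ and the Bourgain--Milman inequality $\vrad(K)\cdot \vrad(K^{\circ})\gtrsim 1$. Combined, a pair of upper bounds $w(K)\lesssim \alpha$ and $w(K^{\circ})\lesssim 1/\alpha$ forces $\vrad(K)\simeq w(K)\simeq \alpha$. So for each body it suffices to upper-bound $w(K)$ and $w(K^{\circ})$.

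\emph{PPT case.} Unpacking the two-outcome definition $A=2M-\Id$ identifies
\[ K_{\mathbf{PPT}} \;=\; [-\Id,\Id]\,\cap\,\Gamma([-\Id,\Id]), \]
where $\Gamma$ is the partial transpose, which is a linear isometry of $\cH(\C^{d}\otimes \C^{d})$ for the Hilbert--Schmidt structure. The upper bound $w(K_{\mathbf{PPT}})\lesssim d$ is immediate from $K_{\mathbf{PPT}}\subseteq [-\Id,\Id]$ and the known estimate $w([-\Id,\Id])\simeq d$ (Theorem \ref{theorem:operator-norm}). The gauge of $K_{\mathbf{PPT}}$ is $\max(\|\cdot\|_{\infty},\|\Gamma(\cdot)\|_{\infty})$, hence
\[ w(K_{\mathbf{PPT}}^{\circ}) \;=\; \E_{u\in S_{HS}}\max(\|u\|_{\infty},\|u^{\Gamma}\|_{\infty}) \;\leq\; 2\,\E\|u\|_{\infty} \;\simeq\; 1/d, \]
using that $u$ and $u^{\Gamma}$ have the same law (since $\Gamma$ is an HS isometry) and the standard Wigner-type estimate on the typical operator norm of an HS-normalized random Hermitian on $\C^{d^{2}}$. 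This closes the PPT scale.

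\emph{SEP upper bound.} The upper bound $w(K_{\mathbf{SEP}})\lesssim \sqrt{d}$ starts from a rank-one reduction: every SEP POVM may be refined, without decreasing the distinguishability norm, into a rank-one SEP POVM $(\lambda_{k}|x_{k},y_{k}\rangle\langle x_{k},y_{k}|)_{k}$ with $\sum_{k}\lambda_{k}=d^{2}$, yielding
\[ \|\Delta\|_{\mathbf{SEP}} \;\leq\; d^{2}\sup_{x,y\in S_{\C^{d}}}|\langle x,y|\Delta|x,y\rangle|. \]
The map $(x,y)\mapsto |\langle x,y|\Delta|x,y\rangle|$ is $O(1)$-Lipschitz on $S_{\C^{d}}\times S_{\C^{d}}$ and has pointwise $L^{2}$-norm $\simeq 1/d^{2}$ on $S_{HS}$, so an $\varepsilon$-net on the product sphere of cardinality $(C/\varepsilon)^{4d}$ combined with Gaussian concentration gives $\E_{\Delta}\sup_{x,y}|\langle x,y|\Delta|x,y\rangle| \lesssim \sqrt{d}/d^{2}$, hence $w(K_{\mathbf{SEP}})\lesssim \sqrt{d}$. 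Since $K_{\mathbf{LOCC^{\rightarrow}}}\subseteq K_{\mathbf{LOCC}}\subseteq K_{\mathbf{SEP}}$, this bound transfers to all three bodies.

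\emph{LOCC$^{\rightarrow}$ lower bound and main obstacle.} For the matching lower bound it suffices to handle the smallest body $K_{\mathbf{LOCC^{\rightarrow}}}$. Given any orthonormal basis $\{|i\rangle\}$ of Alice's factor, the protocol where Alice performs the associated projective measurement and Bob subsequently applies the optimal POVM conditional on her outcome belongs to $\mathbf{LOCC^{\rightarrow}}$, yielding
\[ \|\Delta\|_{\mathbf{LOCC^{\rightarrow}}} \;\geq\; \sum_{i=1}^{d}\|\Delta_{ii}\|_{1}, \qquad \Delta_{ii}:=(\langle i|\otimes \Id)\,\Delta\,(|i\rangle\otimes \Id). \]
For $\Delta$ uniform on $S_{HS}$ each block $\Delta_{ii}$ has $\|\Delta_{ii}\|_{2}\simeq 1/d$ and behaves like a rescaled Wigner-type Hermitian, so that $\|\Delta_{ii}\|_{1}\simeq \sqrt{d}\,\|\Delta_{ii}\|_{2}\simeq 1/\sqrt{d}$, and summing over $d$ blocks gives $\|\Delta\|_{\mathbf{LOCC^{\rightarrow}}}\gtrsim \sqrt{d}$ with high probability, hence $w(K_{\mathbf{LOCC^{\rightarrow}}})\gtrsim \sqrt{d}$. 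The hard part of my plan is the matching bound $w(K_{\mathbf{LOCC^{\rightarrow}}}^{\circ})\lesssim 1/\sqrt{d}$, since the dual of the $\mathbf{LOCC^{\rightarrow}}$-norm admits no clean algebraic description: the naive block-diagonal inequality gives only a lower-dimensional body, so one must run the estimate uniformly over Alice's choice of basis to produce a full-dimensional convex subset of $K_{\mathbf{LOCC^{\rightarrow}}}$ whose gauge is controlled, and then close the loop via Bourgain--Milman. This is where the convex-geometric tools of the Appendix have to do real work.
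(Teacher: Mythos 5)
Your reduction of the whole theorem to upper bounds on $w(K)$ and $w(K^{\circ})$ via Urysohn plus Bourgain--Milman is a valid framework, and two of your three blocks go through. For $\mathbf{PPT}$ your route is correct and genuinely different from the paper's: the paper gets $\vrad(K_{\mathbf{PPT}})\gtrsim d$ by applying the Milman--Pajor inequality to $[-\Id,\Id]$ and the orthogonal map $\Gamma$, whereas you bound $w(K_{\mathbf{PPT}}^{\circ})=\E\max(\|u\|_{\infty},\|u^{\Gamma}\|_{\infty})\lesssim 1/d$ and invoke reverse Santal\'o; both work, yours trades Milman--Pajor for the (deeper, but standard) Bourgain--Milman theorem. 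Your $\mathbf{SEP}$ upper bound is also fine: the rank-one refinement with total mass $d^{2}$ plus a net/concentration estimate on $\sup_{x,y}|\langle x,y|\Delta|x,y\rangle|$ is essentially a re-derivation of the bound $w(\mathcal{S})\lesssim d^{-3/2}$ that the paper simply imports from the separable-states literature.

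The genuine gap is the lower bound for $K_{\mathbf{LOCC^{\rightarrow}}}$, which is the heart of the theorem. Your block-diagonal protocol gives $\|\Delta\|_{\mathbf{LOCC^{\rightarrow}}}\geq\sum_{i}\|\Delta_{ii}\|_{1}$ and hence (granting the Wigner-type behaviour of the blocks) $w(K_{\mathbf{LOCC^{\rightarrow}}})\gtrsim\sqrt{d}$ --- but a mean-width lower bound does not yield a volume-radius lower bound; Urysohn goes the wrong way. In your scheme the missing piece is $w(K_{\mathbf{LOCC^{\rightarrow}}}^{\circ})\lesssim 1/\sqrt{d}$, i.e.\ an upper bound on the average of the \emph{gauge} of $K_{\mathbf{LOCC^{\rightarrow}}}$ over the Hilbert--Schmidt sphere, and you neither prove it nor is it clear that it holds: Santal\'o only gives $\vrad(K_{\mathbf{LOCC^{\rightarrow}}}^{\circ})\lesssim 1/\sqrt{d}$, and the mean width of a polar body can be far larger than its volume radius. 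You acknowledge this is ``where the real work happens'' but do not supply the argument, so the proof is incomplete exactly where it needs to be hardest. The paper circumvents the dual body entirely by exhibiting an explicit full-dimensional convex subset of $K_{\mathbf{LOCC^{\rightarrow}}}$: the (scaled, symmetrized) set $T=\conv\{\ketbra{\psi}{\psi}\otimes\sigma \st \|\sigma\|_{\infty}\leq 3/d\}$, realized inside $K_{\mathbf{LOCC^{\rightarrow}}}$ by a two-step measure-and-refine POVM, whose volume is bounded below through $\conv(T,-T)\supset\frac{1}{d}S_{1}^{d}\hat{\otimes}S_{\infty}^{d}$ together with the exact volume formula for $B_{1}^{n}\hat{\otimes}K$, Milman--Pajor and Rogers--Shephard. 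Some such direct volumetric construction (or a genuine proof of the dual mean-width bound) is needed to close your argument.
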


To prove these results, we will make essential use of the Urysohn inequality (Theorem \ref{theorem:urysohn}): for any convex body
$K \subset {\mathbf{R}}^n$, we have ${\mathrm{vrad}}(K) \leq w(K)$. In particular, Theorem \ref{theorem:vrad-w-PPT-SEP} follows from the following four
inequalities: (a) $w(K_{{\mathbf{\underline{PPT}}}})\,\preceq\, d$, (b) $\vrad(K_{{\mathbf{\underline{PPT}}}}) \,\succeq\, d$, (c) $w(K_{{\mathbf{\underline{SEP}}}})\,\preceq\, \sqrt{d}$
(d) $\vrad(K_{{\mathbf{\underline{LOCC}^\rightarrow}}}) \,\succeq\, \sqrt{d}$.

\subsection{(a) Proof that $w(K_{{\mathbf{\underline{PPT}}}})\,\preceq\, d$} This follows from the inclusion $K_{{\mathbf{\underline{PPT}}}} \subset [-{\mathrm{Id}},{\mathrm{Id}}]$, together with the estimate on the mean width of $[-{\mathrm{Id}},{\mathrm{Id}}]$
 from Theorem \ref{theorem:operator-norm}.

\subsection{(b) Proof that $\vrad(K_{{\mathbf{\underline{PPT}}}})\,\succeq\, d$} We start by noticing that
\[K_{{\mathbf{\underline{PPT}}}}=[-{\mathrm{Id}},{\mathrm{Id}}]\cap [-{\mathrm{Id}},{\mathrm{Id}}]^{\Gamma}.\]
We apply the Milman--Pajor inequality (Corollary \ref{corollary:Milman-Pajor}) to the convex body $[-{\mathrm{Id}},{\mathrm{Id}}]$
(which indeed has the origin as center of mass) and to the orthogonal transformation $\Gamma$ (the partial transposition). This yields
\[{\mathrm{vrad}}\left(K_{{\mathbf{\underline{PPT}}}}\right)\geq \frac{1}{2}\frac{{\mathrm{vrad}}\left([-{\mathrm{Id}},{\mathrm{Id}}]\right)^2}{w\left([-{\mathrm{Id}},{\mathrm{Id}}]\right)}\simeq d,\]
where we used the estimates on the volume radius and the mean width of $[-{\mathrm{Id}},{\mathrm{Id}}]$ from Theorem \ref{theorem:operator-norm}.

\subsection{(c) Proof that $w(K_{{\mathbf{\underline{SEP}}}})\,\preceq\, \sqrt{d}$} We are going to relate $K_{{\mathbf{\underline{SEP}}}}$ with the set ${\mathcal{S}}$ of
separable states on ${\mathbf{C}}^d \otimes {\mathbf{C}}^d$. In fact, denoting the cone with base ${\mathcal{S}}$ by
\[ {\mathbf{R}}^+{\mathcal{S}} := \{ \lambda \rho \ : \ \lambda \in {\mathbf{R}}^+,\ \rho \in {\mathcal{S}} \}, \]
we have $K_{{\mathbf{\underline{SEP}}}} = L \cap (-L)$, where
\[ L := 2\left({\mathbf{R}}^+{\mathcal{S}}\cap[0,{\mathrm{Id}}]\right)-{\mathrm{Id}} .\]
This gives immediately an upper bound on the mean width of $K_{{\mathbf{\underline{SEP}}}}$
\[ w(K_{{\mathbf{\underline{SEP}}}}) \leq w(L) \leq 2 w({\mathbf{R}}^+{\mathcal{S}}\cap[0,{\mathrm{Id}}]) \leq 2 w(\{\lambda\rho \ : \ \lambda\in[0,d^2],\ \rho\in{\mathcal{S}}\})
=2d^2 w(\conv(\{0\},{\mathcal{S}})).\]
Now, if $K,K'$ are two convex sets such that $K \cap K' \neq \emptyset$, then $w(\conv(K,K')) \leq w(K) + w(K')$. So, denoting  by
$\alpha_n$ the mean width of a segment $[-x,x]$ for $x$ a unit vector in ${\mathbf{R}}^n$, we have
\[ w(\conv(\{0\},{\mathcal{S}})) \leq w(\conv \{0,{\mathrm{Id}}/d^2 \}) + w ({\mathcal{S}}) \,\preceq\, \frac{\alpha_{d^4}}{d} + \frac{1}{d^{3/2}} \,\preceq\,
\frac{1}{d^{3/2}}, \]
where we used the estimate $w({\mathcal{S}}) \simeq d^{-3/2}$ from Theorem \ref{theorem:separable-states}, and the fact that $\alpha_n\simeq n^{-1/2}$
(see Appendix \ref{ap:convex-geometry}).

\subsection{(d) Proof that $\vrad(K_{{\mathbf{\underline{LOCC}^\rightarrow}}})\,\succeq\, \sqrt{d}$} We consider the following set of states on ${\mathbf{C}}^d \otimes {\mathbf{C}}^d$
\[ T = \conv \left\{ \ketbra{\psi}{\psi} \otimes \sigma \ : \ \psi \in S_{{\mathbf{C}}^d}, \ \sigma
\textnormal{ a state on }{\mathbf{C}}^d \textnormal{ such that } \|\sigma\|_{\infty} \leq 3/d \right\}. \]
A connection between $T$ and ${\mathbf{\underline{LOCC}^\rightarrow}}$ is given by the following lemma.

\begin{lemma} \label{lemma:LOCC}
Let $\rho,\rho' \in T$ such that $\rho+\rho'=2 {\mathrm{Id}}/d^2$. Then the operators
$\frac{d^2}{6} \rho $ and $\frac{d^2}{6} \rho'$ belong to $K_{{\mathbf{\underline{LOCC}^\rightarrow}}}$.
\end{lemma}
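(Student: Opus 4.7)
The plan is to exhibit a concrete one-way LOCC POVM on $\C^d\otimes\C^d$ whose aggregation to two outcomes has first element $M_\ast:=\frac{d^2}{12}\rho+\frac{1}{2}\Id$; since then $2M_\ast-\Id=\frac{d^2}{6}\rho$, this forces $\frac{d^2}{6}\rho\in K_{\mathbf{LOCC^{\rightarrow}}}$ by the very definition of that set. The element $\rho'$ is used to complete Alice's measurement to the identity, and the assertion for $\frac{d^2}{6}\rho'$ follows by exchanging the roles of $\rho$ and $\rho'$.

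By Carath\'eodory I first fix finite decompositions
\[ \rho=\sum_k\lambda_k\ketbra{\psi_k}{\psi_k}\otimes\sigma_k,\qquad \rho'=\sum_\ell\mu_\ell\ketbra{\phi_\ell}{\phi_\ell}\otimes\tau_\ell, \]
with $\lambda_k,\mu_\ell\geq 0$, $\sum_k\lambda_k=\sum_\ell\mu_\ell=1$, unit vectors $\psi_k,\phi_\ell\in\C^d$ and states $\sigma_k,\tau_\ell$ on $\C^d$ satisfying $\|\sigma_k\|_\iy,\|\tau_\ell\|_\iy\leq 3/d$. Taking the partial trace of $\rho+\rho'=2\Id/d^2$ on the second tensor factor yields $\rho_A+\rho'_A=(2/d)\Id$, so
\[ \mathrm{M}^A:=\Bigl(\tfrac{d\lambda_k}{2}\ketbra{\psi_k}{\psi_k}\Bigr)_k\cup\Bigl(\tfrac{d\mu_\ell}{2}\ketbra{\phi_\ell}{\phi_\ell}\Bigr)_\ell \]
is a genuine POVM on $\C^d$, which I take to be Alice's measurement. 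For Bob's response, after Alice's outcome ``$k$'' I let him perform the two-outcome POVM $\bigl(\tfrac{d\sigma_k}{6}+\tfrac{\Id}{2},\,\tfrac{\Id}{2}-\tfrac{d\sigma_k}{6}\bigr)$, which is admissible precisely because $\|\sigma_k\|_\iy\leq 3/d$ forces $\tfrac{d\sigma_k}{6}\leq\tfrac{\Id}{2}$; after outcome ``$\ell$'' I let Bob perform the trivial $\bigl(\tfrac{\Id}{2},\tfrac{\Id}{2}\bigr)$. Grouping all the ``first'' sub-outcomes then produces
\[ M_\ast=\sum_k\tfrac{d\lambda_k}{2}\ketbra{\psi_k}{\psi_k}\otimes\Bigl(\tfrac{d\sigma_k}{6}+\tfrac{\Id}{2}\Bigr)+\sum_\ell\tfrac{d\mu_\ell}{2}\ketbra{\phi_\ell}{\phi_\ell}\otimes\tfrac{\Id}{2}=\tfrac{d^2}{12}\rho+\tfrac{d}{4}(\rho_A+\rho'_A)\otimes\Id=\tfrac{d^2}{12}\rho+\tfrac{\Id}{2}, \]
as desired.

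There is no real analytic difficulty; the task is simply to balance constants. The factor $1/6$ is forced in the following sense: the largest scalar Bob can place in front of $\sigma_k$ inside a $[0,\Id]$-valued POVM element while also carrying an $\Id/2$ offset is $d/6$, which, combined with the renormalization $d/2$ on Alice's side, produces the coefficient $d^2/12$ of $\rho$ in $M_\ast$. The $\Id/2$ offset itself is assembled when the constant parts of Bob's $k$-responses are added, via $\rho_A+\rho'_A=(2/d)\Id$, to the trivial $(\Id/2,\Id/2)$ responses on the $\rho'$-side; this is precisely where the hypothesis $\rho+\rho'=2\Id/d^2$ enters.
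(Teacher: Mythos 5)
Your proposal is correct and follows essentially the same route as the paper: decompose $\rho,\rho'$ into convex combinations of product terms, let Alice measure the normalized first-factor projectors (valid because $\rho_A+\rho'_A=\frac{2}{d}\Id$), let Bob respond conditionally (valid because $\|\sigma_k\|_{\infty}\leq 3/d$), and read off $\frac{d^2}{6}\rho$ from an aggregation of the resulting one-way LOCC POVM. The only difference is the cosmetic choice of Bob's two-outcome responses — you shift by $\Id/2$ so that the aggregated element $M_\ast$ satisfies $2M_\ast-\Id=\frac{d^2}{6}\rho$ directly, whereas the paper uses the responses $\big(\frac{d}{3}\sigma_i,\Id-\frac{d}{3}\sigma_i\big)$ on both branches and identifies $\frac{d^2}{6}\rho$ as a subset sum lying in $K_{\mathrm{M}}$.
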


\begin{proof}{
There exist convex combinations $(\alpha_i)_{i \in I}, (\alpha'_j)_{j \in J}$, unit vectors $(\psi_i)_{i \in I}, (\psi'_j)_{j \in J}$
and states $(\sigma_i)_{i \in I},(\sigma'_j)_{j \in J}$ satisfying $\|\sigma_i\|_{\infty} \leq 3/d$, $\|\sigma'_j\|_{\infty} \leq 3/d$, such that
\[ \rho = \sum_{i \in I} \alpha_i \ketbra{\psi_i}{\psi_i} \otimes \sigma_i \ \ \textnormal{ and } \ \ \rho' = \sum_{j \in J} \alpha'_j
\ketbra{\psi'_j}{\psi'_j} \otimes \sigma'_j .\]
Define states $(\tau_i)_{i \in I}$ and $(\tau'_j)_{j \in J}$ by the relations $\sigma_i + 2\tau_i = \sigma'_j + 2\tau'_j= 3{\mathrm{Id}}/d$. It can then be checked that the following POVM is in
${\mathbf{\underline{LOCC}^\rightarrow}}$
\[ {\mathrm{M}} = \left( \frac{d^2}{6} \alpha_i \ketbra{\psi_i}{\psi_i} \otimes \sigma_i,
\frac{d^2}{6} \alpha_i \ketbra{\psi_i}{\psi_i} \otimes 2\tau_i, \frac{d^2}{6} \alpha'_j \ketbra{\psi'_j}{\psi'_j} \otimes \sigma'_j,
\frac{d^2}{6} \alpha'_j \ketbra{\psi'_j}{\psi'_j} \otimes 2\tau'_j \right)_{i\in I,j \in J}. \]
Hence, the operators $\frac{d^2}{6} \rho$ and $\frac{d^2}{6} \rho'$ belong to $K_{{\mathrm{M}}}$ and therefore to $K_{{\mathbf{\underline{LOCC}^\rightarrow}}}$.
}\end{proof}

Let $\widetilde{T}$ be the symmetrization of $T$ defined as $\widetilde{T} = T \cap \left\{ 2 \frac{{\mathrm{Id}}}{d^2} - T \right\}$.
By Lemma \ref{lemma:LOCC} and the fact that $K_{{\mathbf{\underline{LOCC}}}^\rightarrow}$ is centrally symmetric, we have
\[ \frac{d^2}{6} \conv ( \widetilde{T}, -\widetilde{T} ) \subset K_{{\mathbf{\underline{LOCC}}}^\rightarrow} .\]

We are going to give a lower
bound on the volume radius of $\widetilde{T}$.
The center of mass of the set $T$ equals the maximally mixed state ${\mathrm{Id}}/d^2$ (indeed, the center of mass commutes with local unitaries). By
Corollary \ref{corollary:Milman-Pajor}, this implies that $\vrad( \widetilde{T} ) \geq \frac{1}{2} \vrad(T)$. On the other hand, one has (see definitions in Appendix \ref{ap:standard})
\begin{equation} \label{eq:convT} \conv(T, - T) \supset \frac{1}{d} \cdot S_1^d \hat{\otimes} S_{\infty}^d .\end{equation}
Let us check \ref{eq:convT}. An extreme point of $\frac{1}{d} \cdot S_1^d \hat{\otimes} S_{\infty}^d$ has the form $\pm \ketbra{\psi}{\psi} \otimes A$
for $\psi \in S_{{\mathbf{C}}^d}$ and $A \in \cH({\mathbf{C}}^d)$ such that $\|A\|_{\infty} \leq 1/d$. Let $\e = 2 - \|A\|_{1} \geq 1$ and let $A^+,A^-$ be the positive and
negative parts of $A$. Set $\lambda^{\pm} = \e/4 + {\mathrm{Tr}} A^\pm/2$ (so that $\lambda^++\lambda^-=1$), and consider the states
$\rho^\pm =  \frac{1}{\lambda^\pm} ( \e/4 \cdot {\mathrm{Id}}/d + A^\pm/2)$. We have
\[ \| \rho^{\pm} \|_{\infty} \leq \frac{{\e}/{4d}+ {1}/{2d}}{{\e}/{4}} \leq \frac{3}{d}\]
and therefore $\rho^\pm \in T$. Since $A = \lambda^+ \rho^+ - \lambda^- \rho^-$, this shows \ref{eq:convT}. Using Theorem
\ref{theorem:volume-S1-Sinfini}, it follows that
\[ \vrad( \conv(T,-T) ) \,\succeq\, d^{-3/2} .\]
And therefore,
\[ \vrad( \conv ( \widetilde{T}, -\widetilde{T} ) ) \,\succeq\, \vrad ( \widetilde{T}) \,\succeq\, \vrad(T) \,\succeq\, \vrad( \conv(T,-T) ) \,\succeq\, d^{-3/2} , \]
the first and third inequalities being due to the Rogers--Shephard inequality (Theorem \ref{theorem:rogers-shephard}).
We eventually get
\[ \vrad( K_{{\mathbf{\underline{LOCC}^\rightarrow}}} ) \,\succeq\, \sqrt{d} .\]

\subsection{Discriminating between two generic states}

Let ${\mathbf{\underline{M}}}$ be a family of POVMs on ${\mathbf{C}}^d$ (possibly reduced to a single POVM). We relate the mean width $w(K_{{\mathbf{\underline{M}}}})$ to the typical
performance of ${\mathbf{\underline{M}}}$ for
discriminating two random states, chosen independently and uniformly from the set ${\mathcal{D}}({\mathbf{C}}^d)$ of all states on ${\mathbf{C}}^d$.

\begin{proposition} \label{proposition:Delta-vs-difference-of-states}
Let ${\mathbf{\underline{M}}}$ be a family of POVMs on ${\mathbf{C}}^d$, and denote $\omega := w(P_{H_0} K_{{\mathbf{\underline{M}}}})$, where $P_{H_0}$ stands for the orthogonal
projection onto
the hyperplane $H_0 \subset \cH({\mathbf{C}}^d)$ of trace $0$ Hermitian operators on ${\mathbf{C}}^d$.
Let $\rho$ and $\sigma$ be two random states, chosen independently with respect to the uniform measure on ${\mathcal{D}}({\mathbf{C}}^d)$.
Then,
\begin{equation}
\label{eq:expectation}
\E := \E \| \rho - \sigma \|_{{\mathbf{\underline{M}}}} \simeq \frac{\omega}{\sqrt{d}} .
\end{equation}
Moreover, we have the concentration estimate
\begin{equation}
\label{eq:concentration}
\forall\ t>0,\ {\mathbf{P}} \left( \left| \| \rho - \sigma \|_{{\mathbf{\underline{M}}}} - \E \right| > t \right) \leq 2 \exp (-cdt^2), \end{equation}
$c$ being a universal constant.
\end{proposition}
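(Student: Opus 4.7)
The plan is to handle the two claims separately. For the concentration bound \eqref{eq:concentration}, the Hilbert--Schmidt measure on $\mathcal{D}(\C^d)$ is the push-forward, under the partial trace $\tr_2$, of the uniform measure on the unit sphere $S(\C^d \otimes \C^d)$. The map $\ket{\psi}\mapsto \tr_2\ketbra{\psi}{\psi}$ is $O(1)$-Lipschitz from this sphere to $(\cH(\C^d),\|\cdot\|_2)$, and $\Delta \mapsto \|\Delta\|_{\mathbf{M}}$ is $\sqrt d$-Lipschitz in $\|\cdot\|_2$ since $\|\cdot\|_{\mathbf{M}}\leq\|\cdot\|_1\leq\sqrt d\,\|\cdot\|_2$. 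Composing gives a $\lesssim\sqrt d$-Lipschitz function on the product of two copies of $S(\C^d \otimes \C^d)$, whose real dimension is $\simeq d^2$; L\'evy's concentration-of-measure inequality on this product sphere then delivers the bound $2\exp(-cdt^2)$.

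For the moment estimate \eqref{eq:expectation}, the plan is to compare $\Delta := \rho - \sigma$ with a suitably scaled Gaussian on $H_0$. A direct second-moment calculation from the Hilbert--Schmidt identity $\E\rho\otimes\rho=\frac{1}{d^2+1}\bigl(\Id\otimes\Id + F/d\bigr)$ (with $F$ the swap on $\C^d\otimes\C^d$, obtained from Schur--Weyl covariance and the known $\E\tr\rho^2=2d/(d^2+1)$) yields, for every $A\in\cH(\C^d)$,
\[ \E \langle A,\Delta\rangle^2 = \frac{2}{d(d^2+1)}\|P_{H_0}A\|_2^2 \simeq \frac{2}{d^3}\|P_{H_0}A\|_2^2. \]
Thus $\Delta$ is isotropic on $H_0$ with typical amplitude $\simeq d^{-3/2}$. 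If $\Delta$ were exactly $\alpha\tilde G$ with $\tilde G$ a standard Gaussian on $H_0$ and $\alpha\simeq d^{-3/2}$, polar coordinates on $H_0$ (and independence of $\|\tilde G\|_2$ and $\tilde G/\|\tilde G\|_2$) would give
\[ \E\|\alpha\tilde G\|_{\mathbf{M}} = \alpha\cdot\E\|\tilde G\|_2\cdot w(P_{H_0}K_{\mathbf{M}}) \simeq d^{-3/2}\cdot d\cdot\omega = \omega/\sqrt d, \]
which is exactly the target estimate.

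To justify this Gaussian heuristic I would use the Ginibre--Wishart parametrization $\rho = G_1G_1^*/\tr(G_1G_1^*)$, $\sigma = G_2G_2^*/\tr(G_2G_2^*)$ with $G_i$ independent standard $d\times d$ complex Ginibre matrices. Concentration of $\tr(G_iG_i^*)$ around $d^2$ allows one to replace $\rho-\sigma$ by $(G_1G_1^*-G_2G_2^*)/d^2$ at negligible cost in $\|\cdot\|_{\mathbf{M}}$; the latter is a sum of $d$ iid centred rank-one Hermitian terms whose covariance matches (after projection onto $H_0$) that of $\sqrt{2}\,d^{-3/2}$ times a standard Gaussian on $H_0$. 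The upper bound $\E\|\Delta\|_{\mathbf{M}}\lesssim\omega/\sqrt d$ can then be derived by generic chaining (Dudley--Talagrand) applied to the sub-Gaussian process $A\mapsto\langle A,\Delta\rangle$ over $A\in K_{\mathbf{M}}$, whose sub-Gaussian proxy $\lesssim \|A\|_2/d^{3/2}$ comes from a Hanson--Wright-type estimate on the Gaussian entries of the $G_i$. The main obstacle is the matching lower bound: variance and isotropy alone do not imply that a typical realization of $\Delta$ attains the width $\omega/\sqrt d$ of $P_{H_0}K_{\mathbf{M}}$, and I would either construct a quantitative Gaussian coupling of $\Delta$ with $\alpha\tilde G$ on $H_0$, or combine a Paley--Zygmund anti-concentration argument with the concentration bound \eqref{eq:concentration} already established to transfer the lower bound from the Gaussian model to $\Delta$ itself.
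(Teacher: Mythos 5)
Your treatment of the concentration estimate \eqref{eq:concentration} is correct and is essentially the paper's argument: the purification/partial-trace picture you use is the same as the \.Zyczkowski--Sommers representation $\rho\sim MM^\dagger$ with $M$ uniform on the Hilbert--Schmidt sphere, the Lipschitz constant $\lesssim\sqrt{d}$ is obtained from the same two inequalities, and L\'evy's lemma on the product of spheres (of real dimension $\simeq d^2$) gives $2\exp(-cdt^2)$. Your covariance computation $\E\langle A,\rho-\sigma\rangle^2=\frac{2}{d(d^2+1)}\|P_{H_0}A\|_2^2$ is also correct, and the Gaussian heuristic does predict the right answer $\omega/\sqrt{d}$.

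However, the proof of \eqref{eq:expectation} is not complete, and you have identified the gap yourself: isotropy of $\rho-\sigma$ on $H_0$ with amplitude $d^{-3/2}$ gives no lower bound on $\E\sup_{A\in K_{\mathbf{M}}}\langle A,\rho-\sigma\rangle$ (an isotropic vector supported on a tiny set can miss the directions where $K_{\mathbf{M}}$ is wide), and neither of your proposed fixes closes it. Paley--Zygmund only yields anti-concentration of $\|\rho-\sigma\|_{\mathbf{M}}$ around \emph{its own} mean; it cannot show that this mean is $\gtrsim\omega/\sqrt{d}$, which is the whole point. A quantitative Gaussian coupling is not constructed, and building one would essentially require comparing the spectral distribution of $\rho-\sigma$ to that of a GUE-type matrix --- which is exactly what the paper does, by a different and more elementary route that you are missing. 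The key observation is that both $\rho-\sigma$ and a uniform point $\Delta$ of the Hilbert--Schmidt sphere of $H_0$ have Haar-distributed eigenbases, so $\E\|\cdot\|_{\mathbf{M}}$ of either depends only on its spectrum through the permutation-invariant norm $|||x|||:=\int_{\mathcal{U}(d)}\|U\diag(x)U^\dagger\|_{\mathbf{M}}\,\mathrm{d}U$ on the trace-zero hyperplane of $\R^d$. A majorization lemma then gives the \emph{two-sided} comparison $|||x|||\leq 2d\,\frac{\|x\|_\infty}{\|y\|_1}\,|||y|||$ for any such norm, and the standard spectral estimates $\|\rho-\sigma\|_\infty\simeq 1/d$, $\|\rho-\sigma\|_1\simeq 1$, $\|\Delta\|_\infty\simeq 1/\sqrt{d}$, $\|\Delta\|_1\simeq\sqrt{d}$ yield $\E\|\rho-\sigma\|_{\mathbf{M}}\simeq\frac{1}{\sqrt{d}}\E\|\Delta\|_{\mathbf{M}}=\frac{\omega}{\sqrt{d}}$ in both directions at once. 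Note also that this argument is exact and multiplicative, so it works for an \emph{arbitrary} family $\mathbf{M}$; by contrast, your replacement of $\tr(G_iG_i^*)$ by $d^2$ introduces an additive error of order $1/d$ in $\E\|\rho-\sigma\|_{\mathbf{M}}$, which is not negligible when $\omega/\sqrt{d}\ll 1/d$ (e.g.\ for a single nearly trivial POVM), and your chaining upper bound would in addition have to handle the sub-exponential part of the Hanson--Wright tail (a $\gamma_1$ term), which you do not address.
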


We first deduce Theorem \ref{theorem:typical-states} from Theorem \ref{theorem:vrad-w-PPT-SEP} and
Proposition \ref{proposition:Delta-vs-difference-of-states} (we warn the reader that we apply the latter on the space ${\mathbf{C}}^d \otimes {\mathbf{C}}^d$,
and therefore the ambient dimension is $d^2$ instead of $d$).

\begin{proof}{[Proof of Theorem \ref{theorem:typical-states}]
Let ${\mathbf{\underline{M}}} \in \{ {\mathbf{\underline{LOCC}}}, {\mathbf{\underline{LOCC}^\rightarrow}},{\mathbf{\underline{SEP}}},{\mathbf{\underline{PPT}}} \}$.
While we computed $w(K_{{\mathbf{\underline{M}}}})$ in Theorem \ref{theorem:vrad-w-PPT-SEP}, the relevant quantity here
is $w(P_{H_0} K_{{\mathbf{\underline{M}}}})$. We show that both are comparable. We first have the upper bound (see \ref{eq:width-projection} from Appendix
\ref{ap:convex-geometry})
\[ w(P_{H_0}K_{{\mathbf{\underline{M}}}}) \,\preceq\, w(K_{{\mathbf{\underline{M}}}}). \]
To get the reverse bound, we consider the volume radius rather than the mean width. If we denote more generally by $H_t$ the hyperplane of
trace $t$ operators on ${\mathbf{C}}^d$, we have by
Fubini's theorem
\[ \vol_{d^4}(K_{{\mathbf{\underline{M}}}}) = \frac{1}{d} \int_{-d^2}^{d^2} \vol_{d^4-1}( K_{{\mathbf{\underline{M}}}} \cap H_t) \,{\mathrm{d}}t .\]
By the Brunn--Minkowski inequality, the function under the integral is maximal when $t=0$, and therefore
\[ \vol_{d^4}(K_{{\mathbf{\underline{M}}}}) \leq 2d \vol_{d^4-1}(K_{{\mathbf{\underline{M}}}} \cap H_0). \]
It follows easily that $w(P_{H_0}K_{{\mathbf{\underline{M}}}})\geq\vrad(P_{H_0} K_{{\mathbf{\underline{M}}}}) \geq \vrad ( K_{{\mathbf{\underline{M}}}} \cap H_0 ) \,\succeq\,
\vrad(K_{{\mathbf{\underline{M}}}}) \simeq w(K_{{\mathbf{\underline{M}}}})$,
the first inequality being the Urysohn inequality (Theorem \ref{theorem:urysohn}) and the last estimate being by Theorem
\ref{theorem:vrad-w-PPT-SEP}.
Once this is known, Theorem \ref{theorem:typical-states}
is immediate from Proposition \ref{proposition:Delta-vs-difference-of-states}.
}\end{proof}

\begin{proof}{[Proof of Proposition \ref{proposition:Delta-vs-difference-of-states}]
We first show the concentration estimate \ref{eq:concentration}, using the following representation due to \.Zyczkowski and
Sommers \cite{ZS}: $\rho$ has the same distribution as
$MM^\dagger$, where $M$ is uniformly distributed on the Hilbert--Schmidt unit sphere (denoted $S_{HS}$) in the space of complex $d \times d$ matrices.
We estimate the Lipschitz
constant of the function $f:(M,N) \mapsto \|MM^\dagger-NN^\dagger\|_{{\mathbf{\underline{M}}}}$, defined on $S_{HS} \times S_{HS}$, as follows:
\begin{eqnarray*}
f(M_1,N_1)-f(M_2,N_2) &=& \| M_1M_1^\dagger-N_1N_1^\dagger\|_{{\mathbf{\underline{M}}}} - \| M_2M_2^\dagger-N_2N_2^\dagger\|_{{\mathbf{\underline{M}}}}\\
&\leq& \| M_1M_1^\dagger-M_2M_2^\dagger\|_{{\mathbf{\underline{M}}}} +
\| N_1N_1^\dagger-N_2N_2^\dagger\|_{{\mathbf{\underline{M}}}} \\
&\leq& \sqrt{d} \left( \| M_1M_1^\dagger-M_2M_2^\dagger\|_{2} + \| N_1N_1^\dagger-N_2N_2^\dagger\|_{2} \right) \\
&\leq& \sqrt{d} \left( 2\|M_1-M_2\|_{2} + 2\|N_1-N_2\|_{2} \right).
\end{eqnarray*}
We used the standard bounds $\|\cdot\|_{{\mathbf{\underline{M}}}}\leq\|\cdot\|_1\leq\sqrt{d}\|\cdot\|_2$ and $||AA^\dagger-BB^\dagger||_{2} \leq
||(A-B)B^\dagger||_{2}+||A(A-B)^\dagger||_{2}$
to get the second and the third inequalities respectively. We obtain as a consequence of Lemma \ref{lemma:levy-two-spheres} below
(a variation on L\'evy's lemma)
the desired estimate
\[ {\mathbf{P}} \left( \left| \|\rho-\sigma\|_{{\mathbf{\underline{M}}}} -\E \right| > t \right) \leq 2 \exp(-cdt^2) .\]
In our application of Lemma \ref{lemma:levy-two-spheres}, we identify the set of complex $d \times d$ matrices with ${\mathbf{R}}^n$ ($n=2d^2$), and use
$L=2\sqrt{d}$.

\begin{lemma}
\label{lemma:levy-two-spheres}
Let $S$ be the unit sphere in ${\mathbf{R}}^n$, and equip $S \times S$ with the metric $d((x,y),(x',y')):=|x-x'|+|y-y'|$ and the measure $\mu \otimes \mu$,
where $\mu$ is the
uniform probability measure on $S$. For any $L$-Lipschitz function $f : S\times S \to {\mathbf{R}}$ and any $t>0$,
\[ {\mathbf{P}}( | f - \E f| > t ) \leq 2 \exp(-cn t^2/L^2), \]
$c$ being a universal constant.
\end{lemma}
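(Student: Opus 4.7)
The plan is to derive the concentration inequality on $S \times S$ from the classical L\'evy concentration inequality on a single sphere, by a two-step conditioning argument. Recall that standard L\'evy's lemma asserts that any $L$-Lipschitz function $g : S \to \R$ satisfies $\P(|g-\E g|>s) \leq 2\exp(-c_1 n s^2/L^2)$ for some universal constant $c_1$.

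First I would introduce the ``partial average'' function $g : S \to \R$ defined by
\[ g(x) := \int_S f(x,y) \, \mathrm{d}\mu(y). \]
Because the metric on $S \times S$ is the sum of the two Euclidean distances, for fixed $y$ the map $x \mapsto f(x,y)$ is $L$-Lipschitz, and likewise for fixed $x$ the map $y \mapsto f(x,y)$ is $L$-Lipschitz. Consequently $g$ itself is $L$-Lipschitz on $S$, since
\[ |g(x)-g(x')| \leq \int_S |f(x,y)-f(x',y)| \, \mathrm{d}\mu(y) \leq L|x-x'|. \]
Note also that $\E g = \E f$ by Fubini.

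Next I would split the deviation via the triangle inequality
\[ |f(x,y) - \E f| \leq |f(x,y) - g(x)| + |g(x) - \E f|, \]
and bound each piece separately by $t/2$. For the second piece, L\'evy's lemma applied to $g$ on $S$ gives $\P(|g(x)-\E f| > t/2) \leq 2\exp(-c_1 n t^2/(4L^2))$. For the first piece, for every fixed $x$ the function $y \mapsto f(x,y) - g(x)$ has mean zero and is $L$-Lipschitz, so by L\'evy's lemma on the $y$-sphere
\[ \P_y\bigl( |f(x,y)-g(x)| > t/2 \bigr) \leq 2\exp(-c_1 n t^2/(4L^2)). \]
Integrating over $x$ (Fubini again) yields the same bound for the joint probability. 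Summing the two contributions and absorbing the factors into a new universal constant $c$ gives the stated estimate $\P(|f-\E f|>t) \leq 2\exp(-cnt^2/L^2)$.

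No step here is a real obstacle; the only point that needs a moment of attention is checking that the chosen metric $d((x,y),(x',y')) = |x-x'|+|y-y'|$ is compatible with the ``one coordinate at a time'' Lipschitz bounds used above, which is immediate since freezing one coordinate reduces $d$ to the Euclidean distance on the other sphere. Hence the Lipschitz constant $L$ controls both partial maps, and the two-step argument closes.
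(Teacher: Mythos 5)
Your proof is correct and follows essentially the same route as the paper: the paper also introduces the partial average $E_x:=\int_S f(x,y)\,\mathrm{d}\mu(y)$, applies L\'evy's lemma for fixed $x$ to $y\mapsto f(x,y)$ around $E_x$, and then again to the $L$-Lipschitz function $x\mapsto E_x$. The only cosmetic difference is that you spell out the triangle-inequality split and the absorption of the constant $4$ into $c$, which the paper leaves implicit.
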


Lemma \ref{lemma:levy-two-spheres} can be deduced quickly from the usual L\'evy lemma (see \cite{Levy}) which quantifies the phenomenon of
concentration of measure on the sphere.
If we denote $E_x := \int_{S} f(x,y) \,{\mathrm{d}}\mu(y)$, we may apply L\'evy's lemma
to show that, for fixed $x$, the function $y \mapsto f(x,y)$ concentrates around its expectation $E_x$, and again L\'evy's lemma to show that the function
$x \mapsto E_x$ (which is $L$-Lipschitz,
as an average of $L$-Lipschitz functions) is also well-concentrated.

\medskip

We now prove the first part of Proposition \ref{proposition:Delta-vs-difference-of-states}.
Let $\Delta$ be a random matrix uniformly chosen from the Hilbert--Schmidt sphere in the hyperplane $H_0$, and $\rho,\sigma$ be independent
random states with uniform distribution.
We claim that, from a very rough perspective,
the spectra of $\rho-\sigma$ and $\frac{1}{\sqrt{d}} \Delta$ look similar. More precisely, we have

\begin{lemma} \label{lemma:global-spectra}
Let $\rho,\sigma$ be independent random states uniformly chosen from ${\mathcal{D}}({\mathbf{C}}^d)$, and $\Delta$ be a random
matrix uniformly chosen from the Hilbert--Schmidt sphere in the hyperplane $H_0$. Then with large probability
\[ \|\Delta\|_1 \simeq \sqrt{d},\ \ \|\Delta\|_{2} = 1,\ \ \|\Delta\|_{\infty} \simeq 1/\sqrt{d},\]
\[ \| \rho - \sigma \|_1 \simeq 1,\ \ \|\rho-\sigma\|_{2}\simeq 1/\sqrt{d},\ \ \|\rho-\sigma\|_{\infty} \simeq 1/d.\]
Moreover these statements hold in expectation: e.g. $\E \| \Delta \|_{\infty}\simeq 1/\sqrt{d}$ and $\E \|\rho-\sigma\|_{\infty} \simeq 1/d$.
\end{lemma}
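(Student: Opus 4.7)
The plan is to handle the single random objects $\Delta$ and $\rho$ separately using classical random matrix theory, and then derive all statements about $\rho-\sigma$ from these together with elementary norm inequalities. Concentration of measure (via Lévy's lemma, as already invoked in the proof of Proposition~\ref{proposition:Delta-vs-difference-of-states}) upgrades expectation estimates to high-probability ones throughout.

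For $\Delta$ uniform on the Hilbert--Schmidt sphere of $H_0$, the equality $\|\Delta\|_2=1$ is by definition. Writing $\Delta = G/\|G\|_2$ with $G$ a traceless GUE matrix, classical concentration results give $\|G\|_\infty \simeq 2\sqrt{d}$ and $\|G\|_2 \simeq d$ with exponentially small failure probability, whence $\|\Delta\|_\infty \simeq 1/\sqrt{d}$ both in expectation and with high probability. Since $\Delta$ has at most $d$ nonzero eigenvalues, Cauchy--Schwarz yields the upper bound $\|\Delta\|_1 \leq \sqrt{d}\,\|\Delta\|_2 = \sqrt{d}$, and the elementary inequality $\sum\lambda_i^2 \leq \max_j|\lambda_j|\sum_i|\lambda_i|$ gives $\|\Delta\|_1 \geq \|\Delta\|_2^2/\|\Delta\|_\infty \simeq \sqrt{d}$.

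For a single state $\rho$ drawn from the Hilbert--Schmidt measure, I would use the Zyczkowski--Sommers representation $\rho = MM^\dagger$ with $M$ uniform on the Hilbert--Schmidt sphere of $M_d(\C)$. The Marchenko--Pastur law, together with standard concentration bounds on the extreme singular value of $M$, yields $\|\rho\|_\infty \simeq 1/d$, both in expectation and with high probability. A direct moment computation gives $\E\tr(\rho^2) \simeq 1/d$, and since $\rho$ and $\sigma$ are independent with $\E\rho = \E\sigma = \Id/d$, one has $\E\tr(\rho\sigma) = 1/d$; hence $\E\|\rho-\sigma\|_2^2 = 2\E\tr(\rho^2) - 2\E\tr(\rho\sigma) \simeq 1/d$. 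Applying Lévy's lemma exactly as in the proof of Proposition~\ref{proposition:Delta-vs-difference-of-states}, but with the Lipschitz constant $2\sqrt{d}$ replaced by $2$ (since we now control $\|\cdot\|_2$ rather than $\|\cdot\|_{\mathbf{M}}$), concentrates $\|\rho-\sigma\|_2$ around its mean at scale $O(1/d) \ll 1/\sqrt{d}$, giving $\|\rho-\sigma\|_2 \simeq 1/\sqrt{d}$ with high probability.

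The remaining estimates on $\rho-\sigma$ follow algebraically. The triangle inequality gives the upper bounds $\|\rho-\sigma\|_\infty \leq \|\rho\|_\infty+\|\sigma\|_\infty \simeq 1/d$ and $\|\rho-\sigma\|_1 \leq 2$. For the lower bounds, since $\rho-\sigma$ has at most $d$ eigenvalues, $\|\rho-\sigma\|_\infty \geq \|\rho-\sigma\|_2/\sqrt{d} \simeq 1/d$, and then $\|\rho-\sigma\|_1 \geq \|\rho-\sigma\|_2^2/\|\rho-\sigma\|_\infty \simeq 1$. The main obstacle is the random matrix input: controlling the extreme eigenvalues of GUE and of a Hilbert--Schmidt-random state with both good expectation bounds and exponentially small failure probability. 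Both are classical, and once they are in hand the rest reduces to a short, purely deterministic chain of Hölder-type inequalities together with one Lévy-type concentration step for $\|\rho-\sigma\|_2$.
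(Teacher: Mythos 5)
Your proof follows essentially the same route as the paper's: reduce everything to the upper bound on $\|\cdot\|_\infty$ and the two-sided control of $\|\cdot\|_2$ via the inequalities $\|\cdot\|_1\leq\sqrt{d}\,\|\cdot\|_2\leq d\,\|\cdot\|_\infty$ and $\|\cdot\|_2^2\leq\|\cdot\|_1\|\cdot\|_\infty$, bound the operator norms by standard random-matrix arguments, and treat $\|\rho-\sigma\|_2$ through the \.Zyczkowski--Sommers representation, a second-moment computation and a L\'evy-type concentration step. One small caution: since $\tr\rho^2\geq 1/d$ holds deterministically and $\E\tr(\rho\sigma)=1/d$ exactly, the quantity $2\E\tr\rho^2-2\E\tr(\rho\sigma)$ involves a cancellation, so to conclude $\E\|\rho-\sigma\|_2^2\gtrsim 1/d$ you need the sharp asymptotic $\E\tr\rho^2\sim 2/d$ (which the moment computation indeed yields, and which is what the paper records as $\E\tr|M|^4\sim 2/d$), not merely $\E\tr\rho^2\simeq 1/d$.
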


In order to compare $\rho-\sigma$ with $\Delta$, we rely on the following lemma. For $x=(x_1,\dots,x_n) \in {\mathbf{R}}^n$, we denote
$\|x\|_{\infty} = \max \{ |x_i| \ : \ 1 \leq i \leq n \}$ and $\|x\|_1= \sum_{i=1}^n |x_i|$.

\begin{lemma} \label{lemma:comparison-of-norms} Let $E = \{ x \in {\mathbf{R}}^n \ : \ \sum_{i=1}^n x_i =0 \}$ and let
$|||\cdot|||$ be a norm on $E$ which is invariant under permutation of
coordinates.
Then, for any nonzero vectors $x,y \in E$, we have
\begin{equation}
\label{eq:equiv}
 |||x|||  \leq  2n\frac{\|x\|_{\infty}}{\|y\|_1} |||y|||.
\end{equation}
\end{lemma}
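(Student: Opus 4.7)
The plan is to reduce the inequality to a majorization statement on $E$ and then apply the Hardy--Littlewood--P\'olya theorem. Recall that for $a,b\in\R^n$ with $\sum a_i=\sum b_i$, the relation $a\prec b$ means $\sum_{i=1}^k a_{(i)}\leq\sum_{i=1}^k b_{(i)}$ for every $k$, where $(\cdot)$ denotes the decreasing rearrangement. By Hardy--Littlewood--P\'olya, this is equivalent to $a$ lying in the convex hull of the $S_n$-orbit of $b$, so since $|||\cdot|||$ is permutation invariant and convex (being a norm), $a\prec b$ implies $|||a|||\leq|||b|||$. Setting $\alpha:=2n\|x\|_\infty/\|y\|_1$, inequality \eqref{eq:equiv} thus reduces to verifying the majorization $x\prec\alpha y$.

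For this, I would bound the partial sums of the decreasing rearrangements of $x$ and $y$ separately. Since $x\in E$ satisfies $\sum_{i=1}^n x_{(i)}=0$, the identity $\sum_{i\leq k} x_{(i)}=-\sum_{i>k} x_{(i)}$ immediately gives the upper bound
\[ \sum_{i=1}^{k}x_{(i)}\;\leq\;\min(k,n-k)\,\|x\|_\infty. \]
The harder step is the matching lower bound for $y$: I claim that
\[ S(k):=\sum_{i=1}^{k}y_{(i)}\;\geq\;\frac{\min(k,n-k)}{2n}\,\|y\|_1. \]
The key observation is that $S$ is concave in $k$ (its increments $y_{(k+1)}$ are non-increasing), with $S(0)=S(n)=0$ and peak value $S(p)=\|y\|_1/2$, where $p$ denotes the number of strictly positive coordinates of $y$ (note $1\leq p\leq n-1$ since $y\neq 0$ and $\sum y_i=0$). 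Concavity then yields the piecewise linear lower bounds $S(k)\geq(k/p)\,\|y\|_1/2$ on $[0,p]$ and $S(k)\geq((n-k)/(n-p))\,\|y\|_1/2$ on $[p,n]$, and a short case analysis (comparing $k$ with both $n/2$ and $p$) shows that each of these dominates $\min(k,n-k)\,\|y\|_1/(2n)$ in the relevant range.

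Combining the two bounds term by term yields $\sum_{i\leq k} x_{(i)}\leq\alpha\sum_{i\leq k} y_{(i)}$ for every $k$ with $\alpha=2n\|x\|_\infty/\|y\|_1$. Together with $\sum x_i=\sum\alpha y_i=0$, this is precisely the definition of $x\prec\alpha y$, and therefore $|||x|||\leq\alpha|||y|||$, as required. The main obstacle is the uniform lower bound on $S(k)$; the concavity argument handles it once the right structural observation is made. Finally, the statement is sharp in its $n$-dependence, as one sees by taking (for $n$ even) $y=(1,\ldots,1,-1,\ldots,-1)/n$ with $n/2$ signs of each type, $x=e_1-e_n$, and $|||\cdot|||=\|\cdot\|_\infty$: then $|||x|||/|||y|||=n$ while $\|x\|_\infty/\|y\|_1=1$, confirming that no factor smaller than $\Theta(n)$ is possible.
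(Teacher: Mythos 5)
Your proof is correct and follows essentially the same route as the paper's: reduce \eqref{eq:equiv} to the majorization $x \prec \alpha y$ with $\alpha = 2n\|x\|_\infty/\|y\|_1$, bound $\sum_{i\leq k} x_i^\downarrow$ above by $\min(k,n-k)\|x\|_\infty$ using $\sum x_i = 0$, and bound $\sum_{i\leq k} y_i^\downarrow$ below by $\frac{\min(k,n-k)}{2n}\|y\|_1$ via the number $p$ of positive coordinates (your concavity-of-partial-sums phrasing is just a repackaging of the paper's averaging argument over the $p$ positive and $n-p$ remaining coordinates). The sharpness example is a nice extra but not part of the claim.
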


Assuming both lemmas, we now complete the proof of Proposition \ref{proposition:Delta-vs-difference-of-states}. On the hyperplane $E \subset {\mathbf{R}}^d$
of vectors whose sum of coordinates is zero,
we define a norm by
\[ |||x||| := \int_{{\mathcal{U}}(d)} \| U \diag(x) U^\dagger \|_{{\mathbf{\underline{M}}}} \, {\mathrm{d}}U , \]
where the integral is taken with respect to the Haar measure on the unitary group, and $\diag(x)$ denotes the diagonal matrix on ${\mathbf{C}}^d$ with diagonal
elements equal to
the coordinates of $x$. Note that $|||\cdot|||$ is obviously invariant under permutation of coordinates. Also, $\Delta$ has the same distribution as
$U \diag\left(\spec(\Delta)\right) U^\dagger$, where $U$ is a Haar-distributed unitary
matrix independent from $\Delta$ and $\spec(A) \in {\mathbf{R}}^d$ denotes the spectrum of $A \in \cH({\mathbf{C}}^d)$ (the ordering of eigenvalues being irrelevant).
The same holds for $\rho-\sigma$ instead of $\Delta$, and it follows that
\[ \E |||\spec(\Delta)||| = \E ||\Delta||_{{\mathbf{\underline{M}}}}  \ \ \ \textnormal{and} \ \ \ \E |||\spec(\rho-\sigma)||| = \E ||\rho - \sigma||_{{\mathbf{\underline{M}}}}. \]

Let us show that
\begin{equation} \label{eq:half} \E ||\rho-\sigma||_{{\mathbf{\underline{M}}}} \simeq \E \frac{1}{\sqrt{d}} ||\Delta||_{{\mathbf{\underline{M}}}}. \end{equation}

We first prove the inequality $\,\preceq\,$. Say that a vector $y \in E$ satisfies the condition $(\star)$ if
$\|y\|_{1} \geq c\sqrt{d}$, where we may choose the constant $c$ such that the random vector $\spec(\Delta)$ satisfies the condition $(\star)$
with probability larger than $1/2$ (this is possible, as we check using Lemma \ref{lemma:global-spectra}).
Now, by Lemma \ref{lemma:comparison-of-norms}, for any $y \in E$ satisfying
condition $(\star)$ and any $x \in E$, we have
\[ |||x||| \,\preceq\, \sqrt{d} \|x\|_{\infty} \cdot |||y||| .\]
We apply this inequality with $x=\spec(\rho-\sigma)$ and take expectation.
This gives (using the statement about expectations in Lemma \ref{lemma:global-spectra})
\[ \E |||\rho-\sigma|||_{{\mathbf{\underline{M}}}} \,\preceq\, \frac{1}{\sqrt{d}} |||y||| .\]
This inequality is true for any $y \in E$ satisfying condition $(\star)$. Therefore,
\begin{eqnarray*}
\E ||\Delta||_{{\mathbf{\underline{M}}}} & = & \E |||\spec(\Delta)|||\\
& \succeq &\, \sqrt{d} \cdot {\mathbf{P}}\big(\spec(\Delta)
\textnormal{ satisfies condition }(\star) \big) \E \|\rho-\sigma\|_{{\mathbf{\underline{M}}}}\\
& \simeq & \sqrt{d} \E \|\rho-\sigma\|_{{\mathbf{\underline{M}}}},
\end{eqnarray*}
as needed. This proves one half of \ref{eq:half}, and the reverse inequality is proved along the exact same lines.
Finally, we note that
\[ \E ||\Delta||_{{\mathbf{\underline{M}}}} = w\left(P_{H_0} K_{{\mathbf{\underline{M}}}} \right),\]
which, together with \ref{eq:half}, shows \ref{eq:expectation}, and concludes the proof.
}\end{proof}

\begin{proof}{[Proof of Lemma \ref{lemma:global-spectra}]
This is folklore in random matrix theory, in fact much more precise results are known
(for example, $\simeq$ can be replaced with $\sim$, with specific constants implicit
in that notation). However, most of the literature focuses on slightly different
random setups. Accordingly, we sketch an essentially self-contained elementary argument for completeness.

First of all, we observe that it is enough to prove the upper estimate for
$\|\cdot\|_{\infty}$ and the lower estimate for $\|\cdot\|_{2}$. Indeed, the
remaining upper estimates and the lower estimate for $\|\cdot\|_{\infty}$
follow then from the generally valid inequalities
$\|\cdot\|_{1} \leq \sqrt{d} \|\cdot\|_{2} \leq d \|\cdot\|_{\infty}$, while
the lower bound for $ \|\cdot\|_{1}$ follows from $\|\cdot\|_{2} \leq \|\cdot\|_{1}^{1/2} \|\cdot\|_{\infty}^{1/2}$.

The upper bound on $\|\cdot\|_{\infty}$ can be proved by a standard net argument. The lower bound on $\|\Delta\|_{2}$ is trivial,
while for $\|\rho -\sigma\|_{2}$ we may proceed as follows. First, using concentration of measure in the form of Lemma \ref{lemma:levy-two-spheres},
$\E \| \rho - \sigma \|_{2}$ is comparable to $\left( \E \| \rho - \sigma \|^2_{2} \right)^{1/2}$. Next, by Jensen inequality,
\[ \E \| \rho - \sigma \|^2_{2} \geq \E \| \rho - {\mathrm{Id}}/d \|^2_{2}. \]
Recalling that $\rho$ can be represented as $MM^\dagger$, with $M$ uniformly distributed on $S_{HS}$, the last
quantity can be expanded as
\[ \E \left\| \rho - \frac{{\mathrm{Id}}}{d} \right\|_{2}^2 = \E {\mathrm{Tr}} |M|^4 - \frac{1}{d}\]
and it can be checked by moments expansion that $\E {\mathrm{Tr}} |M|^4 \sim 2/d$.
}\end{proof}

\begin{proof}{[Proof of Lemma \ref{lemma:comparison-of-norms}]
Define $\alpha = 2n \|x\|_{\infty}/\|y\|_1$. By elementary properties of majorization (see Chapter II in \cite{Bhatia})
it is enough to show that $x$ is majorized by $\alpha y$, i.e. that for every $1 \leq k \leq n$,
\[ \sum_{i=1}^k x_i^\downarrow \leq \alpha \sum_{i=1}^k y_i^\downarrow ,\]
where $(x_i^\downarrow)_{1 \leq i \leq n}, (y_i^\downarrow)_{1 \leq i \leq n}$ denote the non-increasing rearrangement of $x,y$. This follows from
the inequalities
\begin{equation} \label{eq:majorization}
\frac{1}{\|x\|_{\infty}} \sum_{i=1}^k x_i^\downarrow \leq \min (k,n-k) \leq \frac{2n}{\|y\|_1} \sum_{i=1}^k y_i^\downarrow.
\end{equation}
The left-hand inequality in \ref{eq:majorization} follows from the triangle inequality, once we have in mind that
$x_1^\downarrow + \cdots + x_k^\downarrow = -(x_{k+1}^\downarrow + \cdots + x_n^\downarrow)$. To prove the right-hand inequality in \ref{eq:majorization}, note that
the sum of positive coordinates of $y$ and the sum of negative coordinates of $y$ both equal $\|y\|_1/2$. Let $\ell$ be the number of positive coordinates
of $y$. If $k \leq \ell$, then $y_1^\downarrow+\cdots+y_k^\downarrow \geq \frac{k}{\ell} \|y\|_1/2 \geq \frac{k}{2n} \|y\|_1$, while if $k > \ell$,
then $y_1^\downarrow+\cdots+y_k^\downarrow = -(y_{k+1}^\downarrow + \cdots + y_n^\downarrow) \geq \frac{n-k}{n-\ell} \|y\|_1/2 \geq \frac{n-k}{2n} \|y\|_1$.
}\end{proof}

\section{Applications to quantum data hiding}
\label{sec:data-hiding}

\subsection{Bipartite data hiding}

As already mentioned, what Theorem \ref{theorem:typical-states} establishes is that generic bipartite states are data hiding for separable measurements but not for PPT measurements. This fact somehow counterbalances the usually cited constructions of data hiding schemes using Werner states (see e.g. \cite{DVLT1,DVLT2,EW} and \cite{MWW,LW}). Werner states are indeed data hiding in the exact same way for both separable and PPT measurements.
\smallskip

Besides, results in the same vein as those from Theorem \ref{theorem:typical-states} but more specifically orientated towards applications to quantum data hiding may be quite directly written down. In fact, one often thinks of data hiding states as being orthogonal states, hence perfectly distinguishable by the suitable global measurement, that are nevertheless barely distinguishable by any local measurement. The following theorem provides a statement in that direction.

\begin{theorem}
\label{theorem:data-hiding}
There are universal constants $C,c$ such that the following holds. Given a dimension $d$,
let $E$ be a $\frac{d^2}{2}$-dimensional subspace of ${\mathbf{C}}^d\otimes{\mathbf{C}}^d$ (we assume without loss of generality that $d$ is even). Let also
$\rho=\frac{1}{d^2/2}UP_EU^\dagger$ and $\sigma=\frac{1}{d^2/2}UP_{E^{\perp}}U^\dagger$, where $U$ is a Haar-distributed random unitary on ${\mathbf{C}}^d\otimes{\mathbf{C}}^d$. Then,
\[ \| \rho - \sigma \|_{{\mathbf{\underline{ALL}}}}=2, \]
whereas with high probability,
\[ c \leq \| \rho - \sigma \|_{{\mathbf{\underline{PPT}}}} \leq C, \]
\[ \frac{c}{\sqrt{d}} \leq \| \rho - \sigma \|_{{\mathbf{\underline{SEP}}}} \leq \frac{C}{\sqrt{d}}. \]
\end{theorem}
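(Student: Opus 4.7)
The equality $\|\rho-\sigma\|_{\mathbf{ALL}}=2$ is immediate: $\rho$ and $\sigma$ are supported on the orthogonal subspaces $UE$ and $UE^{\perp}$, so $\|\rho-\sigma\|_1=2$ for every $U$; this also gives $\|\rho-\sigma\|_{\mathbf{PPT}}\leq 2$ for free, so the actual work lies in the lower bound for $\mathbf{PPT}$ and in both bounds for $\mathbf{SEP}$. My plan is to follow the template of Section \ref{sec:local-POVM-proofs}, adapted to the fact that, setting $A:=2P_E-\Id$, the difference $\rho-\sigma=\tfrac{2}{d^2}UAU^{\dagger}$ is a specific Haar-conjugate of a fixed traceless operator (spectrum $\pm 1$ with common multiplicity $d^2/2$) rather than a fully generic trace-zero matrix.

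Following the proof of Proposition \ref{proposition:Delta-vs-difference-of-states}, I would introduce the permutation-invariant norm
\[ |||x|||:=\int_{\mathcal{U}(d^2)}\|V\diag(x)V^{\dagger}\|_{\mathbf{M}}\,\mathrm{d}V \]
on the trace-zero hyperplane of $\R^{d^2}$. Haar invariance and homogeneity then give
\[ \E_U\|\rho-\sigma\|_{\mathbf{M}}=\frac{2}{d}\,|||\spec(A/d)|||,\qquad w\big(P_{H_0}K_{\mathbf{M}}\big)=\E_{\Delta_0}|||\spec(\Delta_0)|||, \]
where $\Delta_0$ is uniform on the Hilbert--Schmidt unit sphere of the trace-zero hyperplane $H_0\subset\cH(\C^d\otimes\C^d)$. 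The two spectra are comparable at matched scales: $\spec(A/d)$ has $\|\cdot\|_{\iy}=1/d$ and $\|\cdot\|_1=d$, while Lemma \ref{lemma:global-spectra} applied in ambient dimension $d^2$ gives $\|\spec(\Delta_0)\|_{\iy}\simeq 1/d$ and $\|\spec(\Delta_0)\|_1\simeq d$ with probability $1-\exp(-cd)$. Plugging both into Lemma \ref{lemma:comparison-of-norms} with $n=d^2$ yields $|||\spec(A/d)|||\simeq|||\spec(\Delta_0)|||$ on a high-probability event; taking expectation over $\Delta_0$ and absorbing the complement via the trivial bound $|||\cdot|||\leq\|\cdot\|_1\leq d$ produces
\[ \E_U\|\rho-\sigma\|_{\mathbf{M}}\simeq\frac{1}{d}\,w\big(P_{H_0}K_{\mathbf{M}}\big). \]
The Urysohn/Brunn--Minkowski chain from the proof of Theorem \ref{theorem:typical-states} shows $w(P_{H_0}K_{\mathbf{M}})\simeq w(K_{\mathbf{M}})$, and Theorem \ref{theorem:vrad-w-PPT-SEP} supplies $w(K_{\mathbf{PPT}})\simeq d$ and $w(K_{\mathbf{SEP}})\simeq\sqrt{d}$, so $\E\|\rho-\sigma\|_{\mathbf{PPT}}\simeq 1$ and $\E\|\rho-\sigma\|_{\mathbf{SEP}}\simeq 1/\sqrt{d}$.

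Finally, to pass from expectations to high probability I would invoke concentration of measure on $\mathcal{U}(d^2)$. Combining $\|\cdot\|_{\mathbf{M}}\leq\|\cdot\|_1\leq d\|\cdot\|_2$ on $\cH(\C^{d^2})$ with $\|UAU^{\dagger}-VAV^{\dagger}\|_2\leq 2\|A\|_{\iy}\|U-V\|_2$, the map $U\mapsto\|\rho-\sigma\|_{\mathbf{M}}$ is $(4/d)$-Lipschitz in the Hilbert--Schmidt metric, so standard sub-Gaussian concentration for Haar-random unitaries gives deviations bounded by $2\exp(-cd^4t^2)$, more than enough to beat the constant scale for $\mathbf{PPT}$ and the $1/\sqrt{d}$ scale for $\mathbf{SEP}$. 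I expect the spectral comparison to be the only step requiring genuine attention, and the reason it succeeds is precisely that $A$, once normalized to the Hilbert--Schmidt sphere, happens to have the same order for both its extremal and its $\ell_1$ eigenvalue statistics as a typical trace-zero direction; a significantly unbalanced $E$ would spoil this matching and require a different argument.
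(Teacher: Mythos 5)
Your proposal is correct and follows essentially the same route as the paper: the authors likewise reduce the theorem to verifying that Proposition \ref{proposition:Delta-vs-difference-of-states} still applies to $\frac{2}{d^2}U(P_E-P_{E^\perp})U^\dagger$, checking the exact values $\|\cdot\|_1=2$, $\|\cdot\|_2=2/d$, $\|\cdot\|_\infty=2/d^2$ so that Lemma \ref{lemma:comparison-of-norms} gives the spectral comparison, and establishing an $O(1/d)$-Lipschitz bound on $U\mapsto\|\rho-\sigma\|_{\mathbf{M}}$ to get concentration $2\exp(-cd^4t^2)$ on $\mathcal{U}(d^2)$. The only differences are cosmetic (your Lipschitz constant $4/d$ versus their $8/d$, and your slightly more explicit bookkeeping of the bad event in the expectation comparison).
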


\begin{proof}{
The first part of Theorem \ref{theorem:data-hiding} is clear: the random states $\rho$ and $\sigma$ are orthogonal by construction, so that $\| \rho - \sigma \|_{{\mathbf{\underline{ALL}}}}=\| \rho - \sigma \|_1=2$.

To prove the second part of Theorem \ref{theorem:data-hiding}, the only thing we have to show is that Proposition \ref{proposition:Delta-vs-difference-of-states} also holds for the random states $\rho$ and $\sigma$ considered here.

Now, for any family ${\mathbf{\underline{M}}}$ of POVMs on ${\mathbf{C}}^d\otimes{\mathbf{C}}^d$, $f : U\in{\mathcal{U}}(d^2)\mapsto \left\|\frac{2}{d^2}U(P_E-P_{E^{\perp}})U^\dagger\right\|_{{\mathbf{\underline{M}}}}$ is a $\frac{8}{d}$-Lipschitz function. Indeed, by the same arguments as in the proof of \ref{eq:concentration},
\begin{eqnarray*}
f(U_1)-f(U_2) & \leq & \frac{2}{d^2}\left(\|U_1P_EU_1^\dagger-U_2P_EU_2^\dagger\|_{{\mathbf{\underline{M}}}} +\|U_1P_{E^{\perp}}U_1^\dagger-U_2P_{E^{\perp}}U_2^\dagger\|_{{\mathbf{\underline{M}}}} \right)\\
& \leq & \frac{2}{d} \left(\|U_1P_EU_1^\dagger-U_2P_EU_2^\dagger\|_2 +\|U_1P_{E^{\perp}}U_1^\dagger-U_2P_{E^{\perp}}U_2^\dagger\|_2 \right)\\
& \leq & \frac{4}{d} \left(\|U_1P_E-U_2P_E\|_2 +\|U_1P_{E^{\perp}}-U_2P_{E^{\perp}}\|_2 \right)\\
& \leq & \frac{8}{d} \|U_1-U_2\|_2.
\end{eqnarray*}
And any $L$-Lipschitz function $g : {\mathcal{U}}(n)\rightarrow{\mathbf{R}}$ satisfies the concentration estimate (see the Appendix in \cite{MM})
\[ \forall\ t>0,\ {\mathbf{P}}( | g - \E g| > t ) \leq 2 \exp(-cn t^2/L^2),\]
$c$ being a universal constant.

The function $f$ thus satisfies ${\mathbf{P}}(|f-\E f|>t)\leq 2\exp(-cd^4t^2)$. So the concentration estimate \ref{eq:concentration} in Proposition \ref{proposition:Delta-vs-difference-of-states} is in fact still true (and actually even stronger) for the random states under consideration.

What is more, the results from Lemma \ref{lemma:global-spectra} remain valid too because we here even have the equalities
\[ \left\|\frac{2}{d^2}U(P_E-P_{E^{\perp}})U^\dagger\right\|_1=2,\ \ \left\|\frac{2}{d^2}U(P_E-P_{E^{\perp}})U^\dagger\right\|_2=\frac{2}{d},\ \ \left\|\frac{2}{d^2}U(P_E-P_{E^{\perp}})U^\dagger\right\|_{\infty}=\frac{2}{d^2}. \]
So since $\frac{2}{d^2}U(P_E-P_{E^{\perp}})U^\dagger$ has the same distribution as $V\diag\left(\spec\left(\frac{2}{d^2}U(P_E-P_{E^{\perp}})U^\dagger\right)\right)V^\dagger$ for $V\in{\mathcal{U}}(d^2)$, one may apply Lemma \ref{lemma:comparison-of-norms} to conclude that the expectation estimate \ref{eq:expectation} in Proposition \ref{proposition:Delta-vs-difference-of-states} is in fact still true too for the random states under consideration.
}\end{proof}

In words, Theorem \ref{theorem:data-hiding} stipulates the following. Picking a subspace $E$ at random from the set of $\frac{d^2}{2}$-dimensional subspaces of ${\mathbf{C}}^d\otimes{\mathbf{C}}^d$, and then considering the states $\rho=\frac{P_E}{d^2/2}$ and $\sigma=\frac{P_{E^{\perp}}}{d^2/2}$, one gets examples of states which are perfectly distinguishable by some global measurement and which are with high probability data-hiding for separable measurements but not data-hiding for PPT measurements.

\begin{remark} Let us come back on the example of the symmetric state $\varsigma$ and the antisymmetric state $\alpha$ on ${\mathbf{C}}^d\otimes{\mathbf{C}}^d$. They satisfy (see e.g. \cite{DVLT2})
\begin{equation}
\label{eq:sym-anti}
\|\varsigma-\alpha\|_{{\mathbf{\underline{SEP}}}} = \|\varsigma-\alpha\|_{{\mathbf{\underline{PPT}}}} = \frac{4}{d+1} = \frac{2}{d+1}\|\varsigma-\alpha\|_{{\mathbf{\underline{ALL}}}}.
\end{equation}
They are consequently ``exceptional'' data hiding states for two reasons. First, as mentioned before, because they are equally PPT and SEP data hiding. And second because they are ``more'' data hiding than generic states: their SEP norm is of order $\frac{1}{d}\ll\frac{1}{\sqrt{d}}$, hence almost reaching the known lower-bound valid for any states $\rho,\sigma$ on ${\mathbf{C}}^d\otimes{\mathbf{C}}^d$ (see e.g. \cite{MWW}) namely $\|\rho-\sigma\|_{{\mathbf{\underline{SEP}}}}\geq \frac{2}{d}\|\rho-\sigma\|_{{\mathbf{\underline{ALL}}}}$.
\end{remark}

\subsection{Multipartite vs bipartite data hiding}

In Theorem \ref{theorem:vrad-w-PPT-SEP}, we focused on the bipartite case $\cH=({\mathbf{C}}^d)^{\otimes 2}$ for the sake of clarity. However, generalizations to the
general $k$-partite case $\cH=({\mathbf{C}}^d)^{\otimes k}$ are quite straightforward, at least in the situation where the high-dimensional composite system of interest is made of a
``small'' number of ``large'' subsystems (i.e. $k$ is fixed and $d$ tends to infinity).

Let us denote by ${\mathbf{\underline{PPT}}}_{d,k}$ and ${\mathbf{\underline{SEP}}}_{d,k}$ the sets of respectively $k$-PPT and $k$-separable POVMs on $({\mathbf{C}}^d)^{\otimes k}$.
On the one hand, an iteration of the
Milman--Pajor inequality (Corollary \ref{corollary:Milman-Pajor}) leads to the estimate
\[ c^{2^k}d^{k/2} \leq {\mathrm{vrad}}(K_{{\mathbf{\underline{PPT}}}_{d,k}})\leq w(K_{{\mathbf{\underline{PPT}}}_{d,k}})\leq Cd^{k/2}, \]
for some constants $c,C$ depending neither on $k$ nor on $d$.

On the other hand, the generalization of Theorem \ref{theorem:separable-states} to the set ${\mathcal{S}}_{d,k}$ of $k$-separable states on $({\mathbf{C}}^d)^{\otimes k}$
is known, namely (see \cite{AS})
\[ \frac{c^{k}}{d^{k-1/2}}\leq {\mathrm{vrad}}({\mathcal{S}}_{d,k})\leq w({\mathcal{S}}_{d,k})\leq C\frac{\sqrt{k\log k}}{d^{k-1/2}}, \]
and implies that
\[ c^{k}d^{1/2}\leq {\mathrm{vrad}}(K_{{\mathbf{\underline{SEP}}}_{d,k}})\leq w(K_{{\mathbf{\underline{SEP}}}_{d,k}})\leq C\sqrt{k\log k}d^{1/2}, \]
for some constants $c,C$ depending neither on $k$ nor on $d$.

A multipartite analogue of Theorem \ref{theorem:typical-states} can then be derived, following the exact same lines of proof.

\begin{theorem}
\label{th:multiparty}
There exist constants $c_k,C_k$ such that the following holds. Given a dimension $d$, let $\rho$ and $\sigma$ be random states, independent and uniformly distributed on the set of states on $({\mathbf{C}}^d)^{\otimes k}$. Then, with high probability,
\[ c_k \leq \| \rho - \sigma \|_{{\mathbf{\underline{PPT}}}_{d,k}} \leq \| \rho - \sigma \|_{{\mathbf{\underline{ALL}}}} \leq C_k, \]
\[ \frac{c_k}{\sqrt{d^{k-1}}} \leq \| \rho - \sigma \|_{{\mathbf{\underline{SEP}}}_{d,k}} \leq \frac{C_k}{\sqrt{d^{k-1}}}. \]
\end{theorem}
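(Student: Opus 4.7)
The plan is to follow the exact same strategy as the proof of Theorem~\ref{theorem:typical-states} in Section~\ref{sec:local-POVM-proofs}, now on the Hilbert space $(\C^d)^{\otimes k}$ of dimension $N=d^k$. All the tools have already been set up in dimension-agnostic form: the global volume radius and mean width estimates for $K_{\mathbf{PPT}_{d,k}}$ and $K_{\mathbf{SEP}_{d,k}}$ are recalled in the paragraph preceding the statement, and Proposition~\ref{proposition:Delta-vs-difference-of-states} is proved for a Hilbert space of arbitrary dimension.

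First, I would invoke Proposition~\ref{proposition:Delta-vs-difference-of-states} on $\C^{N}$ with $N=d^k$. This immediately yields, for any family $\mathbf{M}$ of POVMs on $(\C^d)^{\otimes k}$,
\[ \E\|\rho-\sigma\|_{\mathbf{M}}\simeq \frac{w(P_{H_0}K_{\mathbf{M}})}{d^{k/2}}, \]
together with the concentration estimate $\P(|\|\rho-\sigma\|_{\mathbf{M}}-\E|>t)\leq 2\exp(-cd^k t^2)$, where $H_0\subset\cH((\C^d)^{\otimes k})$ denotes the hyperplane of traceless Hermitian operators.

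Second, I would upgrade the global estimates $w(K_{\mathbf{PPT}_{d,k}})\simeq d^{k/2}$ and $w(K_{\mathbf{SEP}_{d,k}})\simeq \sqrt{d}$ to their traceless projections, that is, show $w(P_{H_0}K_{\mathbf{M}})\simeq w(K_{\mathbf{M}})$ for both $\mathbf{M}\in\{\mathbf{PPT}_{d,k},\mathbf{SEP}_{d,k}\}$. The bound $\lesssim$ is immediate from the contractivity of orthogonal projections on mean width. For the reverse inequality, I would repeat verbatim the Fubini--Brunn--Minkowski slicing argument from the bipartite proof: since $K_{\mathbf{M}}\subset[-\Id,\Id]$ and $\|\Id\|_2=d^{k/2}$, integration by slices orthogonal to $\Id$ yields $\vol(K_{\mathbf{M}})\leq 2 d^{k/2}\vol(K_{\mathbf{M}}\cap H_0)$, hence $\vrad(K_{\mathbf{M}}\cap H_0)\gtrsim \vrad(K_{\mathbf{M}})$; combined with Urysohn and the given equivalence $\vrad(K_{\mathbf{M}})\simeq w(K_{\mathbf{M}})$, this gives $w(P_{H_0}K_{\mathbf{M}})\gtrsim w(K_{\mathbf{M}})$ as needed.

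Plugging these values into the formula from step one then yields $\E\|\rho-\sigma\|_{\mathbf{PPT}_{d,k}}\simeq 1$ and $\E\|\rho-\sigma\|_{\mathbf{SEP}_{d,k}}\simeq d^{-(k-1)/2}$; the concentration bound at scale $1/\sqrt{d^k}$ is much tighter than either expectation, so this upgrades to the claimed two-sided bounds with failure probability at most $\exp(-c_k d^k)$. The upper bound $\|\rho-\sigma\|_{\mathbf{ALL}}\leq C_k$ requires no work since $\|\rho-\sigma\|_1\leq 2$ deterministically. I do not foresee a genuine obstacle: all geometric inputs generalize cleanly from two to $k$ parties and the steps are essentially mechanical. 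The one point that demands attention is the bookkeeping of the $k$-dependent constants --- the estimates recalled before the statement already carry exponential factors such as $c^{2^k}$ and $c^k\sqrt{k\log k}$ --- but since $k$ is treated as fixed they are harmlessly absorbed into the constants $c_k,C_k$ appearing in the conclusion.
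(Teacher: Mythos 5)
Your proposal follows exactly the paper's (very terse) argument: the paper derives Theorem~\ref{th:multiparty} by asserting it follows ``the exact same lines of proof'' as Theorem~\ref{theorem:typical-states}, applied with the $k$-partite volume/width estimates stated just before it, which is precisely the reduction you spell out (Proposition~\ref{proposition:Delta-vs-difference-of-states} on $\C^{d^k}$ plus the Fubini--Brunn--Minkowski comparison of $w(P_{H_0}K_{\mathbf{M}})$ with $w(K_{\mathbf{M}})$). The only slip is your claimed failure probability $\exp(-c_k d^k)$: for the $\mathbf{SEP}_{d,k}$ estimate one must take $t\simeq d^{-(k-1)/2}$ in the concentration inequality, giving $\exp(-c_k d^k t^2)=\exp(-c_k d)$, which is weaker than what you state but still amply sufficient for the ``with high probability'' conclusion.
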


This means that, forgetting about the dependence on $k$ and only focusing on the one on $d$, for typical states $\rho,\sigma$
on $({\mathbf{C}}^d)^{\otimes k}$,
$\|\rho-\sigma\|_{{\mathbf{\underline{PPT}}}_{d,k}}$ is of order $1$, like $\|\rho-\sigma\|_{{\mathbf{\underline{ALL}}}}$, while $\|\rho-\sigma\|_{{\mathbf{\underline{SEP}}}_{d,k}}$
is of order $1/\sqrt{d^{k-1}}$.

\medskip

In this multipartite setting, another quite natural question is the one of finding states that local observers can poorly distinguish if they remain alone but that they can distinguish substantially better though by gathering into any possible two groups. This type of problem was especially studied in \cite{EW}. Here is another result in that direction.

Define $\mathbf{\underline{bi-SEP}}_{d,k}$ as the set of POVMs on $({\mathbf{C}}^d)^{\otimes k}$ which are biseparable across any bipartition of $({\mathbf{C}}^d)^{\otimes k}$. It may then be shown that for random states $\rho,\sigma$, independent and uniformly distributed on the set of states on $({\mathbf{C}}^d)^{\otimes k}$, with high probability,
$\| \rho - \sigma \|_{\mathbf{\underline{bi-SEP}}_{d,k}} \simeq d^{-k/4}$ (whereas $\| \rho - \sigma \|_{{\mathbf{\underline{SEP}}}_{d,k}} \simeq d^{-(k-1)/2}$ by Theorem \ref{th:multiparty}). This means that on $({\mathbf{C}}^d)^{\otimes k}$, with $k>2$ fixed, restricting to POVMs which are biseparable across every bipartition
is roughly the same as restricting to POVMs which are biseparable across one bipartition, whereas imposing $k$-separability is a much tougher constraint that implies a dimensional loss in the distinguishing ability.

\begin{remark} This result might not be as strong as one could hope for. It only shows that $\|\cdot\|_{\mathbf{\underline{bi-SEP}}_{d,k}}$ typically vanishes
slower than $\|\cdot\|_{{\mathbf{\underline{SEP}}}_{d,k}}$ when the local dimension $d$ grows, but it does not provide examples of states
$\rho,\sigma$ on $({\mathbf{C}}^d)^{\otimes k}$ for which $\|\rho-\sigma\|_{\mathbf{\underline{bi-SEP}}_{d,k}}$ would be of order $1$ while
$\|\rho-\sigma\|_{{\mathbf{\underline{SEP}}}_{d,k}}$ would tend to zero.
\end{remark}

\section{Miscellaneous remarks and questions}

\subsection{Complexity of the different classes of POVMs on a bipartite system}

Having at hand the estimates on the mean width of $K_{{\mathbf{\underline{SEP}}}}$ (or $K_{{\mathbf{\underline{LOCC}}}}$) and $K_{{\mathbf{\underline{PPT}}}}$ provided by Theorem \ref{theorem:vrad-w-PPT-SEP}, one may follow the exact same lines as in the proof of Theorem \ref{theorem:approximation-of-ALL} to identify the number of POVMs needed to approximate the corresponding locally restricted classes of POVMs. It is thus possible to show that on ${\mathbf{C}}^d\otimes{\mathbf{C}}^d$, $\exp(\Theta(d^4))$ different POVMs are necessary and sufficient to approximate the class ${\mathbf{\underline{PPT}}}$. For the class ${\mathbf{\underline{SEP}}}$ (or ${\mathbf{\underline{LOCC}}}$), we lack a complete answer since the same arguments show that the minimal number of POVMs is between $\exp(\Omega(d^3))$ and $\exp(O(d^4))$.

Let us make another comment on that topic. Theorem \ref{theorem:typical-states} tells us, amongst other, that the class of PPT POVMs is, in some sense, a quite good approximation of the class of all POVMs. One may therefore wonder if there would be a way, when trying to approximate the class of all POVMs by a finite sub-family, to impose that all POVMs in it are PPT. However, since the approximation we are looking for is one in terms of distinguishability norms (i.e. one that is valid for \textit{any} pair of states to be discriminated), this possibility is ruled out by the fact that the gap between $\|\cdot\|_{{\mathbf{\underline{PPT}}}}$ and $\|\cdot\|_{{\mathbf{\underline{ALL}}}}$ is unbounded (i.e. that there exist pair of states, such as e.g. the Werner states, which are poorly distinguished by any PPT POVM).

\subsection{What is the typical performance of the class ${\mathbf{\underline{LO}}}$?}
\label{sec:LO-LOCC-SEP}

While Theorem \ref{theorem:LO-vs-LOCC} shows that the gap between the classes ${\mathbf{\underline{LO}}}$ and ${\mathbf{\underline{LOCC}}}$ may be unbounded,
we do not know if this situation is typical or not. Asking whether norms are comparable in a typical direction is more or less equivalent to asking
whether the ratio $\vrad(K_{{\mathbf{\underline{LOCC}}}}) / \vrad(K_{{\mathbf{\underline{LO}}}})$ is bounded as the dimension increases.

\subsection{Can the gap between ${\mathbf{\underline{LOCC}^\rightarrow}}/{\mathbf{\underline{LOCC}}}/{\mathbf{\underline{SEP}}}$ be unbounded?}
Or conversely, does there exist an absolute
constant $c$ such that the norm inequalities $\|\cdot\|_{{\mathbf{\underline{LOCC}^\rightarrow}}} \geq c \|\cdot\|_{{\mathbf{\underline{LOCC}}}}$ and/or
$\|\cdot\|_{{\mathbf{\underline{LOCC}}}} \geq c \|\cdot\|_{{\mathbf{\underline{SEP}}}}$ hold for any dimension?

\subsection{Locally restricted measurements on a multipartite quantum system}
There are at least two ways for a multipartite system such as $({\mathbf{C}}^d)^{\otimes k}$
to be of high dimension: either with $k$ fixed and $d$ large
(few large subsystems) or $k$ large and $d$ fixed (many small subsystems). Theorem \ref{th:multiparty} tells us what is the
typical discriminating power of $k$-PPT and $k$-separable POVMs, but in the first setting only. The extension to the case of many small subsystems
seems a challenging problem.

\nonumsection{Acknowledgements}
\noindent

This research was supported by the ANR project OSQPI  ANR-11-BS01-0008.

\nonumsection{References}
\noindent

\appendix{. Classical convex geometry}

\subappendix{Some vocabulary}
\label{ap:convex-geometry}

We work in the Euclidean space ${\mathbf{R}}^n$, where we denote by $\|\cdot\|_2$ the Euclidean norm. We denote by $\vol_n(\cdot)$ or simply $\vol(\cdot)$ the
$n$-dimensional Lebesgue measure. A {\em convex body} $K \subset {\mathbf{R}}^n$ is a convex compact set with non-empty interior.
A convex body $K$ is {\em symmetric} if $K=-K$.  The {\em gauge}  associated to
a convex body $K$ is the function $\|\cdot\|_K$ defined for $x \in {\mathbf{R}}^n$ by
$\|x\|_K := \inf \{ t \geq 0 \ : \ x \in tK \}$. This is a norm if and only if $K$ is symmetric.

If $K \subset {\mathbf{R}}^n$ is a convex body with origin in its interior,
the {\em polar} of $K$ is the convex body $K^\circ$ defined as
\[ K^\circ := \{ y \in {\mathbf{R}}^n \ : \ \langle x,y \rangle \leq 1 \ \ \hbox{\rm for all  } \ x \in K \} .\]
In the symmetric case, the norms $\|\cdot\|_K$ and $\|\cdot\|_{K^\circ}$ are dual to each other.

If $u$ is a vector from the unit sphere $S^{n-1}$, the {\em support
function} of $K$ in the direction $u$ is
\[ h_K(u):=\max_{x\in K}
\langle x, u\rangle = \|u\|_{K^\circ}. \]
Note that $h_K(u)$ is the
distance from the origin to the hyperplane tangent to $K$ in the
direction $u$.

Two global invariants associated to a convex body $K \subset {\mathbf{R}}^n$, the {\em volume radius} and the {\em mean width}, play an important role in our proofs.

\begin{definition}
The {\em volume
radius} of a convex body $K \subset {\mathbf{R}}^n$ is defined as
\[ \vrad (K) := \left( \frac{\vol K}{\vol B_2^n} \right)^{1/n} ,\]
where $B_2^n$ denotes the unit Euclidean ball of ${\mathbf{R}}^n$.

In words, $\vrad(K)$ is the radius of the Euclidean ball with same volume as $K$.
\end{definition}

\begin{definition}
The {\em mean width} of a subset $K \subset {\mathbf{R}}^n$ is defined as
\[ w (K):=\int _{S^{n-1}} \max_{x\in X} \langle x,u \rangle \,{\mathrm{d}}\sigma(u) ,\]
where ${\mathrm{d}}\sigma(u)$ is the normalized spherical
measure on the unit Euclidean sphere $S^{n-1}$ of ${\mathbf{R}}^n$.
If $K$ is a convex body, we have
\[ w (K):=\int _{S^{n-1}}
h_K(u)\,{\mathrm{d}}\sigma(u)=\int _{S^{n-1}} \|u\|_{K^\circ} \, {\mathrm{d}}\sigma(u).
\]
\end{definition}

The inequality below (see, e.g., \cite{Pisier}) is a
fundamental result which compares the volume radius and the mean
width.

\begin{theorem}[Urysohn inequality]
\label{theorem:urysohn}
For any convex body $K \subset {\mathbf{R}}^n$, we have
\[ \vrad(K) \leq w(K) .\]
\end{theorem}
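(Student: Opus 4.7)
The plan is to prove the inequality via rotational symmetrization using the Brunn--Minkowski inequality, a classical argument often referred to as ``Minkowski averaging''.

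First I would introduce the rotational average of $K$, defined as the convex body $\widetilde{K}$ whose support function is
\[
h_{\widetilde{K}}(u) := \int_{O(n)} h_K(U^{-1}u) \, \mathrm{d}U
\]
where $\mathrm{d}U$ is the normalized Haar measure on $O(n)$. Since $O(n)$ acts transitively on the unit sphere $S^{n-1}$, this integral does not depend on $u \in S^{n-1}$ and equals precisely $w(K)$. Hence $\widetilde{K}$ has constant support function $w(K)$ on $S^{n-1}$, which forces $\widetilde{K} = w(K) \cdot B_2^n$.

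Second, I would compare $\vol(K)$ and $\vol(\widetilde{K})$ via the concavity of $\vol(\cdot)^{1/n}$ under Minkowski addition (Brunn--Minkowski). Concretely, I would work with a discrete approximation: fix a large $N$ and pick rotations $U_1,\ldots,U_N$ sampled independently and uniformly from $O(n)$ (alternatively, take a sufficiently fine deterministic net in $O(n)$). Setting $K_N := \frac{1}{N}\bigl(U_1 K + \cdots + U_N K\bigr)$, iterated application of the Brunn--Minkowski inequality $\vol(A+B)^{1/n} \geq \vol(A)^{1/n} + \vol(B)^{1/n}$ together with the fact that $\vol(U_i K) = \vol(K)$ yields
\[
\vol(K_N)^{1/n} \geq \vol(K)^{1/n}.
\]
On the other hand, the support function of $K_N$ is $u \mapsto \frac{1}{N}\sum_{i=1}^N h_K(U_i^{-1} u)$, which by the law of large numbers (or by uniform equidistribution in the deterministic case) converges, uniformly in $u \in S^{n-1}$, to the constant $w(K)$. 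Therefore $K_N$ converges to $w(K)\cdot B_2^n$ in Hausdorff distance, and $\vol(K_N) \to w(K)^n \vol(B_2^n)$.

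Passing to the limit yields $\vol(K) \leq w(K)^n \vol(B_2^n)$, which rearranges precisely to $\vrad(K) \leq w(K)$.

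The main technical point, which I expect to be the only delicate step, is the justification of Step two: one must be careful to pass from the pointwise Brunn--Minkowski inequality to a statement about the genuine Minkowski integral $\widetilde{K}$. The finite approximation sketched above circumvents any measure-theoretic subtlety about ``integrating convex bodies'', replacing it by the elementary inequality for finite Minkowski sums and a Hausdorff-continuity argument for the volume. All other steps (identification of $\widetilde{K}$ as a ball, computation of its radius as $w(K)$, and the final algebraic rearrangement) are routine.
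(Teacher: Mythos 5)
The paper does not prove this statement; it is quoted as a classical fact with a reference to Pisier's book, and the argument you give (rotational Minkowski averaging: Brunn--Minkowski forces $\vol(K_N)^{1/n}\geq\vol(K)^{1/n}$ while $K_N\to w(K)\cdot B_2^n$ in Hausdorff distance) is precisely the standard proof found there. Your write-up is correct, including the identification $\int_{O(n)}h_K(U^{-1}u)\,\mathrm{d}U=w(K)$ via invariance of Haar measure and the passage from pointwise to uniform convergence of the support functions (which follows from their common Lipschitz bound), so there is nothing to fix.
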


It is convenient to compute the mean width using Gaussian rather than spherical integration. Let $G$ be a standard Gaussian vector in ${\mathbf{R}}^n$,
i.e. such that its
coordinates, in any orthonormal basis, are independent with a $N(0,1)$ distribution. Denoting $\gamma_n = \E \|G\|_2 \sim \sqrt{n}$, we have,
for any compact set
$K \subset {\mathbf{R}}^n$,
\[ w_G(K) := \E \max_{x \in K} \langle G,x \rangle = \gamma_n w(K) .\]
The Gaussian mean width is usually easier to compute. For example, it allows to compute the mean width of a segment: if $u \in S^{n-1}$ is a
unit vector, then
\[ \alpha_n := w (\conv \{ \pm u\}) = \frac{1}{\gamma_n} \sqrt{\frac{2}{\pi}} \sim \sqrt{\frac{2}{\pi n}}. \]
It also shows how to control the mean width of a projection. Let $K \subset {\mathbf{R}}^n$ be a compact set, and $E \subset {\mathbf{R}}^n$ be a $k$-dimensional subspace.
Denoting $P_E$ the orthogonal projection onto $E$, we have
$w_G(P_E K) \leq w_G(K)$, and therefore
\begin{equation} \label{eq:width-projection}
w(K \cap E) \leq w(P_E K) \leq \frac{\gamma_n}{\gamma_k} w(K).
\end{equation}

We also need the following lemma which is an incarnation of the familiar ``union bound'' and appears for example as formula (3.6) in \cite{LT}
(under the equivalent formulation via suprema of Gaussian processes).

\begin{lemma}[Bounding the mean width of a union] \label{lemma:mean-width-union-bound}
Let $K_1,\dots,K_N$ be convex symmetric sets in ${\mathbf{R}}^n$ such that $K_i \subset \lambda B_2^n$ for every index $1\leq i\leq N$
(where $B_2^n$ denotes the unit Euclidean ball of ${\mathbf{R}}^n$). Then
\[ w \left( \conv \left( \bigcup_{i=1}^N K_i \right) \right) \leq C \left( \max_{1 \leq i \leq N} w(K_i) +
\lambda \sqrt{\frac{\log N}{n}}  \right),\]
where $C$ is an absolute constant.
\end{lemma}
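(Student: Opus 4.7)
The plan is to reduce the statement to a standard maximal inequality for a finite collection of sub-Gaussian random variables, by passing to the Gaussian formulation of the mean width. Recall that for a standard Gaussian vector $G$ in $\R^n$ one has $w_G(K) := \E \sup_{x \in K}\langle G, x\rangle = \gamma_n w(K)$ with $\gamma_n \sim \sqrt{n}$, so it suffices to prove the inequality at the level of $w_G$. Since a linear functional attains its supremum over $\conv \bigcup_i K_i$ at an extreme point belonging to some $K_i$,
\[ w_G\Bigl(\conv\bigcup_{i=1}^N K_i\Bigr) = \E \max_{1 \leq i \leq N} Z_i, \qquad Z_i := \sup_{x \in K_i} \langle G, x\rangle, \]
and each $Z_i$ has mean $w_G(K_i) = \gamma_n w(K_i)$.

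The next step is to show that each $Z_i$ is sub-Gaussian about its mean with parameter $\lambda$. This is where the hypothesis $K_i \subset \lambda B_2^n$ enters: it implies that the map $G \mapsto Z_i$, being a supremum of linear forms $G \mapsto \langle G,x\rangle$ with $\|x\|_2 \leq \lambda$, is $\lambda$-Lipschitz on $\R^n$. The Borell--Tsirelson--Ibragimov--Sudakov Gaussian concentration inequality then yields a sub-Gaussian tail for $Z_i - \E Z_i$ with constant $\lambda$.

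Applying the standard union bound for the maximum of $N$ sub-Gaussian variables gives
\[ \E \max_{1 \leq i \leq N} (Z_i - \E Z_i) \leq C \lambda \sqrt{\log N}, \]
so that
\[ w_G\Bigl(\conv\bigcup_i K_i\Bigr) \leq \max_i w_G(K_i) + C \lambda \sqrt{\log N}. \]
Dividing through by $\gamma_n \simeq \sqrt{n}$ converts everything back to the spherical mean width and produces exactly the claimed bound.

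The argument is essentially routine once one adopts the Gaussian viewpoint; the only place requiring any thought is the sub-Gaussian concentration of each $Z_i$, which is handed over cleanly to the Borell--TIS inequality via the assumption $K_i \subset \lambda B_2^n$. No further dimensional subtleties arise beyond the normalization $\gamma_n \sim \sqrt{n}$ that governs the passage between spherical and Gaussian mean widths.
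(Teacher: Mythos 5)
Your proof is correct; the paper does not prove this lemma but simply cites it as formula (3.6) of Ledoux--Talagrand, and your argument (Gaussian mean width, Borell--TIS concentration of each $\lambda$-Lipschitz supremum $Z_i$, then the sub-Gaussian maximal inequality over $N$ variables) is precisely the standard proof of that reference. No gaps.
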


\subappendix{Some volume inequalities}
\label{ap:volume}

We use repeatedly the following result, established in \cite{MP}, Corollary 3.

\begin{theorem}[Milman--Pajor inequality]
\label{theorem:Milman-Pajor}
Let $K,L$ be convex bodies in ${\mathbf{R}}^n$ with the same center of mass. Then
\[ {\mathrm{vrad}}(K\cap L){\mathrm{vrad}}(K-L)\geq{\mathrm{vrad}}(K){\mathrm{vrad}}(L).\]
\end{theorem}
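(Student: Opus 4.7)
The plan is to reduce to the equivalent volume inequality $|K \cap L|\cdot|K-L| \geq |K|\cdot|L|$, from which the statement follows by dividing both sides by $|B_2^n|^2$ and taking $n$-th roots. I would begin with the linear change of variables $(x,y) \in K \times L \mapsto (z, m) := (x-y, (x+y)/2)$, which has unit Jacobian and turns $K \times L$ into a convex body $\widetilde K \subset \R^n \times \R^n$ whose projection onto the first factor is $K - L$ and whose section at $z = 0$ equals $K \cap L$. Fubini then yields
\[|K|\cdot|L| = \int_{K-L} f(z)\,dz, \qquad f(z) := |K \cap (L+z)|.\]

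Next I would extract two structural properties of $f$. Since $\widetilde K$ is convex, the Brunn--Minkowski inequality applied to its sections forces the function $f^{1/n}$ to be concave on its support $K-L$. Moreover, a direct Fubini calculation gives
\[\int_{K-L} z\, f(z)\,dz \;=\; |K|\cdot|L|\,\bigl(c(K)-c(L)\bigr),\]
where $c(\cdot)$ denotes the centroid. Hence the hypothesis that $K$ and $L$ share a centroid is exactly the statement that the measure $f(z)\,dz$ on $K-L$ has centroid at the origin. The desired volume inequality then reduces to proving that the value of $f$ at the origin dominates its average over $K-L$:
\[f(0)\cdot|K-L| \;\geq\; \int_{K-L} f(z)\,dz.\]

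The main obstacle is this last inequality, which is the Milman--Pajor central lemma: for a non-negative function $g$ on a convex body $C \subset \R^n$ with $g^{1/n}$ concave and $\int_C z\, g(z)\,dz = 0$, one has $g(0)\cdot|C| \geq \int_C g$. The delicate feature is that $g^{1/n}$-concavity alone does not force the maximum of $g$ to lie at the origin; only the balancing effect of the centroid hypothesis does. My approach would be a radial decomposition: parametrize $C$ by rays through the origin, reduce on each pair of opposite rays to a one-dimensional inequality for non-negative concave-to-the-$1/n$ functions, and exploit the centroid condition to pair the $+u$ and $-u$ contributions before integrating over $S^{n-1}$. The one-dimensional case can be verified by comparison with the extremal tent-shaped profile, where the peak location is pinned down explicitly by the centroid constraint; the full lemma follows by spherical integration.
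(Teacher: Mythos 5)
First, a remark on context: the paper does not prove this statement at all --- it is imported verbatim from Milman--Pajor (\cite{MP}, Corollary 3). So you are being compared against the literature rather than against an argument in the text.

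The first half of your proposal is correct and is indeed the standard opening of the Milman--Pajor argument: setting $f(z)=|K\cap(L+z)|$, Fubini gives $\int_{K-L}f=|K|\cdot|L|$; Brunn--Minkowski applied to the sections of $\widetilde K=\{(x,z): x\in K,\ x-z\in L\}$ gives the $1/n$-concavity of $f$ on $K-L$; and the identity $\int z f(z)\,dz=|K|\,|L|\,(c(K)-c(L))$ correctly translates the common-centroid hypothesis into the statement that the measure $f(z)\,dz$ is centered. The whole theorem is thus equivalent to your ``central lemma'' $f(0)\,|K-L|\ge\int_{K-L}f$.

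The gap is in your proof of that lemma, and it is a genuine one. The centroid condition $\int_C z\,g(z)\,dz=0$ is a single $\R^n$-valued constraint; in polar coordinates it reads $\int_{S^{n-1}}u\bigl(\int_0^{\rho(u)}r^{n}g(ru)\,dr\bigr)\,du=0$, which is an average over the sphere and does \emph{not} localize to individual lines through the origin. Already for $g\equiv 1$ on a triangle with centroid at the origin, the paired contribution of $+u$ and $-u$ to this integral is nonzero for generic $u$. So on a fixed pair of opposite rays you have no moment condition to work with, and without one the one-dimensional inequality is simply false: a nonnegative concave function on an interval containing $0$ (e.g.\ $g(t)=t$ on $[0,1]$) can have $g(0)=0$ while its average is positive. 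A second, independent problem is that the quantities $|C|$ and $\int_C g$ decompose along rays with the Jacobian weight $r^{n-1}$, so the function you would actually need to control on each ray is $r^{n-1}g(ru)$, which is not $1/n$-concave (nor concave) in $r$; ``comparison with the extremal tent-shaped profile'' is therefore not even addressed to the right one-dimensional statement. Rigorous reductions of this type to dimension one exist (needle/localization decompositions, in which the moment constraints are built into the choice of needles and the weights are controlled), but that is a different and substantially heavier mechanism than rays through the origin, and your sketch does not supply it. As it stands, the central lemma --- which is the actual content of Milman--Pajor's result --- remains unproved, so the argument is incomplete.
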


Choosing $K=-L$ in Theorem \ref{theorem:Milman-Pajor} yields the following corollary.

\begin{corollary}
\label{corollary:Milman-Pajor}
If $K$ is a convex body in ${\mathbf{R}}^n$ with center of mass at the origin, then
\[{\mathrm{vrad}}(K\cap -K)\geq\frac{1}{2}{\mathrm{vrad}}(K),\]
and more generally for any orthogonal transformation $\theta$,
\[{\mathrm{vrad}}(K\cap\theta(K))\geq\frac{1}{2}\frac{{\mathrm{vrad}}(K)^2}{w(K)}.\]
\end{corollary}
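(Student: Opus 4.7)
The plan is to apply the Milman--Pajor inequality (Theorem \ref{theorem:Milman-Pajor}) twice, with carefully chosen pairs $(K,L)$, and to control the volume radius of a Minkowski difference via the mean width using the Urysohn inequality (Theorem \ref{theorem:urysohn}).

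For the first assertion, I would take $L = -K$ in Theorem \ref{theorem:Milman-Pajor}. The hypothesis is satisfied since $-K$ has the same center of mass as $K$, namely the origin. The inequality becomes
\[ \vrad(K \cap -K) \, \vrad(K-(-K)) \geq \vrad(K) \, \vrad(-K). \]
Since $\vrad$ is invariant under $x \mapsto -x$ and $K + K = 2K$ (by convexity), the right-hand side equals $\vrad(K)^2$ and $\vrad(K+K) = 2\,\vrad(K)$, which gives the claim immediately.

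For the general statement, I would take $L = \theta(K)$. Since $\theta$ is orthogonal, it fixes the origin, so $\theta(K)$ also has center of mass at the origin, and moreover $\vrad(\theta(K)) = \vrad(K)$ because $\theta$ is volume-preserving. Milman--Pajor yields
\[ \vrad(K \cap \theta(K)) \, \vrad(K - \theta(K)) \geq \vrad(K)^2. \]
It remains to produce an upper bound on $\vrad(K - \theta(K))$ in terms of $w(K)$.

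This is the only non-tautological step. I would use that support functions are additive under Minkowski sum, so $w(A+B) = w(A) + w(B)$ for compact sets $A, B$. Combined with $w(-\theta(K)) = w(\theta(K)) = w(K)$ (the first equality by symmetry of the spherical measure, the second by orthogonal invariance), this gives $w(K - \theta(K)) = 2w(K)$. The Urysohn inequality $\vrad \leq w$ then yields $\vrad(K-\theta(K)) \leq 2w(K)$. Inserting this bound into the Milman--Pajor inequality above produces
\[ \vrad(K \cap \theta(K)) \geq \frac{\vrad(K)^2}{2 w(K)}, \]
as required. I do not anticipate any technical difficulty: the only input beyond Theorem \ref{theorem:Milman-Pajor} and Urysohn is the additivity of the width under Minkowski sums, which is immediate from the definition via support functions.
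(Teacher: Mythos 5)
Your proof is correct and is exactly the argument the paper intends: the paper only gives the one-line indication ``choosing $K=-L$ in Theorem \ref{theorem:Milman-Pajor} yields the corollary,'' and your two applications of Milman--Pajor (with $L=-K$, using $K-(-K)=2K$, and with $L=\theta(K)$, bounding $\vrad(K-\theta(K))\leq w(K-\theta(K))=2w(K)$ via Urysohn and the additivity of support functions) fill in precisely the omitted details. No gaps.
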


We typically use Corollary \ref{corollary:Milman-Pajor} in the following way: if $K$ is a convex body with center of
mass at the origin which satisfies a ``reverse'' Urysohn inequality,
i.e. ${\mathrm{vrad}}(K) \geq \alpha w(K)$ for some constant $\alpha$, we conclude that the volume radius of $K \cap \theta(K)$
is comparable to the volume radius of $K$.

Another volume inequality which is useful to us is the Rogers--Shepard inequality (see \cite{RS}).

\begin{theorem}[Rogers--Shephard inequality] \label{theorem:rogers-shephard}
Let $u$ be a unit vector in ${\mathbf{R}}^n$, $h>0$ and consider the affine hyperplane
\[ H = \{ x \in {\mathbf{R}}^n \ : \ \scalar{x}{u} = h \} .\]
Let $K$ be a convex body inside $H$ and $L = \conv (K,-K)$. Then,
\[  2h \vol_{n-1}(K) \leq \vol_n(L) \leq 2h \vol_{n-1}(K) \frac{2^{n-1}}{n}. \]
Consequently,
\begin{equation} \label{eq:RS} {\mathrm{vrad}}(L) \simeq h^{1/n} {\mathrm{vrad}}(K)^{1-1/n}. \end{equation}
\end{theorem}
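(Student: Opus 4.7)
The plan is to derive the volume bounds via a slice decomposition parallel to $H$, then read off the volume radius consequence.

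Setting up coordinates $\R^n = \R u \oplus u^\perp$, I write $K = hu + \tilde K$ with $\tilde K \subset u^\perp \cong \R^{n-1}$. A direct parametrization of points in $L = \conv(K, -K)$ shows that the slice of $L$ at height $sh$ (for $s \in [-1, 1]$) equals $\frac{1+s}{2}\tilde K - \frac{1-s}{2}\tilde K$, sitting in the affine hyperplane $shu + u^\perp$. Applying Fubini and substituting $\lambda = (1+s)/2$ yields
\[ \vol_n(L) = 2h \int_0^1 \vol_{n-1}\bigl(\lambda \tilde K + (1-\lambda)(-\tilde K)\bigr) \, d\lambda. \]

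For the lower bound, the Brunn--Minkowski inequality applied to $\tilde K$ and $-\tilde K$ (which have the same volume) shows that each slice has $(n-1)$-volume at least $\vol_{n-1}(\tilde K) = \vol_{n-1}(K)$, and integrating in $\lambda$ gives $\vol_n(L) \geq 2h \vol_{n-1}(K)$.

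The upper bound is the hard part. The sharp constant $2^{n-1}/n$ is attained when $\tilde K$ is an $(n-1)$-simplex: parametrizing $L$ by barycentric coordinates on the $n$ vertices and their negatives reduces $\vol_n(L)$ to an integral of $(1 - \|c\|_1)$ over the $\ell_1$-ball $B_1^{n-1}$, which evaluates via the Beta function to $2^n h/n!$, matching the extremal case exactly. For general $K$ one shows that the ratio $\vol_n(\conv(K, -K))/\vol_{n-1}(K)$ increases under Steiner symmetrization of $\tilde K$ (any Steiner symmetrization preserves $\vol_{n-1}(\tilde K)$ and does not decrease the slice volumes, by a standard convexity argument), and iterating reduces to the simplex case. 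Alternatively --- and sufficient for the $\simeq$ statement on the volume radius --- one applies the classical Rogers--Shephard inequality $\vol_n(C - C) \leq \binom{2n}{n} \vol_n(C)$ to the cone $C = \conv(K, 0)$ of volume $\frac{h}{n}\vol_{n-1}(K)$: since $L \subset C - C$, this gives $\vol_n(L) \leq \binom{2n}{n} \frac{h}{n} \vol_{n-1}(K)$, a weaker constant that nonetheless disappears under $n$-th roots.

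The expected main obstacle is justifying the tight constant $2^{n-1}/n$: the Steiner-symmetrization argument needs one to check that simultaneously symmetrizing $\tilde K$ and $-\tilde K$ along a common hyperplane in $u^\perp$ cannot decrease the slice volume $\vol_{n-1}(\lambda \tilde K + (1-\lambda)(-\tilde K))$ for any $\lambda$, which requires a careful Prékopa--Leindler-style verification. For the final statement on the volume radius, from $\vol_n(L) \simeq h \vol_{n-1}(K)$ combined with $\vrad(K)^{n-1} = \vol_{n-1}(K)/\vol_{n-1}(B_2^{n-1})$ and the elementary asymptotics $\vol_{n-1}(B_2^{n-1})/\vol_n(B_2^n) \simeq \sqrt{n}$, taking $n$-th roots yields $\vrad(L) \simeq h^{1/n} \vrad(K)^{1-1/n}$ --- the subpolynomial factor $n^{1/(2n)}$ tends to $1$ and is absorbed in the $\simeq$.
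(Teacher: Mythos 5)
The paper offers no proof of this statement --- it is imported verbatim from Rogers and Shephard \cite{RS} --- so your attempt can only be judged on its own terms. Most of it is sound: the slice identity $L\cap(shu+u^\perp)=shu+\bigl(\frac{1+s}{2}\tilde K-\frac{1-s}{2}\tilde K\bigr)$ is correct (it is exactly the description of $\conv(K\cup(-K))$ as $\{\lambda x-(1-\lambda)y\}$), the Brunn--Minkowski argument gives the lower bound $2h\vol_{n-1}(K)$, and your fallback upper bound via $L\subset C-C$ with $C=\conv(K\cup\{0\})$, $\vol_n(C)=\frac{h}{n}\vol_{n-1}(K)$, and the classical difference-body inequality $\vol_n(C-C)\leq\binom{2n}{n}\vol_n(C)\leq 4^n\vol_n(C)$ loses only a factor $2^n$ against the stated bound. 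Since \eqref{eq:RS} is insensitive to $C^n$ factors in the volume (and your bookkeeping with $\vol_{n-1}(B_2^{n-1})/\vol_n(B_2^n)\simeq\sqrt n$ is right), this is enough for everything the paper actually uses.

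The genuine gap is your proposed proof of the sharp constant $2^{n-1}/n$: the Steiner symmetrization argument runs in the wrong direction. A Steiner symmetrization $S$ of $u^\perp$ along a hyperplane through the origin satisfies $S(-A)=-S(A)$ and $S(A)+S(B)\subset S(A+B)$, so each slice volume $\vol_{n-1}\bigl(\lambda\tilde K+(1-\lambda)(-\tilde K)\bigr)$ weakly \emph{decreases} under simultaneous symmetrization --- it does not increase as you assert. Indeed, if the ratio $\vol_n(L)/\bigl(2h\vol_{n-1}(K)\bigr)$ increased under every symmetrization, then (iterated symmetrizations converging to the ball) the Euclidean ball would be the maximizer; but for $\tilde K$ a centred ball every slice equals $\tilde K$ and the ratio is $1$, whereas for the simplex it is $2^{n-1}/n>1$ (for $n=3$: the hexagonal slices of a triangle give $\int_0^1(1+2\lambda(1-\lambda))\,\mathrm{d}\lambda=4/3$). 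So symmetrization re-proves the lower bound and can never yield the upper one; the extremality of the simplex requires the genuinely different combinatorial/integral-geometric argument of \cite{RS}. Your self-diagnosed ``main obstacle'' is therefore not a technicality to be checked but a step that is false as stated; you should either drop the claim to the sharp constant and keep only the $\binom{2n}{n}$ route (which suffices here), or reproduce the actual Rogers--Shephard argument.
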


We can infer from equation \ref{eq:RS} that for sets $K$ with ``reasonable'' volume (which will be the case of all sets we consider)
${\mathrm{vrad}}(K)$ and ${\mathrm{vrad}}(L)$ are comparable.

\subappendix{Volume estimates for Schatten classes and related bodies}
\label{ap:standard}

We gather estimates on mean width and volume radius of ``standard'' sets, which are used in our proofs. We use the following notation
for the unit balls associated to Schatten norms
\[ S^d_1 = \{ A \in {\mathcal{H}}({\mathbf{C}}^d) \ : \ \|A\|_1 \leq 1 \}, \]
\[ S^d_{\infty} = \{ A \in {\mathcal{H}}({\mathbf{C}}^d) \ : \ \|A\|_\infty \leq 1 \} = [ -{\mathrm{Id}} , {\mathrm{Id}}]. \]

Moreover, given symmetric convex bodies $K \subset {\mathbf{R}}^n$ and $K' \subset {\mathbf{R}}^{n'}$, their projective tensor product is defined as
\[ K \hat{\otimes} K' = \conv \{ x \otimes x' \ : \ x \in K, x' \in K' \} \subset {\mathbf{R}}^n \otimes {\mathbf{R}}^{n'} \]

\begin{theorem}
\label{theorem:operator-norm}
We have
\[{\mathrm{vrad}}(S_{\infty}^d) \simeq  w(S_{\infty}^d) \simeq \sqrt{d}.\]
\[{\mathrm{vrad}}(S_1^d) \simeq  w(S_1^d) \simeq \frac{1}{\sqrt{d}}.\]
\end{theorem}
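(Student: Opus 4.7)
The proof rests on two classical facts: the polar duality $(S_\infty^d)^\circ = S_1^d$ in $(\mathcal{H}(\C^d), \langle \cdot , \cdot \rangle_{HS})$ (since $\|\cdot\|_1$ and $\|\cdot\|_\infty$ are dual norms on Hermitian matrices), and the Urysohn inequality (Theorem \ref{theorem:urysohn}) bounding the volume radius by the mean width. I would first handle the mean widths. Passing to the Gaussian formulation via $w(K) = w_G(K)/\gamma_{d^2}$, it suffices to estimate $\E \|G\|_1$ and $\E \|G\|_\infty$ where $G$ is a standard Gaussian vector in $\mathcal{H}(\C^d) \simeq \R^{d^2}$ (normalized so that $\E \|G\|_2^2 = d^2$); then $w(S_\infty^d) = \E\|G\|_1/\gamma_{d^2}$ and $w(S_1^d) = \E\|G\|_\infty / \gamma_{d^2}$, with $\gamma_{d^2} \simeq d$.

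The bound $\E\|G\|_\infty \simeq \sqrt{d}$ is standard: the upper bound follows from an $\e$-net argument on $S_{\C^d}$ (of cardinality $\lesssim 5^{2d}$) together with the Gaussian tail for $\langle v,Gv\rangle \sim N(0,1)$ (a direct variance computation), while the lower bound follows from $\|G\|_\infty \geq \|G\|_2/\sqrt{d}$ combined with concentration of $\|G\|_2^2$ around its mean $d^2$. For $\E \|G\|_1 \simeq d^{3/2}$: the Cauchy--Schwarz inequality on eigenvalues gives $\|G\|_1 \leq \sqrt{d}\|G\|_2$, hence the upper bound in expectation; conversely, $\|G\|_2^2 \leq \|G\|_1 \|G\|_\infty$ combined with the high-probability lower bound on $\|G\|_2^2$ and upper bound on $\|G\|_\infty$ just obtained yields $\E \|G\|_1 \gtrsim d^{3/2}$. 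Dividing by $\gamma_{d^2} \simeq d$ delivers $w(S_\infty^d) \simeq \sqrt{d}$ and $w(S_1^d) \simeq 1/\sqrt{d}$.

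Urysohn then immediately gives the upper bounds $\vrad(S_\infty^d) \lesssim \sqrt{d}$ and $\vrad(S_1^d) \lesssim 1/\sqrt{d}$. For the matching lower bounds, the inequality $\|A\|_1 \leq \sqrt{d}\|A\|_2$ is equivalent to the inclusion $S_1^d \supset (1/\sqrt{d}) B_{HS}$, which yields $\vrad(S_1^d) \geq 1/\sqrt{d}$ directly. The remaining inequality $\vrad(S_\infty^d) \gtrsim \sqrt{d}$ is the main obstacle, since the obvious inclusion $S_\infty^d \supset B_{HS}$ only gives $\vrad \geq 1$. The cleanest route is to invoke the Bourgain--Milman inverse Santal\'o inequality $\vrad(K)\cdot\vrad(K^\circ) \gtrsim 1$ applied to $K = S_1^d$, which combined with the Urysohn-based upper bound $\vrad(S_1^d) \lesssim 1/\sqrt{d}$ yields $\vrad(S_\infty^d) \gtrsim \sqrt{d}$. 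A more self-contained alternative is to compute $\vol(S_\infty^d)$ directly using Weyl's integration formula, which reduces the question to the asymptotics of the Selberg-type integral $\int_{[-1,1]^d} \prod_{i<j}(\lambda_i - \lambda_j)^2 \, d\lambda_1 \cdots d\lambda_d$, but this route is computationally heavier.
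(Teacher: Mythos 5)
Your proof follows essentially the same route as the paper: compute the mean widths via the Gaussian formulation (the paper extracts the exact constants from the semicircle law for GUE, while you get them up to universal constants by net/concentration arguments, which is all that is needed for $\simeq$), then deduce the volume radii from the Urysohn inequality together with a lower bound on the volume product of the polar pair $S_1^d=(S_\infty^d)^\circ$. One point in your favour: the inequality $\vrad(K)\vrad(K^\circ)\gtrsim 1$ that closes the argument is indeed the Bourgain--Milman inverse Santal\'o inequality, as you correctly name it, whereas the Blaschke--Santal\'o inequality itself gives the opposite bound $\vrad(K)\vrad(K^\circ)\leq 1$; the paper's proof attributes this step to Santal\'o and even writes the constant as $1$, which should be read as a universal constant $c>0$.
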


\begin{proof}{
The estimates on the mean width follow from the semicircle law. Indeed,
the standard Gaussian vector in the space of self-adjoint operators on ${\mathbf{C}}^d$ is exactly a GUE matrix $G$ (see \cite{AGZ}), and therefore
\[ w_G(S_{\infty}^d) = \E \|G\|_1 = d^{3/2} \int_{-2}^2 |x| \frac{\sqrt{4-x^2}}{2\pi} \,{\mathrm{d}}x = d^{3/2} \frac{8}{3\pi} ,\]
\[ w_G(S_1^d) = \E \|G\|_{\infty} = (2+o(1)) \sqrt{d}. \]
Hence, $w(S_{\infty}^d) = \gamma_{d^2}^{-1} w_G(S_{\infty}^d) \sim \frac{8}{3\pi} \sqrt{d}$ and $w(S_1^d)=\gamma_{d^2}^{-1} w_G(S_1^d) \sim \frac{2}{\sqrt{d}}$.

Since $S_1^d$ and $S_{\infty}^d$ are polar to each other, the Santal\'o inequality (see \cite{Santalo}) yields
\[ 1 \leq \vrad(S_{\infty}^d) \vrad(S_1^d).\]
If we then use the Urysohn inequality, we obtain
\[ 1 \leq w(S_{\infty}^d)w(S_1^d) \leq \frac{8\sqrt{d}}{3\pi} \frac{2}{\sqrt{d}} \simeq 1 ,\]
and therefore all these inequalities are sharp up to a multiplicative constant.
}\end{proof}

We also need volume estimates on projective tensor products of Schatten spaces.

\begin{theorem} \label{theorem:volume-S1-Sinfini}
We have the following estimates
\[ {\mathrm{vrad}}(S_1^d \hat{\otimes} S_{\infty}^d) \simeq w(S_1^d \hat{\otimes} S_{\infty}^d) \simeq \frac{1}{\sqrt{d}}.\]
\end{theorem}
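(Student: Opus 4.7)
The plan is to establish $w(K) \simeq \vrad(K) \simeq 1/\sqrt{d}$ for the symmetric convex body $K = S_1^d \hat{\otimes} S_\infty^d \subset \mathcal{H}(\C^d) \otimes \mathcal{H}(\C^d)$ (ambient real dimension $d^4$), by chaining Chevet's inequality for mean widths with the Urysohn inequality (Theorem~\ref{theorem:urysohn}) and the Bourgain--Milman reverse Santal\'o inequality.

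I would first compute $w(K)$. Since $w_G(K) = \E \sup_{\|X\|_1 \leq 1,\, \|Y\|_\infty \leq 1} \langle G, X\otimes Y\rangle$ for a standard Gaussian $G$ in $\mathcal{H}(\C^d)^{\otimes 2}$, this Gaussian process is precisely the norm of $G$ in the polar (injective) tensor product, equivalently the $S_1 \to S_1$ operator norm of the associated map $T_G$. Chevet's inequality bounds it by
\[
w_G(K) \lesssim R(S_1^d)\, w_G(S_\infty^d) + R(S_\infty^d)\, w_G(S_1^d).
\]
Plugging in $R(S_1^d) = 1$, $R(S_\infty^d) = \sqrt{d}$, and the estimates $w_G(S_1^d) \simeq \sqrt{d}$, $w_G(S_\infty^d) \simeq d^{3/2}$ from Theorem~\ref{theorem:operator-norm}, this gives $w_G(K) \lesssim d^{3/2}$, hence $w(K) \lesssim 1/\sqrt{d}$ after dividing by $\gamma_{d^4} \simeq d^2$. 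The matching lower bound $w(K) \gtrsim 1/\sqrt{d}$ is obtained by restricting the supremum to the fiber $\{|\psi\rangle\langle\psi|\} \otimes S_\infty^d$ for a fixed unit vector $|\psi\rangle$: there the sup over $Y$ equals $\|T_G(|\psi\rangle\langle\psi|)\|_1$, and since $|\psi\rangle\langle\psi|$ has unit Hilbert--Schmidt norm, $T_G(|\psi\rangle\langle\psi|)$ is distributed as a standard GUE$(d)$ matrix, whose expected trace norm is $\simeq d^{3/2}$ by the semi-circle law exactly as in the proof of Theorem~\ref{theorem:operator-norm}.

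Urysohn's inequality then immediately yields $\vrad(K) \leq w(K) \lesssim 1/\sqrt{d}$.

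For the reverse inequality $\vrad(K) \gtrsim 1/\sqrt{d}$ --- the more delicate step --- I would use the Bourgain--Milman reverse Santal\'o inequality $\vrad(K) \vrad(K^\circ) \gtrsim 1$, which reduces the problem to the upper bound $\vrad(K^\circ) \lesssim \sqrt{d}$ on the polar body $K^\circ = S_\infty^d \check{\otimes} S_1^d = \{z : \|T_z\|_{S_1 \to S_1} \leq 1\}$. By Urysohn once more, this in turn reduces to $w(K^\circ) \lesssim \sqrt{d}$, i.e.\ to bounding $\E \|G\|_\pi \lesssim d^{5/2}$, where $\|\cdot\|_\pi$ is the projective tensor norm. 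This last estimate can be handled in the dual picture by controlling the nuclear norm of the random operator $T_G$ between the non-Hilbertian spaces $S_\infty^d$ and itself, using either Pisier's factorization through Hilbert space (exploiting the bounded K-convexity of Schatten classes and the resulting MM$^*$ estimate) or a careful SVD-based decomposition of $G$ in a spread-out orthonormal basis of $\mathcal{H}(\C^d)$.

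The main obstacle is precisely this last step: whereas the injective tensor norm fits cleanly into Chevet's framework (being a Gaussian operator norm), the projective tensor norm of a Gaussian does not, and extracting the correct power $d^{5/2}$ rather than a larger power requires honest work. Once this bound is in hand, the chain $\vrad(K) \gtrsim 1/\vrad(K^\circ) \gtrsim 1/w(K^\circ) \gtrsim 1/\sqrt{d}$ closes the estimates.
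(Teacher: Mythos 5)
Your route is genuinely different from the paper's, and its skeleton is sound. For the upper bound $w(S_1^d \hat{\otimes} S_{\iy}^d)\lesssim 1/\sqrt{d}$ you use Chevet's inequality (correctly: $R(S_1^d)=1$, $R(S_{\iy}^d)=\sqrt{d}$, $w_G(S_1^d)\simeq\sqrt d$, $w_G(S_{\iy}^d)\simeq d^{3/2}$ give $w_G\lesssim d^{3/2}$), whereas the paper approximates $S_1^d$ and $S_{\iy}^d$ by polytopes with $e^{Cd}$ and $e^{Cd^2}$ vertices and bounds the mean width of the resulting polytope with $e^{C'd^2}$ vertices of Hilbert--Schmidt norm $\leq 4\sqrt d$; these are two standard and interchangeable ways to get the same estimate. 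The real divergence is in the lower bound on the volume radius. The paper proves it directly: an exact volume formula for $B_1^n\hat{\otimes}K$ (via $\vol(L)=\frac{1}{p!}\int e^{-\|x\|_L}$) gives $\vrad(B_1^{d^2}\hat{\otimes}S_{\iy}^d)\simeq \frac1d\vrad(S_{\iy}^d)$, and an orthonormal basis of $\cH(\C^d)$ sitting inside $\sqrt2\, S_1^d$ transfers this to $S_1^d\hat{\otimes}S_{\iy}^d$. You instead dualize: Bourgain--Milman plus Urysohn reduce everything to $w\big((S_1^d\hat{\otimes}S_{\iy}^d)^\circ\big)\lesssim\sqrt d$, i.e. $\E\|G\|_{\pi}\lesssim d^{5/2}$ for the projective norm of a standard Gaussian. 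Your approach buys generality (it needs no exact volume identity) at the price of invoking reverse Santal\'o; the paper's is elementary and self-contained.

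The one genuine gap is that you leave this crux estimate $\E\|G\|_\pi\lesssim d^{5/2}$ as an acknowledged obstacle rather than proving it. It is true, and the quickest proof is precisely the ``spread-out basis'' decomposition you gesture at --- in fact the same device the paper uses for its volume bound: taking the orthonormal basis $(e_i)_{1\leq i\leq d^2}$ of $\cH(\C^d)$ made of (symmetrized) matrix units, each with $\|e_i\|_1\leq\sqrt2$, one writes $G=\sum_i e_i\otimes G_i$ with the $G_i$ i.i.d.\ GUE, so that $\|G\|_\pi\leq\sqrt2\sum_i\|G_i\|_{\iy}$ and $\E\|G\|_\pi\lesssim d^2\cdot\sqrt d=d^{5/2}$. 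Be warned that your alternative suggestion via Pisier's $K$-convexity/$MM^*$ estimate is riskier: it requires the body to be in (or near) the $\ell$-position, which is not given here, and even then it would only yield $w(K)w(K^\circ)\lesssim\log d$, costing a logarithmic factor in the final $\vrad$ lower bound. A plain SVD decomposition of $T_G$ is also too lossy (it gives $d^{7/2}$). So the plan closes, but only through the fixed-basis decomposition; with that supplied, your argument is complete and correct.
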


A very similar proof shows that the estimates of Theorem \ref{theorem:volume-S1-Sinfini} are also valid when we consider the full complex Schatten
classes, without the self-adjoint constraint. The question of estimating the volume radius of projective
tensor product of Schatten classes has been considered in \cite{DM}, where the question is answered (in a general setting) only up to a factor $\log d$.

\begin{proof}{
An upper bound on the mean width can be obtained by a discretization argument, which we only sketch since we will only use the lower bound.
There is a polytope $P$ with $\exp(Cd)$ vertices such that $S_1^d \subset P \subset 2S_1^d$, and a polytope $Q$ with $\exp(Cd^2)$ vertices
such that $S_{\infty}^d \subset Q \subset 2S_{\infty}^d$. The polytope $P \hat{\otimes} Q$ satisfies
\[ S_1^d \hat{\otimes} S_{\infty}^d \subset P \hat{\otimes} Q \subset 4 S_1^d \hat{\otimes} S_{\infty}^d .\]
The polytope $P \hat{\otimes} Q$ is the convex hull of $\exp(C'd^2)$ points with Hilbert--Schmidt norm at most $4\sqrt{d}$. Using standard bounds
for mean width of polytopes (see e.g. \cite{AS}) gives the desired estimate $w(S_1^d \hat{\otimes} S_{\infty}^d) \,\preceq\, 1/\sqrt{d}$.

We now give a lower bound on the volume radius.
We denote by $B_1^n \subset {\mathbf{R}}^n$ the unit ball of the space $\ell_1^n$. We have the following formula.

\begin{lemma} \label{lemma:volume-ell_1-projective}
Let $m,n$ be integers and $K \subset {\mathbf{R}}^m$ be a symmetric convex body. Then
\[ \vol ( B_1^n \hat{\otimes} K ) = \frac{(m!)^n}{(mn)!} \vol(K)^n .\]
Consequently,
\[ \vrad (B_1^n \hat{\otimes} K ) \simeq \frac{1}{\sqrt{n}} \vrad(K) .\]
\end{lemma}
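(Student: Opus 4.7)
The plan is to first give an explicit description of $B_1^n \hat{\otimes} K$ as a subset of $\R^{mn}$, then compute its volume by reducing to a Dirichlet-type integral on a simplex, and finally extract the volume-radius asymptotics via Stirling's formula. The first step is to identify $B_1^n \hat{\otimes} K$ concretely. Since the extreme points of $B_1^n$ are $\{\pm e_i \st 1 \leq i \leq n\}$ and $K$ is symmetric, the extreme points of the projective tensor product are of the form $e_i \otimes y$ for $y \in K$. Writing a tensor in $\R^n \otimes \R^m$ as a tuple $(x_1, \ldots, x_n) \in (\R^m)^n$, this yields the familiar isometric identification $\ell_1^n \hat{\otimes} (\R^m, \|\cdot\|_K) = \ell_1^n((\R^m, \|\cdot\|_K))$, so that
\[ B_1^n \hat{\otimes} K = \left\{ (x_1, \ldots, x_n) \in (\R^m)^n \st \sum_{i=1}^n \|x_i\|_K \leq 1 \right\}. \]

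For the volume computation I would use the standard polar-type identity, valid for any symmetric convex body $K \subset \R^m$ and any nonnegative measurable $f$, namely $\int_{\R^m} f(\|x\|_K) \, \mathrm{d}x = m \vol(K) \int_0^\infty f(t) t^{m-1} \, \mathrm{d}t$, which follows from the scaling relation $\vol\{\|x\|_K \leq R\} = R^m \vol(K)$. Applying this in each factor and using Fubini, the substitution $t_i = \|x_i\|_K$ transforms the volume into
\[ \vol(B_1^n \hat{\otimes} K) = (m \vol(K))^n \int_{\Sigma_n} \prod_{i=1}^n t_i^{m-1} \, \mathrm{d}t_1 \cdots \mathrm{d}t_n, \]
where $\Sigma_n = \{t_i \geq 0,\ \sum_i t_i \leq 1\}$ is the standard simplex. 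The Dirichlet integral evaluates this as $\Gamma(m)^n/\Gamma(mn+1) = ((m-1)!)^n/(mn)!$, and combined with the prefactor this yields the advertised $\vol(B_1^n \hat{\otimes} K) = (m!)^n \vol(K)^n / (mn)!$.

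For the volume-radius consequence, dividing by $\vol(B_2^{mn})$, taking $(mn)$-th roots, and recognizing $\vol(K) = \vrad(K)^m \vol(B_2^m)$, I get
\[ \vrad(B_1^n \hat{\otimes} K) = \vrad(K) \cdot \left( \frac{(m!)^n \vol(B_2^m)^n}{(mn)! \vol(B_2^{mn})} \right)^{1/(mn)}. \]
Using $\vol(B_2^k) = \pi^{k/2}/\Gamma(k/2+1)$ and Stirling's formula on both the factorials and the gamma terms, a direct computation shows that the logarithm of the bracketed factor reduces, after the dominant $\log m$ and $\log(mn)$ contributions cancel, to $-\tfrac{1}{2}\log n$ plus error terms of size $O((\log m)/m + (\log n)/n)$. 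The bracketed factor is therefore comparable to $1/\sqrt{n}$, giving $\vrad(B_1^n \hat{\otimes} K) \simeq \vrad(K)/\sqrt{n}$.

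The only real obstacle is the bookkeeping in this last Stirling estimate: one must verify that the $\simeq 1/\sqrt{n}$ bound is uniform in $m$, i.e.\ that both the upper and lower implicit constants can be chosen independent of the ``inner'' dimension $m$. Steps 1 and 2 are essentially formal once the correct parameterization is in place, and the whole argument is self-contained modulo the Dirichlet formula and elementary asymptotics.
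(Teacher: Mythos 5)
Your proof is correct and follows essentially the same route as the paper: both rest on the identification $\|(x_1,\dots,x_n)\|_{B_1^n \hat{\otimes} K} = \sum_{i=1}^n \|x_i\|_K$ followed by an exact volume computation. The paper evaluates the volume via the identity $\vol(L) = \frac{1}{p!}\int_{\R^p} e^{-\|x\|_L}\,\mathrm{d}x$ applied to the product norm rather than your layer-cake/Dirichlet-integral reduction, but the two computations are equivalent, and your Stirling bookkeeping (with constants uniform in $m$) correctly supplies the ``consequently'' step that the paper leaves implicit.
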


\begin{proof}{
If $(e_1,\dots,e_n)$ denotes the canonical basis of ${\mathbf{R}}^n$, we have, for any $x_1,\dots,x_n \in {\mathbf{R}}^m$
\[ \left\| \sum_{i=1}^n e_i \otimes x_i \right\|_{B_1^n \hat{\otimes} K} = \sum_{i=1}^n \|x_i\|_K .\]
So Lemma \ref{lemma:volume-ell_1-projective} follows easily from the formula below, valid for any integer $p$ and any symmetric convex body $L\subset{\mathbf{R}}^p$,
\begin{equation}\label{eq:volume-convex-body} \vol(L) = \frac{1}{p!} \int_{{\mathbf{R}}^p} \exp ( - \|x\|_L) \, {\mathrm{d}}x .\end{equation}
Equation \ref{eq:volume-convex-body} itself may be obtained by the following chain of equalities
\begin{eqnarray*} \int_{{\mathbf{R}}^p} e^{- \|x\|_L} \, {\mathrm{d}}x & = & \int_{{\mathbf{R}}^p}\int_{\|x\|_L}^{+\infty} e^{-t}  \, {\mathrm{d}}t  \, {\mathrm{d}}x\\
 & = & \int_{0}^{+\infty}\int_{\{\|x\|_L<t\}} e^{-t}  \, {\mathrm{d}}x  \, {\mathrm{d}}t\\
 & = & \int_{0}^{+\infty} e^{-t} \vol(tL)  \, {\mathrm{d}}t \\
 & = & \vol(L)p!, \end{eqnarray*}
the last equality being because $\int_{0}^{+\infty} t^pe^{-t}\, {\mathrm{d}}t = p!$.
}\end{proof}

Denote by $\{\ket{j}\}_{1 \leq j \leq d}$ an orthonormal basis of ${\mathbf{C}}^d$. The family
\[ \left\{ \ketbra{j}{j} \right\}_{1 \leq j \leq d } \cup \left\{ \frac{1}{\sqrt{2}} ( \ketbra{j}{k} + \ketbra{k}{j} )\right\}_{ 1 \leq j < k \leq d }
 \cup \left\{ \frac{i}{\sqrt{2}} ( \ketbra{j}{k} - \ketbra{k}{j} ) \right\}_{ 1 \leq j < k \leq d } \]
is an orthonormal basis of $\cH({\mathbf{C}}^d)$ whose elements live in $\sqrt{2} S_1^d$. It follows that
\[ \vrad(S_1^d \hat{\otimes} S_{\infty}^d) \geq \frac{1}{\sqrt{2}} \vrad(B_1^{d^2} \hat{\otimes} S_{\infty}^d) \,\succeq\, \frac{1}{d} \vrad(S_{\infty}^d) ,\]
the last estimate being a consequence of Lemma \ref{lemma:volume-ell_1-projective}.

Using Theorem \ref{theorem:operator-norm} one may thus conclude that
$\vrad(S_1^d \hat{\otimes} S_{\infty}^d) \,\succeq\, 1/\sqrt{d}$.
}\end{proof}

We also need a result on the volume radius and the mean width of the set of separable states, which is taken from \cite{AS}.

\begin{theorem}
\label{theorem:separable-states}
In ${\mathcal{H}}({\mathbf{C}}^d\otimes{\mathbf{C}}^d)$, denoting by ${\mathcal{S}}$ the set of separable states, we have
\[d^{-3/2}\simeq {\mathrm{vrad}}({\mathcal{S}})\leq w({\mathcal{S}})\simeq d^{-3/2}.\]
\end{theorem}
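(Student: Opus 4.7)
By Urysohn's inequality (Theorem \ref{theorem:urysohn}), the middle inequality $\mathrm{vrad}(\mathcal{S}) \leq w(\mathcal{S})$ is automatic, so the theorem reduces to establishing the two bounds $w(\mathcal{S}) \lesssim d^{-3/2}$ and $\mathrm{vrad}(\mathcal{S}) \gtrsim d^{-3/2}$. The set $\mathcal{S}$ lives in the trace-$1$ affine hyperplane of $\cH(\C^d \otimes \C^d)$, whose real dimension is $d^4-1$; by local-unitary invariance its center of mass is $\Id/d^2$, and I would work throughout with the centered body $\mathcal{S} - \Id/d^2$ in the trace-$0$ hyperplane.

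For the upper bound on $w(\mathcal{S})$, I would pass to the Gaussian mean width and use that $\mathcal{S}$ is the convex hull of pure product states, so that for any trace-$0$ Hermitian $\Delta$
\[ h_{\mathcal{S} - \Id/d^2}(\Delta) = \sup_{\psi, \phi \in S_{\C^d}} \scalar{\psi \otimes \phi}{\Delta (\psi \otimes \phi)}. \]
Taking $G$ to be a standard Gaussian on the trace-$0$ hyperplane (essentially a centered GUE on $\C^{d^2}$), the Gaussian mean width of $\mathcal{S}$ equals the expected injective tensor norm $\E \sup_{\psi,\phi} \scalar{\psi \otimes \phi}{G(\psi \otimes \phi)}$. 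A standard $\varepsilon$-net argument on $S_{\C^d} \times S_{\C^d}$ (whose nets have cardinality $\exp(O(d))$), combined with the fact that each individual quadratic form $\scalar{\psi \otimes \phi}{G(\psi \otimes \phi)}$ is a centered Gaussian of variance $O(1)$, yields a union-bound estimate of order $\sqrt{d}$ on that supremum --- considerably smaller than the full spectral norm of $G$, which is of order $d$. Dividing by $\gamma_{d^4-1} \simeq d^2$ gives $w(\mathcal{S}) \lesssim d^{-3/2}$.

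The lower bound on $\mathrm{vrad}(\mathcal{S})$ is the delicate part. The inscribed-ball approach fails: the largest Hilbert--Schmidt ball around $\Id/d^2$ contained in $\mathcal{S}$ has radius only $\simeq d^{-2}$ (Gurvits--Barnum / Szarek--Werner), yielding merely $\mathrm{vrad}(\mathcal{S}) \gtrsim d^{-2}$. To reach the sharp $d^{-3/2}$ scaling one must use that $\mathcal{S}$ is highly elongated, exploiting its many extreme points (the pure product states) which collectively cover a great variety of directions on the Hilbert--Schmidt sphere. A natural strategy, in the spirit of Theorem \ref{theorem:volume-S1-Sinfini}, is to realize inside $\mathcal{S}$ a scaled copy of a suitable projective tensor product of $\ell_1$-balls (intersected with the positive cone), and then compute its volume using Lemma \ref{lemma:volume-ell_1-projective} iterated in both tensor factors. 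The main obstacle is handling the positivity constraint cleanly: the naive projective tensor product is centrally symmetric whereas $\mathcal{S}$ sits inside the positive cone, so one must perform a symmetrization-plus-restriction manoeuvre reminiscent of the argument that $\vrad(K_{\mathbf{LOCC^{\rightarrow}}}) \gtrsim \sqrt{d}$ carried out in Section \ref{sec:local-POVM-proofs}, where the set $T$ was cut down to $\widetilde{T}$ and then expanded via $\conv(\widetilde{T}, -\widetilde{T})$ before the Rogers--Shephard inequality was applied.
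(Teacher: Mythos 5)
First, a structural remark: the paper does not actually prove this statement --- Theorem \ref{theorem:separable-states} is imported as a black box from \cite{AS} --- so there is no internal proof to compare against. Your overall architecture (Urysohn for the middle inequality, a Gaussian net computation of injective-tensor-norm type for $w(\mathcal{S})\lesssim d^{-3/2}$, and a projective-tensor-product volume bound plus symmetrization for $\vrad(\mathcal{S})\gtrsim d^{-3/2}$) is indeed that of \cite{AS} and is consistent with the toolbox assembled in the appendix. Your upper bound is essentially right; the one point you gloss over is the passage from the net to the full supremum, where a naive Lipschitz estimate costs $\|G\|_{\infty}\simeq d$ and would force $\delta\simeq d^{-1/2}$, reintroducing a logarithmic factor. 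One should instead run the self-improving net argument on the off-diagonal forms $\bra{\psi\otimes\phi}G\ket{\psi'\otimes\phi'}$ together with polarization, exactly as in the proof of Proposition \ref{prop:discretization-LO}.

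The concrete plan you propose for the lower bound, however, would not reach $d^{-3/2}$. Applying Lemma \ref{lemma:volume-ell_1-projective} ``iterated in both tensor factors'' means inscribing $B_1^{d^2}\hat{\otimes}B_1^{d^2}$, whose volume radius is $\simeq \frac{1}{d}\cdot\frac{1}{d}=d^{-2}$ --- no better than the inscribed-ball bound you rightly reject. The loss comes from replacing $S_1^d$ (volume radius $\simeq d^{-1/2}$) by its inscribed $\ell_1$-ball $\frac{1}{\sqrt{2}}B_1^{d^2}$ (volume radius $\simeq d^{-1}$): each such replacement costs a factor $\sqrt{d}$, and you can only afford one. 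The correct move, exactly parallel to the proof of Theorem \ref{theorem:volume-S1-Sinfini}, is the following. Since the extreme points of $\mathcal{S}$ are the pure product states $\ketbra{\psi}{\psi}\otimes\ketbra{\phi}{\phi}$ and those of $S_1^d$ are $\pm\ketbra{\psi}{\psi}$, one has the exact identity $\conv(\mathcal{S},-\mathcal{S})=S_1^d\hat{\otimes}S_1^d$; the Rogers--Shephard inequality (Theorem \ref{theorem:rogers-shephard}, with $h=\|\Id/d^2\|_2=1/d$ and $n=d^4$, so that $h^{1/n}\simeq 1$) then transfers a lower bound on $\vrad(S_1^d\hat{\otimes}S_1^d)$ back to $\vrad(\mathcal{S})$ --- in particular no delicate ``restriction to the positive cone'' is needed, which dissolves the obstacle you flag. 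Finally, $S_1^d\hat{\otimes}S_1^d\supset\frac{1}{\sqrt{2}}B_1^{d^2}\hat{\otimes}S_1^d$ and a single application of Lemma \ref{lemma:volume-ell_1-projective} give $\vrad(B_1^{d^2}\hat{\otimes}S_1^d)\simeq\frac{1}{d}\vrad(S_1^d)\simeq d^{-3/2}$, as required.
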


\end{document}